\newcommand{\cbstart}{}
\newcommand{\cbend}{}
\newcommand{\cbdelete}{}
\crefname{lemma}{lemma}{lemmas}
\Crefname{lemma}{Lem.}{Lem.}
\crefname{theorem}{theorem}{theorems}
\Crefname{theorem}{Thm.}{Thm.}
\Crefname{section}{Sec.}{Sec.}
\Crefname{definition}{Def.}{Def.}
\Crefname{figure}{Fig.}{Fig.}
\newcommand{\xRightarrow}[2][]{\ext@arrow 0359\Rightarrowfill@{#1}{#2}}
\newcommand{\ignore}[1]{}
\newcommand{\para}[1]{\vspace{2pt}\noindent\textbf{\textit{#1}.}}
\newcommand{\fw}{\textit{fw}}
\newcommand{\type}{\texttt{type}}
\newcommand{\is}{\textit{is}}
\newcommand{\requested}{\texttt{requested}}
\newcommand{\trusted}{\texttt{trusted}}
\newcommand{\exref}[1]{Ex.~\ref{examp:#1}}
\newcommand{\secref}[1]{Sec.~\ref{sec:#1}}
\newcommand{\appref}[1]{App.~\ref{sec:#1}}
\newcommand{\lemref}[1]{Lem.~\ref{lem:#1}}
\newcommand{\defref}[1]{Def.~\ref{def:#1}}
\newcommand{\figref}[1]{Fig.~\ref{fig:#1}}
\newcommand{\tabref}[1]{Tab.~\ref{tab:#1}}
\newcommand{\thmref}[1]{Thm.~\ref{thm:#1}}
\newcommand{\AMDL}{\textrm{AMDL}}
\newcommand{\gV}[1]{\langle #1\rangle}
\newcommand{\bin}{\textbf{in}}
\newcommand{\band}{\textbf{and}}
\newcommand{\bnot}{\textbf{not}}
\newcommand{\N}{\mathbb{N}}
\newcommand{\pfld}{\textit{pfld}}
\newcommand{\const}{\textit{const}}
\newcommand{\prog}{\textit{prog}}
\newcommand{\paths}{\textit{paths}}
\newcommand{\cond}{\textit{cond}}
\newcommand{\atom}{\textit{atom}}
\newcommand{\action}{\textit{action}}
\newcommand{\gc}{\textit{gc}}
\newcommand{\pblock}{\textit{pblock}}
\newcommand{\True}{\textsf{True}}
\newcommand{\False}{\textsf{False}}
\newcommand{\lsyn}{|[}
\newcommand{\rsyn}{|]}
\newcommand{\Cont}{\textit{Cont}}
\newcommand{\tup}[1]{\langle #1 \rangle}
\newcommand{\rels}{\textit{rels}}
\newcommand{\Net}{\mathsf{N}}
\newcommand{\TrN}[2]{\xRightarrow{#1}_{\text{#2}}}
\newcommand{\TrM}[1]{\xrightarrow{#1}}
\newcommand{\TrMRel}[1]{\xrightarrow{#1}_{\text{R}}}
\newcommand{\TrMPack}[1]{\xrightarrow{#1}_{\text{P}}}
\newcommand{\TrMPackM}[1]{\xrightarrow{#1}_{\text{P},m}}
\newcommand{\TrMPackSub}[1]{\xrightarrow{#1}_{\text{P}[p,\tilde{p}]}}
\newcommand{\TODO}[1]{{\color{red} \textit{ \textbf{ TODO: #1 }}}}
\newcommand{\DONE}[2]{}
\newcommand{\eqdef}{\buildrel \mbox{\tiny\rm def} \over =}
\newcommand{\powerset}[1]{\pow{#1}}
\newcommand{\BB}[2][]{\llbracket #2 \rrbracket_{#1}}
\newcommand{\BBsub}[2]{\llbracket #1 \rrbracket_{#2}}
\newcommand{\pow}[1]{\wp(#1)}
\newcommand{\join}{\sqcup}
\newcommand{\Nat}{\mathbb{N}}
\newcommand{\lfp}{\textit{LeastFixpoint}}
\newcommand{\err}{\textit{err}}
\newcommand{\Tr}{\mathcal{T}}
\newcommand{\hosts}{H}
\newcommand{\packets}{P}
\newcommand{\packet}{p}
\newcommand{\mboxes}{M}
\newcommand{\mboxe}{m}
\newcommand{\ps}{\textit{ps}}
\newcommand{\rel}{\textit{rel}}
\newcommand{\atoms}{\textit{atoms}}
\newcommand{\RState}{\Sigma^{\text{R}}}
\newcommand{\PState}{\Sigma^{\text{P}}}
\newcommand{\Pack}{\text{P}}
\newcommand{\post}{\textit{post}}
\newcommand{\postLS}{\textit{postLS}}
\newcommand{\postLP}{\textit{postLP}} 
\newif\iflong  %this one is actually for the case we have an appendix ??
\newif\iflonglong
\newif\ifexamples
\begin{document}
\frontmatter          % for the preliminaries
\pagestyle{headings}  % switches on printing of running heads
% \addtocmark{Hamiltonian Mechanics} % additional mark in the TOC
%
\mainmatter              % start of the contributions
%
%\title{Modular Safety Verification for Stateful Networks}
\title{Abstract Interpretation of Stateful Networks}
%
% \titlerunning{Hamiltonian Mechanics}  % abbreviated title (for running head)
%                                     also used for the TOC unless
%                                     \toctitle is used
%

% \author{Double-blind submission}
% \institute{}

\author{Kalev Alpernas\inst{1}\and
Roman Manevich\inst{2}\and
Aurojit Panda\inst{3}\and
Mooly Sagiv\inst{1}\and
Scott Shenker\inst{4}\and
Sharon Shoham\inst{1}\and
Yaron Velner\inst{5}
}
\institute{Tel Aviv University\and
Ben-Gurion University of the Negev\and
NYU \and
UC Berkeley \and
Hebrew University of Jerusalem
}

\maketitle              % typeset the title of the contribution

\begin{abstract}
Modern networks achieve robustness and scalability by maintaining
states on their nodes. These nodes are referred to as middleboxes
and are essential for network functionality.
%Middleboxes allow the forwarding behavior of nodes to vary over time and to depend on packet history.
%The forwarding behavior of middleboxes can vary over time depending on packet history, as recorded by their states.
However, the presence of middleboxes drastically complicates the
task of network verification.
Previous work showed that the problem is undecidable in general and EXPSPACE-complete
when abstracting away the order of packet arrival.

%This paper shows that the problem is undecidable when the number of hosts
%is unknown even when abstracting away the order of
%packet arrival and becomes EXPSPACE-complete when
%when the number of hosts is known.
%%For example, in the presence of states, checking that packets sent from one node cannot reach another node is undecidable even when the network topology is given and when the state space of every middlebox is finite.

We describe a new algorithm for conservatively checking isolation properties
of stateful networks.
%Our algorithm is modular in the sense that it
%repeatedly analyzes each middlebox separately w.r.t. an intermediate
%global state view.
%
%Our algorithm is fully automatic and is expected to scale to real networks.
The asymptotic complexity of the algorithm is polynomial in the size of the network,
albeit being exponential in the maximal number of queries of the local state that a middlebox can do, which is often small.
%Its asymptotic complexity is exponential in the maximal number of queries of the local state that a middlebox can do, which is often small. Importantly, the complexity  is polynomial in the packet space size and in the size of the network.

Our algorithm is sound, i.e., it can never miss a violation of safety but may
fail to verify some properties.
The algorithm performs on-the fly abstract interpretation by (1)~abstracting away the order of packet processing and the number of times each packet arrives, (2)~abstracting away correlations between states of different middleboxes and channel contents, and (3)~representing middlebox states by their effect on each packet separately, rather than taking into account the entire state space. % rather than maintaing the entire state.
%and (2) \TODO{reqrite to talk about cartesian abstraction. queries are not clear. } accumulating responses to queries made on the history in each middlebox.
We show that the %accumulation
abstractions do not lose precision when middleboxes may reset in any state. This is encouraging since many real middleboxes reset, e.g., after some session timeout is reached or due to hardware failure.
\end{abstract}

\ignore{
\begin{abstract}
Modern networks achieve robustness and scalability by maintaining
states on their nodes. These nodes are referred to as middleboxes
and are essential for network functionality.
%Middleboxes allow the forwarding behavior of nodes to vary over time and to depend on packet history.
%The forwarding behavior of middleboxes can vary over time depending on packet history, as recorded by their states.
However, the presence of middleboxes drastically complicates the task of network verification.
%For example, in the presence of states, checking that packets sent from one node cannot reach another node is undecidable even when the network topology is given and when the state space of every middlebox is finite.

We describe a new algorithm for conservatively checking the safety
of stateful networks. Our algorithm is modular in the sense that it
repeatedly analyzes each middlebox separately w.r.t. an intermediate
global state view.
%
%Our algorithm is fully automatic and is expected to scale to real networks.
Its asymptotic complexity is polynomial in the size of the network,
albeit being exponential in the maximal number of queries of the local state that a middlebox can do, which is often small.
%Its asymptotic complexity is exponential in the maximal number of queries of the local state that a middlebox can do, which is often small. Importantly, the complexity  is polynomial in the packet space size and in the size of the network.

Our algorithm is sound, i.e., it can never miss a violation of safety but may
fail to verify some properties.
The algorithm performs on-the fly abstract interpretation by (1)~abstracting away the order of packet processing, (2)~abstracting away correlations between states of different middleboxes and channel contents, and (3)~representing middlebox states by their effect on each packet separately, rather than taking into account the entire state space. % rather than maintaing the entire state.
%and (2) \TODO{reqrite to talk about cartesian abstraction. queries are not clear. } accumulating responses to queries made on the history in each middlebox.
We show that the %accumulation
abstractions do not lose precision when the middlebox may reset in any state. This is encouraging since many real middleboxes reset, e.g., after some session timeout is reached or due to hardware failure.
\end{abstract}
}

\section{Introduction}
\label{sec:Intro}

%Motivation some text taken from SNAPL
Modern computer networks are extremely complex, leading to many bugs and vulnerabilities that affect our daily life.
Therefore, network verification is an
increasingly important topic addressed by the programming languages
and networking communities~\cite{conext:uzniarPCVK12,nsdi:CaniniVPKR12,nsdi:KVM12,CCR:KhurshidZCG12,nsdi:KCZCMW13,FMACAD:SNM13,FlowLog,NetKat15}.
Previous network verification tools leverage a simple network
forwarding model, which renders the datapath {\em immutable}. That is,
normal packets going through the network do not change its
forwarding behaviour, and the control plane explicitly alters the
forwarding state at relatively slow time scales.
%Thus, invariants
%can be verified before each control-plane initiated change and these
%invariants will be enforced until the next such change.

While the notion of an immutable datapath supported by an assemblage of routers
makes verification tractable, it does not reflect reality. {\em Middleboxes} are
widespread in modern enterprise networks~\cite{sherry2012making}. A simple
example of a middlebox is a stateful firewall which permits traffic from
untrusted hosts only after they have received a packet from a trusted host.
Middleboxes, such as firewalls, WAN optimizers, transcoders, proxies, load-balancers
and the like, are the most common way to insert new functionality in
the network datapath, and are commonly used to improve network performance and
security. Middleboxes maintain a state and may change their state and forwarding
behavior in response to packet arrivals. While useful, middleboxes are a common
source of errors in the network~\cite{potharaju2013demystifying}.
%, with
%middleboxes being responsible for over $40\%$ of major incidents in networks. Of
%these, as many as $51\%$--$93\%$\footnote{Depending on the type of middlebox
%involved in the failure} are \emph{configuration errors}, which arise not from
%the complexity of individual middleboxes but from their interactions.

% KALEV - The failure numbers are from \cite{potharaju2013demystifying}, Figure
% 11, describing the distribution of failure accross middlebox kind. I
% considered `Connectivity errors' and `misconfiguration' to match our notion of
% configuration errors. I say `as many as' because their notion of `Connectivity
% errors' is broad, and includes things we don't cover, but there's no
% classification of these errors.

\begin{figure}[t]
    \centering
    \includegraphics[width=0.7\textwidth]{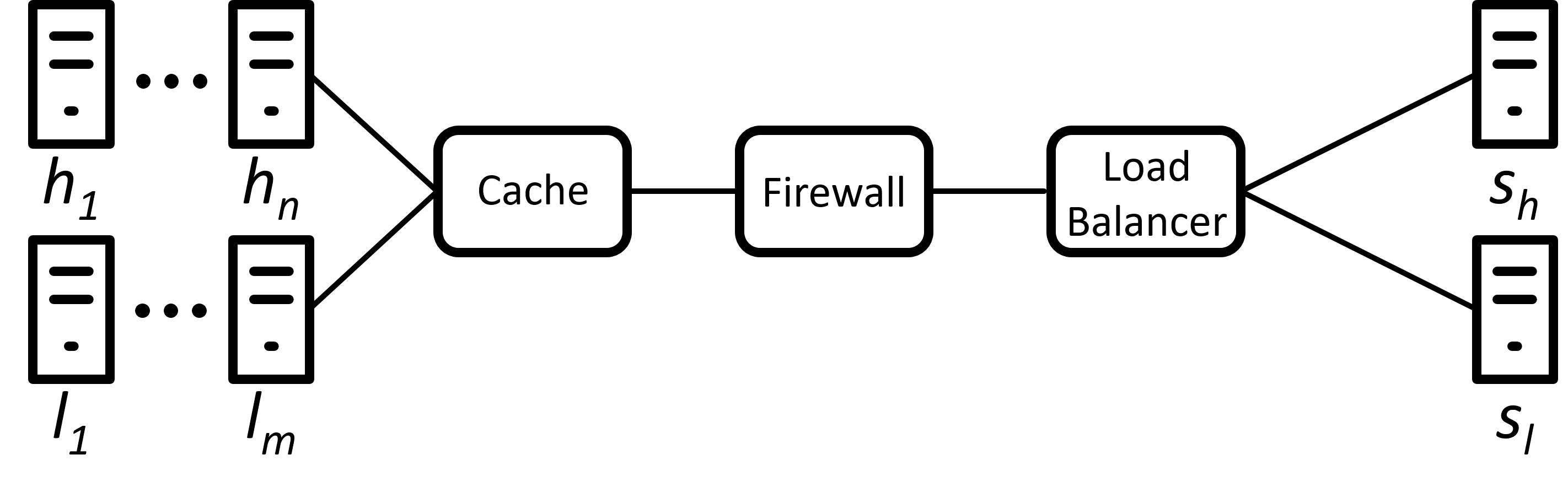}
    \caption{A middlebox chain with a buggy topology.}
    \label{fig:example}
\end{figure}

% \begin{wrapfigure}{r}{0.4\textwidth}
%     \centering
%     \includegraphics[width=0.4\textwidth]{img/long-chain-bad}
%     \caption{A middlebox chain with a buggy topology.}
%     \label{fig:example}
% \end{wrapfigure}

As a simple example, consider the middlebox chain described in \figref{example}. In this network, a firewall is used to ensure that low security hosts ($l_1,\ldots,l_m$) do not receive packets from the $S_h$ server, and a cache and load balancer are used to improve performance.
Unfortunately, the configuration of the network is incorrect since the cache may respond with a stored packet, bypassing the security policy enforced by the firewall.
Swapping the order of the cache and the firewall results in a correct configuration.
%, where the desired isolation property holds.
%hides the host name from the firewall.
%The firewall here records trusted hosts and permits forwarding packets from trusted hosts.
%Since our implementation is sound, it is guaranteed not to miss any error and in particular spots the error in \figref{example:bad}.
%More interestingly, it guarantees that the network shown in \figref{example:good} is safe (even when unbounded
%number of packets are sent).% and in the presence of

\para{Safety of Stateful Networks}
%This work addresses
We address the problem of verifying safety of networks with middleboxes, referred to as \emph{stateful networks}.
We target verification of \cbstart \emph{isolation} properties, namely, that packets sent from one host (or class of hosts) can never reach another host (or class of hosts). \cbend
Yet, our approach is sound for any safety property.
%We model such a network as a finite undirected graph with two types of nodes:
%(i)~hosts, which can send packets; and
%(ii)~middleboxes, which react to packet arrivals and forward modified packets.
%Each node in the network has a fixed number of ports, connected by network edges (links).
%Routers may be considered a degenerate
%case of middleboxes
%
%From a verification perspective, it is possible to view a middlebox
%as a process with local mutable state which is atomically changed
%every time a packet is received.
%The local state determines the
%forwarding behaviour.
%Thus, we are addressing the problem of verifying the correctness of distributed systems.
%Thus, the problem of network verification amounts to verifying the correctness of a
%specialized distributed message passing system where each of the
%middleboxes operates atomically and the order of packet processing
%by different middleboxes is arbitrary.
%
For example, it detects the safety violation described in \figref{example}, and verifies the safety of the correct configuration of this network.
%(even when an unbounded number of packets are sent).

Our focus is on verifying the configuration of stateful networks, i.e.,
addressing errors that arise from the interactions between middleboxes, and not
from the complexity of individual middleboxes. Hence, we
follow~\cite{velner2016some} and use an abstraction of middleboxes as finite-state programs. Previous
work~\cite{velner2016some,DBLP:conf/sigcomm/SivaramanCBKABV16} has shown that
many kinds of middleboxes, including proxy, cache proxy, NAT, and various kinds
of load-balancers can be modeled in this way, sometimes using non-determinism to
over-approximate the behaviour, e.g. to model timers, counters, etc. Since we
are interested in safety properties, such an abstraction (overapproximation) is
suitable.

As shown in~\cite{velner2016some}, it is undecidable to check safety properties in general and isolation in particular, even for
middleboxes with a finite state space, and even when the order of packets
pending for each middlebox is abstracted away the complexity is quite high
(EXPSPACE-complete). Therefore, in this paper we develop additional abstractions
for scaling up the verification. % to realistic situations.

\para{Our approach}
This paper makes a first attempt to apply abstract
interpretation~\cite{cousot1979systematic} to automatically prove the safety
of stateful networks.
Our approach combines sound network-level abstractions and middlebox-level abstractions that, together, make the verification task tractable.
Roughly speaking, we apply
\begin{inparaenum}[(i)]
\item order abstraction~\cite{velner2016some}, abstracting away the order of packets on channels,
\item counter abstraction~\cite{pnueli2002liveness}, abstracting away their cardinality,
%\item cartesian abstraction~\cite{cousot1979systematic,DBLP:journals/tcs/FlanaganFQS05,POPL:JMP17}, both at the network level, abstracting away the correlation between the states of different middleboxes and different channel contents, and at the middlebox level, abstracting away the correlation between states of different packets within each middlebox.
\item network-level Cartesian abstraction~\cite{cousot1979systematic,DBLP:journals/tcs/FlanaganFQS05,POPL:JMP17}, abstracting away the correlation between the states of different middleboxes and different channel contents, and
\item middlebox-level Cartesian abstraction, abstracting away the correlation between states of different packets within each middlebox.
\end{inparaenum}

%
%
%we abstract away
%\begin{inparaenum}[(i)]
%\item the order~\cite{velner2016some} and cardinality~\cite{pnueli2002liveness} of packets on channels;
%\item the correlation between the states of different middleboxes and different channel contents; and
%\item the correlation between states of different packets within each middlebox.
%\end{inparaenum}
%The last two abstractions are obtained by Cartesian abstraction~\cite{cousot1979systematic,DBLP:journals/tcs/FlanaganFQS05,POPL:JMP17}.
%Our approach combines mostly standard abstractions, both at the network level and at the level of a single middelbox

%While the first two abstractions are mostly standard, a

The network-level abstractions, (i)-(iii),
lead to a chaotic iteration algorithm that is polynomial in the state space of the individual middleboxes and packets.
However, the number of middlebox states can be exponential in the size of the network.
For example, a firewall may record the set of trusted hosts and thus its states are subsets of hosts.
Therefore, the %network-level abstraction leads to an algorithm that
resulting analysis is exponential in the number of hosts\footnote{Unfortunately, if the set of hosts is not fixed, the safety problem becomes undecidable (even under the unordered abstraction) (\Cref{sec:small-model}). This
means that, in general, it is not possible to alleviate the dependency of the complexity on
the hosts.}.
%
%The first two abstractions result in an analysis that is still exponential in the size of the network since the state space of individual middleboxes is exponential  (for example, a firewall may record the set of trusted hosts and thus its states are subsets of hosts).
%The middlebox-level abstraction (the third abstraction) is thus crucial for reducing the complexity to polynomial.
%of the algorithm from exponential in the size of the network to polynomial.
%This is because the state space of individual middleboxes is exponential in the size of the network (for example, a firewall may record the set of trusted hosts and thus its states are subsets of hosts).
%
%A key ingredient in %the construction of %the middlebox-level Cartesian abstraction (the third abstraction)

The middlebox-level Cartesian abstraction, (iv), %(the third abstraction)
is the key to reducing the complexity to polynomial.
% The key enabler
The crux of this abstraction is the observation that the abstraction of middleboxes as reactive processes that query and update their state in a restricted way (e.g.,~\cite{velner2016some}) 
%the middlebox behavior for
%many real middleboxes (can be expressed as reactive, loop-free programs , e.g.,
%as in~\cite{velner2016some}), 
allows to represent a middlebox state as a product
of loosely-coupled \emph{packet states}, one per potential packet. This lets us
define a novel, non-standard, semantics of middlebox programs that we call
\emph{packet effect semantics}. The packet effect semantics is equivalent
(bisimilar) to the natural semantics. However, while the natural semantics is
monolithic, the packet effect semantics decomposes a single middlebox state into
the parts that determine the forwarding behavior of different packets, and
therefore facilitates the use of Cartesian abstraction to further reduce the
complexity.%which is crucial for reducing the complexity of the algorithm from exponential in the packet space to polynomial.

%representation facilitates the use of Cartesian abstraction to reduce the state space of the individual middleboxes, despite their sequential semantics, and is crucial for reducing the complexity of the algorithm from exponential in the packet space to polynomial.

%The crux of the middlebox-level abstraction is the formulation of a non-standard semantics of middlebox programs that we call \emph{packet effect semantics}.
%The packet effect semantics is equivalent to the natural semantics. However, while the natural semantics is monolithic, the packet effect semantics decomposes a single middlebox state into the parts that determine the forwarding behavior of different packets, and therefore facilitates the use of Cartesian abstraction.

%One of the biggest hurdles in adopting abstract interpretation is false alarms.
%Therefore, we identified a property that is applicable to many networks
%and which guarantees the absence of false alarms.

One of the main challenges for abstract interpretation is evaluating its
precision. To address this challenge, we provide sufficient conditions that
ensure precision of our analysis. Namely, we show that if the network is safe in
the presence of packet reordering and middlebox reverts, where a middelbox may
revert to its initial state at any moment, then our analysis is guaranteed to be
precise, and will never report false alarms.
This is, to a great extent, due to the packet effect semantics, which allows to
use a middlebox-level Cartesian abstraction without incurring additional
precision loss for such networks.
%
% Notice that middlebox reverts are similar to soft network
% states~\cite{DBLP:conf/icnp/LuiMR04}, and could be triggered, e.g., by session
% timeouts or hardware failures.
Notice that middlebox reverts enable modelling arbitrary hardware failures,
which have not been addressed by previous work on stateful network verification
(e.g., in \cite{velner2016some}).
Surprisingly, verification becomes easier under the assumption that middleboxes
may reset at any time. (Recall that for arbitrary unordered networks safety
checking is EXPSPACE-complete.)

In summary, the main contributions of this paper are
\begin{itemize}
\item We introduce the first abstract interpretation algorithm for verifying
safety of stateful networks, whose time complexity is polynomial in the size of
the network, albeit exponential in the maximal number of queries of the local
state that a middlebox can do, which is often small even for complex middelboxes
(up to 5 in our examples).
% .
%polynomial in the size of the network and packet-header space, albeit exponential in the maximal number of
%%condition statements (i.e., state queries)
%distinct state queries in the AMDL code of an individual middlebox.
\item We develop \emph{packet effect semantics}, a non-standard semantics of
middelbox programs that facilitates middlebox-level Cartesian abstraction,
reducing the complexity of the abstract interpretation algorithm from
exponential in the size of the network to polynomial without incurring any
additional precision loss for unordered reverting networks.
\item We provide sufficient conditions for precision of the analysis that have a
natural interpretation in the domain of stateful networks: ignoring the order of
packet processing and letting middleboxes revert to their initial states at any
time.
\item We prove lower bounds on the complexity of safety verification in the
presence of packet reordering and/or middlebox reverts, showing that our
algorithm is essentially optimal.
\item We implement our analysis and show that it scales well with the number of
hosts and middelboxes in the network.
\end{itemize}

%Interestingly, while the abstractions we employ are mostly standard, we show that they have a special interpretation in the setting of stateful networks.
%Namely, we show that they correspond to ignoring the order of packet processing and letting middleboxes revert to their initial states at any time. Formally, this means that if a network is safe in the presence of packet reordering and middlebox reverts, then our analysis is precise.

%A key ingredient in the construction of the middlebox-level abstraction is the observation that the middlebox behavior for many real middleboxes can be expressed in a simple way that allows to represent a middlebox state as a product of loosely-coupled \emph{packet states}, one per potential packet. This representation facilitates the use of Cartesian abstraction~\cite{cousot1979systematic,DBLP:journals/tcs/FlanaganFQS05,POPL:JMP17} to reduce the state space of the individual middleboxes, despite their sequential semantics.

\ignore{
The key idea is to use Cartesian abstract interpretation~\cite{cousot1979systematic,DBLP:journals/tcs/FlanaganFQS05,POPL:JMP17}, which ignores correlations between different parts of a system.

Cartesian abstraction is natural to use in systems where the semantics ranges over loosely-coupled components.
This makes it straightforward to apply on computer networks, where each component (e.g., middlebox) acts as an individual unit and cannot directly access the internal state of other units.
%Cartesian abstraction is natural to use in systems
%such as computer networks which consist of multiple components (hosts, middleboxes, pending packets) where each component (e.g., middlebox) acts as an individual unit and cannot directly access the internal state of other units.
On the other hand, for middleboxes, Cartesian abstraction may not seem like the best fit due to their sequential semantics. However, we observe
that the middlebox behavior for many real middleboxes can be expressed in a simple way (captured by a modeling language that we define) that allows to represent a middlebox state as a product of loosely-coupled \emph{packet states}, one per potential packet.
This representation facilitates the use of Cartesian abstraction to reduce the state space of the individual middleboxes.

%On the other hand, defining a Cartesian abstraction for middlebox programs is challenging due to their sequential semantics. To tackle this challenge we observe
%that the middlebox behavior for many real middleboxes can be expressed in a simple way (captured by a modeling language that we define) that allows to represent a middlebox state as a product of loosely-coupled \emph{packet states}, one per potential packet.
%This representation is amenable to Cartesian abstraction.

The combination of the abstractions makes verification tractable.
Interestingly, we show that the abstractions we employ are as precise as ignoring the order of packet processing and letting middleboxes revert to their initial states at any time.
}

%where a single middlebox state is composed of \emph{packet states} for each potential packet.
%%
%%
%We tackle this challenge by switching the representation of a middlebox state such that it consists of a product of loosely-coupled sub-states.
%This is enabled by our observation
%%
%Specifically, we observe that the middlebox behavior for many real middleboxes can be expressed in a simple way, where a single middlebox state is composed of \emph{packet states} for each potential packet.
%This property makes verification tractable.
%%
%Interestingly, we show that the abstractions we employ are as precise as ignoring the order of packet processing and letting middelboxes revert to their initial states at any time.

\ignore{
\subsection{Abstractions for Safety of Networks of Middleboxes}

\para{Network-Level Abstractions}
%We first explain our network-level Cartesian abstraction.
At the network level, our analysis combines the following, rather standard, abstractions:
\begin{inparaenum}[(i)]
\item Cartesian abstraction~\cite{cousot1979systematic,DBLP:journals/tcs/FlanaganFQS05,POPL:JMP17} that abstracts away correlations between the local states of
individual middleboxes and the contents of the individual channels,
\item  Unordered channels abstraction~\cite{velner2016some} that abstracts away the order of packets on channels, and
\item Counter abstraction~\cite{pnueli2002liveness} that abstracts away the number of copies of packets on channels.
%for the occurrence of packets on channels.
\end{inparaenum}

\ignore{
At the network level, we combine Cartesian abstraction with unordered channels abstraction~\cite{velner2016some} and a counter abstraction~\cite{pnueli2002liveness} for the occurrence of packets on channels.
Namely, our analysis abstracts away correlations between the local states of
individual middleboxes and the contents of the individual channels.
This means that we verify the correctness of each middlebox against all possible reachable states of other middleboxes and channels.
In addition, the contents of channels are abstractly represented
by tracking for each channel the set of packets that may reside on it. In particular, this means that the order of packets on channels is abstracted away (as in~\cite{velner2016some}), but also their multiplicity (i.e., how many copies of the packet appear in the channel), and the correlations between different packets.
%as \emph{sets} of packets (rather than sequences).
}

%\TODO{explain the height of the abstract domain lattice}
These abstractions lead to a chaotic iteration algorithm that is polynomial in the state space of the individual middleboxes and packets.
However, the number of middlebox states can be exponential in the size of the network.
For example, a firewall may record the set of trusted hosts and thus its states are subsets of hosts.
%Unfortunately, the state space of an individual middelbox may be exponential in the number of hosts (for example, a firewall that tracks ``trusted'' hosts will have a state for each subset of hosts), leading to an EXPTIME complexity.
Therefore, the %network-level abstraction leads to an algorithm that
resulting analysis is exponential in the number of hosts and thus will not scale.

Unfortunately, if the set of hosts is not fixed, the safety problem becomes
undecidable (even under the unordered abstraction) (\Cref{sec:small-model}). This
means that it is not possible to alleviate the dependency of the complexity on
the hosts.
% \TODO{Panda: I am not sure how the reasoning follows here, \ie why
% undecidability implies complexity. Also I am not sure how this connects to the
% next step, in that do we break this complexity.}

%\TODO{Unfortunately, it is not possible to alleviate the dependency of the complexity on the hosts: we show that for an unbounded number of hosts, the safety problem is undecidable (even under the unordered abstraction). Therefore, we cannot strive for a cutoff result or an  algorithm that is independent of the number hosts. Instead, we approach the scalability problem by introducing an additional abstraction that reduces the complexity to polynomial with respect to the number of hosts.}

\para{Middelbox-Level Cartesian Abstraction}
Our next step is therefore to tackle the state-space of individual middleboxes by applying middlebox-level abstraction.
The crux of the middlebox-level abstraction is the formulation of a non-standard semantics of middlebox programs that we call \emph{packet effect semantics}.
The packet effect semantics is equivalent to the natural semantics. However, while the natural semantics is monolithic, the packet effect semantics decomposes a single middlebox state into the parts that determine the forwarding behavior of different packets, and therefore facilitates the use of Cartesian abstraction.
In order to explain the packet effect semantics, we first need to provide some more details on the way we model middleboxes.

\ignore{
Our next step is therefore to tackle the state-space of individual middleboxes by applying middlebox-level Cartesian abstraction.
%Unlike the network-level abstraction, this step is not straightforward due
This step is not straightforward due to the sequential semantics of middleboxes. It is facilitated by an equivalent semantics of middlebox programs
that decomposes a single middlebox state into the parts that determine the forwarding behavior of different packets.
Roughly speaking, we abstract away correlations between these parts.
%certain properties of middlebox behavior.
%Our key observation is  that the middlebox behavior for many real middleboxes can be expressed in a simple way.
%that %makes verification tractable.
%facilitates Cartesian abstraction.
%
%Roughly speaking, we abstract away correlations between %the forwarding behaviors of different input packets in a single middlebox state.
%the parts of the state of a single middlebox that determine the forwarding behavior of different packets.
This means that we consider every forwarding behavior that is feasible for a packet at some middlebox state as feasible in all reachable states.
In order to explain the abstraction more precisely, we first need to provide some more details on the way we model middleboxes.
}

%To deal with the explosion that arises from the  state-space of the middleboxes, we employ a middlebox-level abstraction.
%Fortunately,

\para{Abstract Modeling Language}
We observe that the behavior of many real middleboxes can be expressed in a
simple way. Accordingly, we introduce the \textbf{A}bstract \textbf{M}iddlebox
\textbf{D}efinition \textbf{L}anguge (\AMDL) for defining the effect of
middleboxes.
%way that makes verification tractable.
%Namely, we define a simple programming language for defining the effect of middleboxes (\secref{amdl}).
%We realize this observation by introducing a simple
%programming language --- \textbf{A}bstract \textbf{M}iddlebox \textbf{D}efinition
%\textbf{L}anguge (\AMDL) --- for defining the effect of middleboxes.
The language is similar to \cbstart OCCAM~\cite{roscoe1988laws} and CSP~\cite{CSP:Hoare:1978}\cbend.
Each Middlebox maintains a state using uninterpreted relations over finite domains and uses restricted state queries and relation updates to define how a packet is processed and its effect on the relations and the output packet(s).
In this way, AMDL provides a concise way to define finite state transducers. %, usually with exponential savings.
%For example, \Cref{fig:session-fw-code} depicts the AMDL code of a stateful firewall.
%AMDL is also designed to be topology independent.
%In particular, each middlebox type is modeled once.

\ignore{
AMDL restricts the ability of the middlebox to generate new packets.
The code can only rewrite packets using values of the current processed packet or configuration constants.
For example, swapping source and destination is allowed but adding one to destination is forbidden.
The effect of a packet on the state is restricted in a similar way. %, by only allowing to add or remove packet fields and constant from relations.
%We show that the effect of many interesting middleboxes can be defined in this language while hiding implementation details by using non-determinism.
}

Despite its simplicity, AMDL allows modeling many realistic middleboxes
while hiding implementation details by using non-determinism.
%Fortunately, \AMDL\ is expressive enough for networks and
%enables an efficient and fully automatic verification.
In this work we do not address the question of proving that the actual (usually complex and proprietary) middlebox code implements the
required effect.
\ignore{
Our language also permits defining safety properties such as isolation via middleboxes that may abort.
}

\para{Summarizing Packet Effects}
\ignore{
A naive way to apply Cartesian abstraction on an AMDL program would be to ignore correlations between different relations.
However, such an abstraction would be too coarse. Furthermore, since the state space of a single relation may already be exponential, it would not mitigate the problem.
%\TODO{SH: I think it is also not polynomial. For example, the firewall has only one relation -- it will not help there. KA: Do we claim that it is?}
%we observe that the simplicity of AMDL programs allows to define a Cartesian abstraction which is both more precise and reduces the size of the state space to effectively polynomial in the size of the network (the meaning of ``effectively'' will become clear in the sequel).
%In order
Instead, %in order to realize the middlebox-level Cartesian abstraction and reduce the complexity of abstract interpretation,
we realize the middlebox-level Cartesian abstraction on a different representation of middlebox states.
}%
The key idea in our middlebox-level abstraction is the observation that %middlebox behavior written in AMDL is local in the following way:
we can view a single middlebox state as a mapping from packets to their \emph{packet states}, where the packet state of a packet $p$ suffices to determine both the sequence of packets that the middlebox outputs when it receives $p$ as an input packet, and the way the state changes in response to $p$.
Technically, the packet state of packet $p$ records the valuation of the Boolean conditions in the code when evaluated on $p$. In fact, it suffices to record the valuations of \emph{state queries} --- conditions that examine the middlebox relations and capture the dependency of the middlebox behavior on its state.
For example, a state query could check whether the source of the current packet was seen in the past.
%Each of these valuations uniquely determines the potential forwarded packets and new state.
Notably, the mapping of packets to packet states defines a bisimilar representation of the middlebox states.

\ignore{
we first identify a bisimilar representation of middlebox states,
which records the state queries in the \AMDL\ code that match a given input packet.
The idea is that middlebox behavior written in AMDL is local in the following way: we can view a single middlebox state as a mapping from packets to their \emph{packet states}, where the packet state of $p$ suffices to determine both the sequence of packets that the middlebox outputs when it receives $p$ as an input packet, and the way the state changes in response to $p$.
Technically, the packet state of packet $p$ records the valuation of the Boolean conditions in the code when evaluated on $p$. In fact, it suffices to record the valuations of \emph{state queries} --- conditions that examine the middelbox relations and capture the dependency of the middlebox behavior on its state.
For example, a state query could check whether the source of the current packet was seen in the past.
%Each of these valuations uniquely determines the potential forwarded packets and new state.
Formally, the mapping of packets to packet states defines a bisimilar representation of the middlebox states.
%, which is facilitated by our language for describing middleboxes.
}

%
%State queries capture the dependency of the middlebox behavior on its state.
%For example, a state query could check whether the source of the current packet was seen in the past.
%We note that, even for complex middleboxes, our model requires a small number of distinct state queries.
%% like IDS (Intrusion Detection System), the number of %branches
%%distinct state queries in the AMDL code is small.
%In all of our examples, the maximal %middlebox branching is 3.
%number of state queries is 5.

\para{Effectively Polynomial Abstract Interpretation Algorithm}
Applying a middlebox-level Cartesian abstraction on the packet-effect
representation of middleboxes results in an abstract interpretation algorithm
that enumerates, for each middlebox, the %forwarding behavior
potential packet states of individual packets separately (based on their
headers), instead of enumerating (entire) middlebox reachable states as
valuations of relations.

%Thus, in this representation, each packet is mapped into its recorded behavior
%(packet state).
The size of the packet state space is quadratic in the number of hosts, and
exponential in the number of distinct state queries.
%Boolean conditions that examine relations in the code.
The latter is typically a small constant even for complex middelboxes. In our
examples, the maximal number of state queries is 5.

Therefore, the abstract interpretation algorithm is realized with the packet-effect representation in time that is polynomial in the size of the network and
packet-header space, albeit exponential in the maximal number of
%condition statements (i.e., state queries)
distinct state queries in the AMDL code of an individual middlebox.

\DONE{[2] SH: changed "branches" to "state queries". If approve, propagate to everywhere.}{RM: accepted and propagated.}

\subsection{Precision and Complexity Bounds} %for Reverting Unordered Networks
One of the biggest hurdles in adopting abstract interpretation is false alarms.
Therefore, we identified a property that is applicable to many networks
and which guarantees the absence of false alarms.

\para{Reverting Unordered Networks}
A middlebox is \emph{reverting} if from every state it can reset to its initial state.
Notice that resets are similar to soft network states~\cite{DBLP:conf/icnp/LuiMR04}, and could be triggered, e.g., by session timeouts or hardware failures.
%In \Cref{sec:network-semantics} we define a \emph{reverting network semantics}.
%The main idea is to allow the state of each middlebox to ``revert'' to its initial state.
%This corresponds to network resets and soft states \cite{?}.
%The intuition behind the reverting network semantics is that if a network is safe it is also safe under retransmissions.
%It also simplifies the verification procedure as we show.

We show that our Cartesian abstractions do not incur a loss of precision on networks in which middleboxes are reverting and the order of packet arrivals is ignored.
Therefore, for such networks our algorithm is a decision procedure. %we obtain an algorithm for checking safety with polynomial time complexity in the size of the network, albeit exponential in the maximal number of state queries.
Surprisingly, the verification becomes easier under the assumption that middleboxes may reset at any point in time. (Recall that for arbitrary unordered networks the safety problem is EXPSPACE-complete.)
For other networks our algorithm is sound but may produce false alarms if the network is in fact safe but safety is violated under a reverting unordered semantics.
%Unfortunately, for non-reverting networks and safety properties, our method might be imprecise, e.g., when the safety property involves counting the number of packets and for temporal properties.

%\TODO{add somewhere
%We focus on networks in which safety is guaranteed in the presence of middlebox resets. Surprisingly, the verification becomes easier under the assumption that middlebox may reset at any point in time.
%For other networks our algorithm is sound but may produce false alarms.
%As far as we know, this is the first application of abstract interpretation to network verification.
%}

\para{Lower-Bounds}
We show that it is undecidable to check the safety of reverting networks when packet order is respected, i.e., channels have a FIFO behavior.
\iflonglong
This indicates that an attempt to design a more precise abstract interpretation algorithm which is precise for reverting but ordered networks is doomed.

\else
This justifies the need for the unordered abstraction even in such networks.
\fi
We also show that the complexity of checking safety in reverting unordered networks is coNP-hard. %, even with an unordered reverting semantics.
\iflonglong
Since our algorithm is a decision procedure for this problem we conclude that we cannot strive for an abstract interpretation algorithm that has the same precision guarantees
but avoids the exponential dependency of the algorithm on the number of state queries.
%Hence, the exponential dependency of the algorithm on the number of state queries is, in general, inevitable.
\else
In this sense, our algorithm, which is a decision procedure for this problem, is optimal.
\fi

%We show that our abstractions are essential for scalability by showing several lower bounds.
%%
%First, we show that it is undecidable to check the safety of reverting networks when packet order is respected, i.e., channels have a FIFO behavior.
%This justifies ignoring order of packet arrival, which is a simple and useful network abstraction.
%%
%Second, we show that the complexity of checking safety in middleboxes is coNP-hard, even with unordered reverting semantics.
%Hence, the exponential dependency of the algorithm on the number of state queries is, in general, inevitable.

\iflonglong
\para{Limitations}
While our approach enables scalable verification for a variety of networks, it has a few limitations.
First, our verification time grows exponentially with the number of state queries. This means that the running time grows exponentially
with the amount of state that a middlebox must access. In practice, middleboxes access a limited amount of state per
packet so as to minimize their working-set size. Hence, this does not significantly limit our applicability.
Second, we rely on abstract packet models; this limits the accuracy with which we can model middleboxes, e.g.,
intrusion detection systems that search for byte patterns in the packet payload, etc.
%
%\cbstart
%Finally, f

This work focuses on proving isolation properties in stateful networks. For technical reasons, we model safety properties by middleboxes. This allows us to define a richer set of safety properties than just isolation properties. However, property middleboxes are also subject to the same abstractions. When the property middlebox itself maintains a state, the abstraction may add to it aborting executions and hence introduce false alarms. This may happen for history-sensitive properties such as ``a message with property $x$
can only be sent after a message with property $y$ has been sent'', cannot always be proved with our method.
Importantly, for network isolation, as well as stateful safety properties that we call \emph{revert-robust} (\secref{RevertRobust}), the abstraction of the property middelbox will not introduce false alarms.
\else

\para{Beyond Isolation}
This work focuses on proving isolation properties in stateful networks. For technical reasons, we model safety properties by middleboxes that may abort. This allows us to define a richer set of safety properties than just isolation properties. However, property middleboxes are also subject to the same abstractions. When the property middlebox itself maintains a state, the abstraction may add to it aborting executions and hence introduce false alarms.
%This may happen for history-sensitive properties such as ``a message with property $x$
%can only be sent after a message with property $y$ has been sent'', cannot always be proved with our method.
%
Importantly, for network isolation, as well as stateful safety properties where the set of ``safe'' behaviors is suffix-closed (see \Cref{sec:RevertRobust}), the abstraction of the property middelbox will not introduce false alarms.
\fi

%Therefore, history-sensitive properties such as ``a message with property $x$
%can only be sent after a message with property $y$ has been sent'', cannot always be proved with our method.
%%
%We notice that for many properties, including network isolation, our method
%is precise (technically, our method is complete for the class of \emph{revert-robust} properties (\secref{RevertRobust}).
%%
%We note that in practice, our analysis may still be able to prove safety, even in cases where the property is not
%revert-robust; our experiments include an example (enterprise network) where safety depends on a property that is not revert-robust,
%which our analysis is still able to prove.
%\cbend

\subsection{Empirical Evaluation}
We have implemented our algorithm and applied it to several interesting topologies (\secref{empirical}).
Our implementation was able to detect, for example, the safety violation described in \figref{example} (detecting violations is guaranteed by the soundness of our analysis). It was also able to verify that the correct configuration of this network is safe (even when unbounded
number of packets are sent).% and in the presence of

%In \Cref{sec:empirical}, we describe our implementation of the polynomial abstract interpretation algorithm using packet space using
%LogicBlox bottom-up query evaluation engine~\cite{aref2015design}. This implementation is publicly available.
%The main idea is to convert AMDL programs and network configurations into Datalog and let LogicBlox perform chaotic iterations.
% Our implementation was able to detect, for example, the safety violation described in \figref{example}.
% The figure shows an example of correct and incorrect middlebox chains.
% The safety property is: low security hosts ($l_1,\ldots,l_m$) do not receive packets from the $S_h$ server.
% \figref{example:bad} contains an incorrect network configuration, where the cache may respond with a stored packet, bypassing the security policy enforced by the firewall.
% %hides the host name from the firewall.
% The firewall here records trusted hosts and permits forwarding packets from trusted hosts.
% Since our implementation is sound, it is guaranteed not to miss any error and in particular spots the error in \figref{example:bad}.
% More interestingly, it guarantees that the network shown in \figref{example:good} is safe (even when unbounded
% number of packets are sent).% and in the presence of
%forwarding loops).

We evaluate the scalability of our approach under two scenarios:
%(a) we vary the number of middleboxes traversed by a single packet, \ie we vary the path length;
(a) we vary the number of middleboxes in the network;
and (b) we vary the number of hosts connected to a network.
Our algorithm verified networks with thousands of hosts, and with hundreds of middleboxes in timescales of hours, suggesting that our
approach can be used to verify safety in real-world examples. \DONE{[3] KA: Maybe add specific run result numbers. Also, rephrase.}{RM: rephrased. Decided on no numbers.}
%We hasten to note that in practice the path length for a packet is constrained by the network's latency requirements, and
%packets rarely traverse more than a handful of middleboxes, our evaluation merely shows that our results are not
%sensitive to increasing network size. Our evaluation shows that our technique can scale along both these dimensions, and
%we can \TODO{[3] ap: Maybe add some evaluation highlight.}

}

We defer proofs of key claims to  \iflong\appref{proofs} \else the supplementary materials\fi.

\section{Expressing Middlebox Effects}
\label{sec:amdl}

This section defines our programming language for modeling the
abstract behavior of middleboxes in the network. Our modeling language is independent of the particular network topology, which is defined in \secref{network-semantics}.
The proposed language, \AMDL\  (\textbf{A}bstract \textbf{M}iddlebox \textbf{D}efinition \textbf{L}anguage),
% is an event driven message passing programming language, influenced by OCCAM~\cite{roscoe1988laws}
is a restricted form of OCCAM~\cite{roscoe1988laws},
% is a restricted form of Dijkstra's Guarded Commands~\cite{dijkstra1975guarded}.
%
% It is similar to the languages of~\cite{velner2016some,DBLP:conf/sigcomm/SivaramanCBKABV16}, which illustrate how it can model various middleboxes.
similar to the languages of~\cite{velner2016some,DBLP:conf/sigcomm/SivaramanCBKABV16}.

We first define the syntax and informal semantics of \AMDL\ (\secref{syntax});
we then define a formal ``standard'' \emph{relation effect semantics} (\secref{relation-sem});
we continue by defining an alternative \emph{packet effect} semantics (\secref{packet-sem}),
which is bisimilar to the relation effect semantics (\secref{bisim}); and finally
we present a localized version of the packet effect semantics (\secref{local-sem}),
which is suitable for Cartesian abstraction.

%Our language is parametric in the number of hosts and the network topology.

\ignore{
%% Sharon:repetitive
Middleboxes are implemented as reactive processes.
%The language allows two types of processes --- \emph{host} processes, and \emph{general} processes.
%The host processes are stateless processes which create and send messages, and can additionally read packets and report a failure state (via the \verb|abort| command). For example, the code for host $h_1$, depicted in \Cref{fig:h1-code}.
Middleboxes can have states realized using relations; they read a packet and consequently produce
and send new packets, and potentially change the state of the process.
For example, the code for a session firewall is depicted in \figref{session-fw-code}.
}
%

%\subsection{Terminology}
\para{Packets}
Middlebox behavior in our model is defined with respect to packets that consist of a fixed, finite, number of packet fields, ranging over finite domains. As such, a packet $\packet \in \packets$ in our formalism is a tuple of packet fields over predefined finite sorts.
In our examples, a packet is a tuple $\tup{s,d,t}$,
where $s,d$ are the source and destination hosts, respectively, taken from a finite set of hosts $H$, %($h_1, h_2$ in the running example)
and $t$ is a packet tag (or type) that ranges over a finite domain $T$. %, abstracting away the packet's payload. %($\{0,1,2\}$ in the running example)\cbend.
In this case, $|\packets|$ is polynomial in $|\hosts|$.
(Our approach is also applicable when additional fields are added, e.g., for modeling the packet's payload via an abstract finite domain.)

\subsection{Syntax and Informal Semantics}
\label{sec:syntax}

% \figref{amdl-syntax} describes the syntax of the \AMDL\  language.
\figref{amdl-syntax} describes the syntax of the \AMDL\ language\footnote{In the
code examples, we write \texttt{p} for the triple \texttt{(src,dst,type)} and
use access path notation to refer to the fields, e.g., \texttt{p.src}.}.
Middleboxes are implemented as reactive processes, with events triggered by the
arrival of packets. If multiple packets are pending, the \AMDL\ process 
non-deterministically reads a packet from one of the incoming channels of the
process. The packet processing code is a loop-free block of guarded-commands,
which may update relations and %output packets.
forward potentially modified packets to some of the output ports.
\AMDL\ uses \emph{relations} over finite domains to store the middlebox
state. %(which in our running example are simply sets).
These are the only data structures allowed in \AMDL.
%The language includes predefined sorts for packet fields (e.g., hosts), as well as constants.
%\TODO{clean this. the sorts should match the packet fields. there is some mess here.}
%\TODO{denote the set of channels of $m$ by $C_m$}
%
% The structure of \AMDL\ processes is that of an event loop that 
%
% \AMDL\ processes non-deterministically read a packet from one of the incoming
% channels of the process. (If no packets are read, the process blocks until a
% packet arrives.) 
%When several packets can be read, one of them is chosen non-deterministically.
%\AMDL\ uses \emph{relations} over finite domains to store the middlebox state. These are the only data structures allowed in \AMDL,
The only relation operations allowed are inserting a value to a relation, removing a value from a relation, and
\emph{membership queries} --- checking whether a value is in a relation.
For a membership query of the form $\overline{a} ~ \bin ~ \text{r}$, we denote the relation, r, used in the query by $\rel(q)$ and denote the tuple of atoms $\overline{a}$ by $\atoms(q)$.
For example, the code for a session firewall is depicted in \figref{session-fw-code}.
%
%To report a failure, the middlebox executes the \verb|abort| command.

\begin{figure}[t]
\centering
\begin{alltt}
\begin{tabbing}
sf\=irewall = do \+ \\
in\=ternal_port ? p => \+ \\
if\= \+\\
p.dst in trusted => external_port ! p\\
\(\Box\) \\
p.\=type = 0 => // request packet \+ \\
external_port ! p; \\
requested(p.dst) := true \-\-\\
fi \-\\
\(\Box\)
\= \+ \\
ex\=ternal_port ? p => \+ \\
if\= \+ \\
p.src in trusted => internal_port ! p\\
 \(\Box\)\\
p.\=type = 1 and p.src in requested => \+\\
// response packet with a request\\
trusted(p.src) := true \- \- \\
fi \-\- \\
od
 \end{tabbing}
 \end{alltt}
  \caption{\label{fig:session-fw-code}%
  \AMDL\ code for %$\fw_1$ and $\fw_2$.
    session firewall.}
\end{figure}

% Each middlebox enforces a certain safety property using the \textbf{requires} clause.
Middleboxes %and hosts
may enforce safety properties using the \textbf{abort} command.
For example, an isolation middlebox would abort when a forbidden packet is received.

\ignore{
The semantics of each middlebox consists of non-deterministic actions that
react to packet arrivals. % when certain preconditions hold.
The middlebox defines updates to its state and forwards potentially modified packets to some of its output ports.
%The output command (defined in $\gV{output}$) can rewrite the packet by modifying some of its header fields.
%\TODO{The following text may be relevant or not, need to decide:} The \AMDL\ code never refers to particular packets and hosts. Instead, predefined relations are used and instantiated as part of the initial network configuration.
%
When several packets can be read, one of them is chosen non-deterministically. If no packets are read, the process blocks until a packet arrives.
% SH: removed following since it belongs to the network semantics
%When a packet is handled by the middlebox, it is taken from its input channel and the output packets are added to the
%output channels.
}

% When several actions are enabled, one of them is selected non-deterministically.
%This
%Nondeterministic choice allows us to model non-deterministic behavior in middleboxes, and abstract  irrelevant details.

%Nondeterministic choice allows us to abstractly model middlebox behavior using non-determinism.

%\AMDL\  programs consist of loop free \emph{events} over packet fields. \AMDL\  events are fired when a middlebox consumes a packet on one of it's incoming ports. A middlebox event is atomic, one packet is processed at a time. For the purposes of safety verification we disregard computation time and assume that \AMDL\  executions are instantaneous.

% \AMDL\  uses \emph{relations} to store the middlebox state. The domains of the relations range over the network hosts, packet field values, and middlebox ports.

%The main construct of \AMDL\  are guarded commands. The \emph{guard} of a guarded command is a boolean expression, and the \emph{command} of a guarded command is either an \babort~ operation or is a (possibly empty) set of relation \binsert~ and \bremove~ operations, followed by an \bfrw~ operation of a (possibly empty) set of packets. A guarded command is only executed if its guard is evaluated to \emph{true}.
%Every time the middlebox event is fired, exactly one of the guarded commands whose guard evaluates to \emph{true} is executed.

%Guards consist of equalities and membership queries over the middleboxes relations.
%The tuples in the membership queries consist of constants and program variables, which are derived from the currently executed packet.

\begin{figure}
\[
\begin{array}{lcl}
%\hline\hline
\gV{\textit{mbox}} & ::= & m = \textbf{do}~ \gV{\pblock}~ [\Box~ \gV{\pblock}]^{*}~ \textbf{od}\\
%\hline
\gV{\pblock} &::= &  c ~ \textbf{?}~ \overline{\pfld} ~ \Rightarrow~ \gV{\gc}\\
%\hline
%\hline
\gV{\gc} &::=  & \gV{\cond} \Rightarrow \gV{\action} %~ | ~
  %\gV{\gc}~ \textbf{;}~ \gV{\gc} %\\
 %& | &
 ~ | ~\textbf{if}~ \gV{\gc}~ [\Box~ \gV{\gc}]^{*}~ \textbf{fi}\\
% & | & \gV{cond} \Rightarrow \gV{action}\\
%\hline
 \gV{\action} &::= &
\gV{\action}~ \textbf{;}~ \gV{\action}  ~ | ~

c ~ !~ \overline{\gV{\atom}} ~ | ~
 r(\overline{\gV{\atom}})~ \textbf{:=} ~\gV{\cond} %\\
  %& | &
  ~ | ~ \textbf{abort}
  \\
%\hline
 \gV{\cond} & ::=  &\textbf{true} ~ | ~
        \gV{\cond} ~\band ~ \gV{\cond} ~ | ~
%        & | & \gV{cond} ~ \bor ~ \gV{cond}\\
         \bnot~ \gV{\cond} %\\
        %& | &
        ~ | ~ \gV{\atom} = \gV{\atom} ~ | ~  \overline{\gV{\atom}}~ \bin ~ r\\
%\hline
\gV{\atom} & ::=  & \pfld ~|~ \const\\
%\hline\hline
\end{array}
\]
\caption{\label{fig:amdl-syntax}%
\AMDL\ syntax.
$\overline{e}$ denotes a comma-separated list of elements drawn from the domain
$e$. $\textbf{abort}$ imposes a safety condition. $c\ ?\ p$ reads $p$ from a
channel $c$ and $c\ !\ p$ writes $p$ into $c$. We write $m$ for a middlebox
name, $r$ for a relation name, and $c$ for a channel name.
%mainly for explanatory purposes.
We write $\const$ for a constant symbol and $\pfld$ for identifiers used to
match fields in packets, % headers,
e.g., \texttt{src}.
% which is configured once the topology is known. We write $r$ for the name of a
% relation and $c$ for the name of a channel. We write $\pfld$ for identifiers
% used to match fields in packet headers, e.g., source.
Non-deterministic choice is denoted by $\Box$.
%We use guarded commands with non-determinism denoted by $\Box$.
% We allow standard Boolean operators with membership tests on relations.
}
\end{figure}

\subsection{Middlebox Relation Effect Semantics} \label{sec:relation-sem}

We now sketch the semantics of \AMDL.
%Intuitively, every middlebox enforces a safety property by forbidding certain packets in certain states,
%and potentially updating the relations and outputting packets.
%
The definitions below supply a part of the
full network semantics, which is given in \secref{network-semantics}.

%\para{Notation}
%In the remainder of this paper, we will denote the set of configurations
%employed by the semantics \textit{sem} as $\Sigma^{\textit{sem}}$.
%Since configurations are compound, we will denote the type of their
%\textit{comp} component as $\Sigma^{\textit{sem}}[\textit{comp}]$.
%%
%Given a configuration $\sigma \in \Sigma^{\textit{sem}}$,
%we will denote the value of its sub-component \textit{comp}
%as $\sigma[\textit{comp}] \in \Sigma^{\textit{sem}}[\textit{comp}]$.
%%
%We will drop the semantics superscript when it is clear from the context.

\para{Middlebox States}
%%% 2.3.17 SHARON removed (moved to network semantics)
%Let $\mboxes$ be the set of middleboxes in the network.
%
Each middlebox $\mboxe\in \mboxes$ maintains its own local state as a set of
relations.
The domain of a relation $r$ defined over sorts $s_{1..k}$ is $D(r)\eqdef
D(s_1)\times\ldots\times D(s_k)$, where $D(s_i)$ is the domain of sort $s_i$.
%Let $r$ be a relation defined over sorts $s_1,\ldots,s_k$, and let $D(s_i)$
%denote the domain of sort $s_i$. Then $D(r)=D(s_1)\times\ldots\times D(s_k)$
%denotes the domain of $r$.
%
%Let $D(s)$ denote the domain of sort $s$ and let $D(r)=D(s_1)\times\ldots\times
%D(s_k)$ denote the domain of a relation $r$ declared as $\textbf{rel}~
%r(s_1,\ldots,s_k)$.
%
%\newcommand{\arity}{\textit{arity}}
%We denote the arity of $r$ by $\arity(r) = (s_1,\ldots,s_k)$.
%
%For a middlebox $\mboxe \in \mboxes$
We use $\rels(\mboxe)$ to denote the set of
relations in $\mboxe$, and $D(m)$ to denote the union of $D(r)$ over $r \in
\rels(\mboxe)$.
%
%Further, let the set of relations declared
%for all middleboxes be $\rels(M) = \bigcup\limits_{m \in M} \rels(m)$.

The \emph{middlebox state} of $\mboxe$ is then a function
%\[
$s \in \RState[\mboxe] \eqdef \rels(\mboxe) \to \powerset{D(m)}$,
%\{ r \mapsto v \;|\; r \in \rels(\mboxe), v \subseteq D(r) \} \enspace.
%\]
mapping each $r \in \rels(\mboxe)$ to $v \subseteq D(r)$.
%%%% 2.3.17 SHARON shortened since err is introduced in network semantics
%In addition, %we include in
%$\Sigma[\mboxe]$ includes a unique \emph{error} state, denoted $\err$, reached when executing an \textbf{abort} action.
%In addition, $\RState[\mboxe]$ includes the error state $\err$.
In addition, we introduce a unique \emph{error} middlebox state, denoted $\err$.
We assume that $\err \in \RState[\mboxe]$ for every middlebox $\mboxe$.

\ifexamples
\cbstart
\begin{example}
In the running example, the corresponding definitions for the firewall middleboxes are
as follows:
\[
\begin{array}{l}
\rels(\fw_1)=\rels(\fw_2)\\
D(\trusted)=D(\requested)=H\\
D(\fw_1)=D(\fw_2)=H\\
\RState[\fw_1]=\RState[\fw_2]= \{\err\} \cup \{\requested, \trusted\}\to \powerset{H}
\end{array}
\]

An example state for $\fw_1$ is $[\requested\mapsto\{h_2\}, \trusted\mapsto\emptyset]$.
\qed
\end{example}
\fi

\para{Middlebox Transitions}
Middlebox transitions have the form $$\TrMRel{(p, c)/(p_i, c_i)_{i=1..k}}
\subseteq \RState[m] \times \RState[m]$$ where $(p,c)$ denotes packet-channel at
the input, and $(p_i, c_i)_{i=1..k}$ is the sequence of packet-channel pairs
that the middlebox outputs.

For example, for $s \eqdef [\requested \mapsto\emptyset, \trusted\mapsto\emptyset]$,
the guarded command corresponding to the internal port of the firewall middlebox (\figref{session-fw-code}) induces a transition
$s \TrMRel{((h_1,h_2,0),\overset\rightarrow{c_{in}})/((h_1,h_2,0),\overset\rightarrow{c_{out}})} s'$ where
$s' \eqdef [\requested\mapsto\{h_2\}, \trusted\mapsto\emptyset]$.

\textbf{abort} commands induce transitions to the $\err$ state.

%the input and output states are defined as follows:
%\[
%\begin{array}{l}
%s \eqdef [\requested\mapsto\emptyset, \trusted\mapsto\emptyset] \\
%s' \eqdef $[$\requested\mapsto\{h_2\}, \trusted\mapsto\emptyset$]$ \enspace.
%\end{array}
%\]
%
\ifexamples
The transition is achieved by a sequence of internal transitions which:
(1) choose the guarded command corresponding to the input port $\overset\rightarrow{c_{in}}$,
(2) read the input packet $(h_1,h_2,0)$ into the variable \texttt{p},
(3) choose the second guarded command,
(4) evaluate the expression ``\texttt{p.type=0}'' to \True,
(5) output the packet $(h_1,h_2,0)$ to the output port $\overset\rightarrow{c_{out}}$, and
(6) update the \requested\ relation by adding $h_2$.
\fi
The formal definition of the middlebox transitions appears in
\iflong \appref{amdl-semantics}\else the supplementary materials\fi.

\ignore{
\else
\begin{example}
\label{examp:rel-transition}
The first action in \tabref{CartesianExplicit} is due to the transition $s \TrMRel{((h_1,h_2,0),\overset\rightarrow{e_1})/((h_1,h_2,0),\overset\rightarrow{e_2})} s'$ of $\fw_1$ where the input and output states are defined as follows:
\[
\begin{array}{l}
s \eqdef [\requested\mapsto\emptyset, \trusted\mapsto\emptyset] \\
s' \eqdef $[$\requested\mapsto\{h_2\}, \trusted\mapsto\emptyset$]$ \enspace.
\end{array}
\]
The transition is achieved by a sequence of internal transitions which:
(1) choose the guarded command corresponding to the input port $\overset\rightarrow{e_2}$,
(2) read the input packet $(h_1,h_2,0)$ into the variable \texttt{p},
(3) choose the second guarded command,
(4) evaluate the expression ``\texttt{p.type=0}'' to \True,
(5) output the packet $(h_1,h_2,0)$ to the output port $\overset\rightarrow{e_3}$, and
(6) update the \requested\ relation by adding $h_2$.
\qed
\end{example}
\cbend
}

%\TODO{is this a single state? or the set of states? I don't understand this notation. Later it is referred to as the set of states}

%Let $\Sc$ be the set of declared relations in the network, each with its own sorts.
%Formally, we define $D_{\Omega}$ to denote the possible valuations of all the declared
%relations to be the union of the domains of all relations in the
%network, where $D_{S_i(R)}$ is the domain of the sort of the $i$-th
%dimension of relation $R$, i.e.,
%\[
%D_{\Omega} =\bigcup\limits_{R \in \Sc} \prod\limits_{i=1}^{arity(R)} D_{S_i(R)}
%\]
%The set of network \emph{states} is defined by $\Sigma = M\to \Sc \to D_\Omega$ where $M$ is the set of middleboxes in the
%network.

%\para{Middlebox Safety Condition}
%A middlebox $m \in M$ defines a \textbf{safety condition} (via the \textbf{requires} clause) $\msafe{m} \subseteq P \times \Sigma[m]$
%where for packet $p \in P$ and $\sigma[m] \in \Sigma[m]$, the program condition in $m$ is evaluated to true in
%$\sigma[m]$ on $p$.

\ignore{
%%% 2.3.17 REMOVED since appears in section on network semantics
\para{Middlebox Transitions}
Formally, a middlebox $m$ defines a transition relation $\TrM{(p, c)/(p_i, c_i)_{i=1..k}} \subseteq \RState[m] \times \RState[m]$
where the packet $p$ is read from channel $c$ when the middlebox is in the source state and sends
packets $p_i$ on channels $c_i$ moving to the target state.
%
%Every action $a$ in $m \in M$ defines an \textbf{action state-update} $\mupdate{a} \subseteq P \times \Sigma[m] \times \Sigma[m]$ where for packet $p \in P$ and $\sigma[m], {\sigma[m]}' \in \Sigma[m]$, the program condition in $m$ is evaluated to true in $\sigma[m]$ on $p$, the precondition of $a$ holds in $\sigma^m$ on $p$, and ${\sigma^m}'$ is obtained from
%$\sigma$ by applying the updates in $a$.
%%
%Similarly, $a$ defines an \textbf{action-output} $\mout{a} \subseteq P \times \Sigma[m] \times P$
%where for packet $p,p' \in P$ and $\sigma[m] \in \Sigma[m]$ the program condition in $m$ is evaluated to true in
%$\sigma[m]$ on $p$, the precondition of $a$ holds in $\sigma^m$ on $p$, and $p'$ is obtained from
%$p$ using the appropriate output defined in $a$.
%%
%Notice that these definitions are all local to a single middlebox---the precondition depends only on the current middlebox state and the update only changes the state of the middlebox it applies to.
%
%\TODO{define it as a transition relation where transitions are labelled both by $p,e$ of the input and the set of output}
}

\subsection{Middlebox Packet Effect Semantics} \label{sec:packet-sem}

\cbstart
We now present a semantics that is equivalent to the relation effect semantics.
The semantics is based on an alternative (yet isomorphic) representation of middlebox states that
reveals a loose coupling between the parts of the state that are relevant for different packets.
This loose coupling then facilitates a Cartesian abstraction that abstracts away correlations between packets in the same state.
\cbend

\subsubsection{Packet Effect Representation of Middlebox State}

Recall that in \secref{syntax} we restrict the values that can be used in a
middlebox program to either constants or the values of fields of the currently
processed packet. We do not allow extracting tuples from the relation (e.g., by
having a \texttt{get} command, or by iterating over the contents of the
relation). Instead, we limit the interaction with the relation to checking
whether a tuple (that consists of packet fields or constants) exists in the
relation. Consequently, instead of storing the contents of all relations, the
state of the middlebox can be represented by mapping all potential packets in
the network to their effect on the middlebox. Specifically, we map each packet
and membership query in the program to whether that membership query will be
evaluated to $\True$ when the program is executed on that packet.

%\TODO{[7] SH: this should probably be rewritten. I think we can use this representation anyway. Only locality depends on AMDL's restrictions}

%The \emph{Packet State} of a middlebox would then be the following function: $S^p: P\to Q\to \{T,F\}$, where $P$ is the set of packets in the network, and $Q$ is the set of membership queries in all instances of the middlebox programs in the given network topology (i.e. if there are two instances of the same middlebox program, each will be represented by a unique element in $Q$).

For every middlebox $m$, we denote by $Q(m)$ the set %of instances
of membership queries in $m$'s program.
(We need not distinguish between different instances of the same query.)
For example, in \figref{session-fw-code}, $Q(\fw) = \{\texttt{p.dst in trusted},\
\texttt{p.src in trusted},\\
\texttt{p.src in requested}\}$.
%If there are two instances of the same membership query, each is represented by a unique element in $Q(m)$.
%
\ignore{
Each membership query $q \in Q(m)$ is of the form $\overline{a} ~ \bin ~ \text{r}$. \cbdelete We denote the relation, r, used in a query by $\rel(q)$ and denote the tuple of atoms $\overline{a}$ by $\atoms(q)$.
Given a packet $p \in \packets$, we denote by $\atoms(q)(p)$ the result of substituting each field name in $\atoms(q)$ by its value in $p$.
Namely, $\atoms(q)(p) \in D(\rel(q))$.
%Each element in the tuple $d(q)$ is either a constant or a field name. We denote by $f(d(q))$ the subsequence of $d(q)$ that consists of the field names only.
%\TODO{update based on syntax}
%We denote by $Q$ the union of $Q(m)$ over all middleboxes.
%(i.e. if there are two instances of the same middlebox program, each will be represented by a unique element in $Q$)

\cbstart
\begin{example}
In \figref{session-fw-code}, we have the following:
\[
\begin{array}{l}
Q(\fw_1)= \{\texttt{p.dst in trusted},\
\texttt{p.src in trusted},\
\texttt{p.src in requested}\}\\
\begin{array}{rcl}
\rel(\texttt{p.dst in trusted})&=&\trusted\\
\atoms(\texttt{p.dst in trusted})&=&\texttt{dst}
\end{array}
\end{array}
\]
Finally, $\atoms(\texttt{p.dst in trusted})((h_2,h_1,0))=h_1$.
\qed
\end{example}
\cbend
}

\cbstart
%\begin{definition}[Packet Effect State]
The \emph{packet effect state} of a middlebox $m$ is a function $s \in \PState[m] \eqdef \packets \to Q(m) \to \{\True,\False\}$,
%where $P$ is the set of packets in the network, and $Q(m)$ is the set of membership queries in $m$.
mapping each packet $\packet \in \packets$ to the % way in which it traverses $m$'s program, captured by the
evaluation of all queries of $m$ when $p$ is the input packet, thus capturing the way in which $\packet$ traverses $m$'s program.
We refer to $s(p) \in Q(m) \to \{\True,\False\}$ as the \emph{packet state} of packet $p$ in middlebox state $s$.
We extend $\PState[m]$ with an error state $\lambda p \in \packets. \ \err$, which is also denoted $\err$.
%\end{definition}
\ifexamples \tabref{PacketSpaceEnumeration} shows the sets of (accumulated) packet states for each firewall.\cbend \fi

%\TODO{also define initial state (used in the parametric network semantics) -- easier to define after we define the bisimulation since then we can use $\PState(\sigma_I)$ to define it.
%It wasn't define for the relation effect semantics as well}

%\para{Initial state} The initial state $\sigma_I^p(m)$ of $m$ in the packet state representation

\subsubsection{Middlebox Transition Relation in the Packet Space}
\label{sec:packet-space-concrete-transformer}

%\TODO{cleanup the way we talk about ports/channels/ingress and egress channels}
%To complete the picture of the packet space representation of middleboxes, we describe the semantics of middlebox programs in the packet space. %middlebox transitions in the packet space.
The semantics of middlebox $m$ in the packet space is defined via a transition relation $\TrMPackM{(p, c)/(p_i, c_i)_{i=1..k}} \subseteq \PState[m] \times \PState[m]$.
When $m$ is clear, we omit it from the notation.
%\cbstart
A transition $\tilde{s} \TrMPack{(p, c)/(p_i, c_i)_{i=1..k}} \tilde{s}'$ exists if (one of) the sequence of operations applied on $\tilde{s}$ when packet $p$ arrives on channel $c$ outputs $(p_i, c_i)_{i=1..k}$ and leads to $\tilde{s}'$.

The semantics of operations is defined similarly to the ``standard'' relation effect semantics.
The semantics of error and output actions (that do not change the middlebox state) is straightforward.
Next, we explain the semantics of the operations that depend on or change the middlebox state --- membership queries and relation updates.
%all operations that do not depend on the middlebox state, nor change it, remains as before.
%This includes error and output actions.
%Recall that middlebox programs allow only two kinds of operations that may change the network state, namely sending or receiving packets, and updating the membership state of some value in a relation.
%The semantics of the operations that depend on or change the middlebox state --- membership queries and relation updates is defined as follows.
%, needs to take into account not only the middlebox state $s$ but the intermediate state $\tilde{s}$ of the middlebox, which is the state that results from executing all previous relation updates.

Consider a membership query $q$. Let $\tilde{s}$ be the middlebox state before
evaluating $q$, i.e., $\tilde{s}$ is the state that results from executing all
previous relation updates, and let $p$ be the packet that invoked the middlebox
transition. Then $q$ is evaluated to $\tilde{s}(p)(q)$.
%A membership query $q$ in the middlebox program is evaluated to
%$\tilde{s}(p)(q)$ when the middlebox is in the intermediate state $\tilde{s}$
%($p$ is the packet that invoked the middlebox transition).

\ignore{
Finally, we describe the semantics of the relation update operation.
Recall that middlebox programs allow only one kind of operation that may change the middlebox state, namely updating the membership state of some value in a relation.
We will now present the concrete semantics of such operations. % that add a value to a relation. (Removal of a value is symmetrical.)
}

%
%Since packet send and receive, as well as precondition evaluation, are trivial, they will be omitted. We will only present the concrete semantics of operations that add a value to a relation. (Removal of a value is symmetrical.)

%$r(\overline{\gV{atom}})~ \textbf{:=} ~\gV{cond}$

\ignore{
Consider the operation \verb|r.insert(|$\overline{d}$\verb|)|, and let the packet the middlebox program is operating on be $p$.
%Each element in the tuple $\overline{d}$ is either a constant or a field name.
}

Next, consider a relation update.
A relation update $\text{r}(\overline{a})~ \textbf{:=} \textit{cond}$ updates the packet states of all packets that are affected by the operation. % in $\tilde{s}$. % (the intermediate state of $m$ right before executing the operation).
%Consider the operation $\text{r}(\overline{d})~ \textbf{:=} \textit{cond}$, and let the packet the middlebox program is operating on be $p$.
This is done as follows.
As before, let $\tilde{s}$ be the intermediate state of $m$ right before executing the operation, and let $p$ be the packet that the middlebox program is operating on.
Consider the case where $\textit{cond}$ evaluates to $\True$ in $\tilde{s}$, corresponding to addition of a value. (Removal of a value is symmetric.)
%Each element in the tuple $\overline{d}$ is either a constant or a field name.
We denote by $\overline{a}(p)$ the result of substituting each field name in $\overline{a}$ by its value in $p$. % -- this is the value being added to $\text{r}$.
%For example, $\texttt{dst}((h_1,h_2,0))=h_2$.
That is, $\overline{a}(p) \in D(\text{r})$ is the value being added to $\text{r}$. %(Recall that $p$ is the packet the middlebox program is operating on.)
%We denote by $f(\overline{d})$ the subsequence of $\overline{d}$ that consists of the field names only.
%In addition, for a packet $p$ and a vector of field names $\overline{f}$, we denote by $\overline{f}(p)$ the vector consisting of the values in $p$ of the fields $f \in \overline{f}$.
%\ignore{Let $\tilde{s}\in \PState[m]$ be an intermediate state of $m$ in the packet space, right before $\text{r}(\overline{d})~ \textbf{:=} \textit{cond}$ is executed.}
This addition may affect the value of membership queries $q \in Q(m)$ with $\rel(q) = \text{r}$ (querying the same relation $\text{r}$) for other packets $\tilde{p}$ as well, in case that $\atoms(q)(\tilde{p})$, i.e., the value being queried on $\tilde{p}$, is the same as the value $\overline{a}(p)$ being added to $\text{r}$.
%Similarly, for a membership query $q \in Q(m)$ of the form $\overline{a} ~ \bin ~ \text{r}$, we denote the relation, r, used in a query by $\rel(q)$ and denote the tuple of atoms $\overline{a}$ by $\atoms(q)$.
%For $\tilde{p}\in\packets$, We denote by $\atoms(q)(\tilde{p})$ the result of substituting each field name in $\atoms(q)$ by its value in $\tilde{p}$.
%Namely, $\atoms(q)(\tilde{p}) \in D(\rel(q))$.
Therefore, the intermediate state obtained after the relation update operation has been applied is
\begin{equation*}
  \tilde{s}'=\lambda \tilde{p} \in \packets.\ \lambda q \in Q(m).\ \\
  \begin{cases}
    \True, & \text{if $\rel(q)=$ r $\land$} %\\
    %&
    \atoms(q)(\tilde{p}) = \overline{a}(p).\\
    \tilde{s}(\tilde{p})(q), & \text{otherwise}.
  \end{cases}
\end{equation*}
%\begin{equation*}
%  s'=\lambda p \in P.\ \lambda q \in Q(m).\ \\
%  \begin{cases}
%    T, & \text{if $r(q)=$ r $\land$}\\
%    &\overline{d}[f(\overline{d})(p_o)/f(\overline{d})]=\\
%    &d(q)[f(d(q))(p)/f(d(q))].\\
%    s(p)(q), & \text{otherwise}.
%  \end{cases}
%\end{equation*}
%\begin{equation*}
%  s'=\lambda p \in P.\ \lambda q \in Q(m).\ \begin{cases}
%    T, & \text{if q.relation=r $\land$}\\
%    &\overline{d}[\text{$p_o$.fields}/\text{fields}]=\\
%    &\text{q.value}[\text{p.fields}/\text{fields}].\\
%    s(p)(q), & \text{otherwise}.
%  \end{cases}
%\end{equation*}
Namely, the operation updates to $\True$ the value of queries that coincide with the tuple of elements inserted to the relation.

\ifexamples
\begin{example}
\label{examp:packet-transition}
Consider the packet effect state
$\tilde{s}\eqdef\lambda p.\ \lambda q. \False \in \PState[\fw_1]$, which is the initial state for $\fw_1$ in \tabref{PacketSpaceEnumeration}.
Upon reading the packet $(h_1, h_2, 0)$ from channel $\overset\rightarrow{e_2}$,
the middlebox performs the following sequence of internal transitions:
(1) choosing the guarded command corresponding to the input port $\overset\rightarrow{e_2}$,
(2) reading the input packet $(h_1,h_2,0)$ into the variable \texttt{p},
(3) choosing the second guarded command,
(4) evaluating the expression ``\texttt{p.type=0}'' to \True,
(5) outputting the packet $(h_1,h_2,0)$ to the output port $\overset\rightarrow{e_2}$, and
(6) evaluating the command $\texttt{requested(p.dst) := true}$, which results in
updating the state to $\tilde{s}'\eqdef\lambda \tilde{p}.\ \lambda q.\ \left\{
                                                             \begin{array}{ll}
                                                               \True, & \text{if }\rel(q)=\requested \land \atoms(q)(\tilde{p})=h_2\hbox{;} \\
                                                               \False, & \hbox{else.}
                                                             \end{array}
                                                           \right.$.

That is,
$\tilde{s}'=\left[
\begin{array}{l}
p_{(1,2,0)} \mapsto (F,F,F) \\
p_{(1,2,1)} \mapsto (F,F,F) \\
p_{(1,2,2)} \mapsto (F,F,F) \\
p_{(2,1,0)} \mapsto (F,F,T) \\
p_{(2,1,1)} \mapsto (F,F,T) \\
p_{(2,1,2)} \mapsto (F,F,T) \\
\end{array}
\right]$,
using the shorthand notations of \tabref{PacketSpaceEnumeration}.
\qed
\end{example}

\else
\begin{example}
\label{examp:packet-transition}
Consider the packet effect state
$\tilde{s}\eqdef\lambda p.\ \lambda q. \False \in \PState[\fw]$ of the firewall (\figref{session-fw-code}), where $q$ ranges over the three membership queries in the code.
Upon reading the packet $(h_1, h_2, 0)$ from an internal port,
the middlebox performs a sequence of internal transitions which includes evaluating the expression ``\texttt{p.type=0}'' to \True,
outputting the packet $(h_1,h_2,0)$ to the output port, and executing the command $\texttt{requested(p.dst) := true}$, which results in
updating the state to:

$\tilde{s}'\eqdef\lambda \tilde{p}.\ \lambda q.\ \left\{
                                                             \begin{array}{ll}
                                                               \True, & \text{if }\rel(q)=\requested \land \atoms(q)(\tilde{p})=h_2\hbox{} \\
                                                               \False, & \hbox{otherwise.}
                                                             \end{array}
                                                           \right.$

That is,
$\tilde{s}'((h_2,*,*))(\texttt{p.src in requested}) = \True$ and all the other values in $\tilde{s}'$ remain $\False$ as before.
Therefore, $\tilde{s} \TrMPack{((h_1,h_2,0),\overset\rightarrow{c_{in}})/((h_1,h_2,0),\overset\rightarrow{c_{out}})} \tilde{s}'$.
\qed
\end{example}
\fi

%\TODO{[8] SH: consider changing the notation $a(q)(p)$ to $\BBsub{a(q)}{p}$}

\ignore{
\para{Locality of middlebox transitions}
Note that the update of the packet state, $\tilde{s}(\tilde{p})$, of each packet $\tilde{p}$, performed by executing an operation $\text{r}(\overline{a})~ \textbf{:=} \textit{cond}$,
depends only on $\tilde{s}(\tilde{p})$, the input channel $c$, the input packet $p$  and $\tilde{s}(p)$ which determines the value of queries.
It is completely independent of the state of other packets. This means that the state updates are local.
Since, in addition, the execution path of the middlebox given input packet $p$ depends only on the packet state $\tilde{s}(p)$ of $p$, this locality extends to the entire middlebox program as well. Formally:

\begin{definition}
The transition relation $\TrMPack{(p, c)/o} \subseteq \PState[m] \times \PState[m]$ is \emph{local} if there exist
$\postLS: (Q(m) \to \{\True,\False\}) \times \packets \times E \times \packets \times (Q(m) \to \{\True,\False\}) \times {\N}^{*} \to (Q(m) \to \{\True,\False\})$
and $\postLP: (Q(m) \to \{\True,\False\}) \times \packets \times E \times {\N}^{*} \to (\packets\times E)^{*}$  such that
$s \TrMPack{(p, c)/o} s'$ iff there exists $\bar{n} \in {\N}^{*}$ such that
$s' =  \lambda \tilde{p}.\, postLS(s(p),p,e,\tilde{p},s(\tilde{p}),\bar{n})$ and $o = postLP(s(p),p,e,\bar{n})$.
%$post(s, (p,c)) = \langle \lambda p'.\, postLS(s(p'),s(p),p,e), postLP(s(p),p,e) \rangle$.
%$\PState[m] = \packets \to Q(m) \to \{\True,\False\}$,
\end{definition}

The $\postLS$ operator receives as input the packet state $s(p)$ of the input packet $p$ arriving on channel $c$, as well as the packet state $s(\tilde{p})$ of an arbitrary packet $\tilde{p}$, and returns the updated packet state $s'(\tilde{p})$ of $\tilde{p}$.
The $\postLP$ operator receives the packet state $s(p)$ of the input packet $p$ arriving on channel $c$, and returns the sequence of output packets.
To account for nondeterminism, both $\postLS$ and $\postLP$ take as a parameter a vector $\bar{n} \in {\N}^{*}$ that indicates which option is taken whenever there is a nondeterministic choice.
The locality of AMDL programs, ensures that both $\postLS$ and $\postLP$ can be computed in time linear in the size of the middlebox program.
This property will be important later to efficiently compute an abstract transformer (\secref{cartesian-packet}).
}

\ignore{
In order to define locality, we first express the transition relation of middlebox $m$ via a \emph{deterministic} $\post$ operator which receives the middlebox state, an input packet and an input channel, and computes the new state and a sequence of output packets. To account for nondeterminism, $\post$ takes as a parameter a vector $\bar{n} \in {\N}^{*}$ that indicates which option is taken whenever there is a nondeterministic choice.

\begin{definition}
The operator $\post: \PState[m] \times \packets \times E \times {\N}^{*} \to \PState[m] \times (\packets \times E)^{*}$ is \emph{local} if there exist
$\postLS: (Q(m) \to \{\True,\False\}) \times \packets \times E \times \packets \times (Q(m) \to \{\True,\False\}) \times {\N}^{*} \to (Q(m) \to \{\True,\False\})$
and $\postLP: (Q(m) \to \{\True,\False\}) \times \packets \times E \times {\N}^{*} \to (\packets\times E)^{*}$  such that
$\post(s, (p,c), \bar{n}) = \langle \lambda \tilde{p}.\, postLS(s(p),p,e,\tilde{p},s(\tilde{p}),\bar{n}), postLP(s(p),p,e,\bar{n}) \rangle$.
%$post(s, (p,c)) = \langle \lambda p'.\, postLS(s(p'),s(p),p,e), postLP(s(p),p,e) \rangle$.
%$\PState[m] = \packets \to Q(m) \to \{\True,\False\}$,
\end{definition}

The $\postLS$ operator receives as input the packet state $s(p)$ of the input packet $p$ arriving on channel $c$, as well as the packet state $s(\tilde{p})$ of an arbitrary packet $\tilde{p}$, and returns the updated packet state $s'(\tilde{p})$ of $\tilde{p}$.
The $\postLP$ operator receives the packet state $s(p)$ of the input packet $p$ arriving on channel $c$, and returns the sequence of output packets.
The locality of AMDL programs, ensures that both $\postLS$ and $\postLP$ can be computed in time linear in the size of the middlebox program.
This property will be important later to efficiently compute an abstract transformer (\secref{cartesian-packet}).
}

\subsection{Bisimulation of Packet Effect Semantics and Relation Effect Semantics}
\label{sec:bisim}
We continue by showing that the transition systems defining the semantics of middleboxes in the packet effect and in the relation effect representations are bisimilar.

To do so, we first define a mapping $\ps \colon \RState[m] \to \PState[m]$ from the relation state representation to the packet effect state representation.
Recall that the relation state representation of middlebox states is $s \in \RState[m]\eqdef \rels(\mboxe) \to \pow{D(m)}$.
Given a state $s \in \RState[m]$, $\ps$ maps it to the packet effect state $s^\Pack$ defined as follows:
\[
s^\Pack \eqdef \lambda \tilde{p} \in \packets.\ \lambda q \in Q(m). \
\atoms(q)(\tilde{p}) \in s(\rel(q)).
% s^p(\tilde{p})(q) = T \Leftrightarrow  d(q)(\tilde{p}) \in s(r(q)).
\]
That is, for every input packet $\tilde{p}$, the value in $s^\Pack$ of the query $q \in Q(m)$
is equal to the evaluation of the same query in $s$ based on an input packet $\tilde{p}$.

%\TODO{in order to formally prove bisimulation we need to provide the semantics of a middlebox via a transition system}

%Having defined the packet state semantics of middleboxes, the network semantics is adapted accordingly.
%Let $\PState[M] = \bigcup\limits_{m\in M}\PState[m]$ denote the
%set of packet states of all middleboxes in the network.
%Network configurations in the packet state representation are elements of $\PState = (M\to \PState[M])\times(E \to P^{*})$.
%The network transitions in the ordered, unordered and reverting semantics are defined as before.

\ignore{
\para{Relation Space to Packet Space}
First, we show how to obtain a state represented in the packet space from a state represented in the relation space to.

Recall that $S: R\to D_R \to \{T,F\}$ is the relation space representation of a middlebox state. Let $s\in S$ be the state of a middlebox. Then $s'=\lambda p\in \packets.\lambda q \in Q. s(q.relation)(q.value[p.fields/fields])$ is the equivalent packet space state. We denote this state translation by $s \rightarrow_p s'$.
}

\begin{definition}[Bisimulation Relation]
For a middlebox $m$, we define the relation $\sim_m \subseteq \RState[m] \times \PState[m]$ as the set of all
pairs $(s, s^p)$ such that  %$s \in \PState$, $s^p \in \PState[m]$ and
$s = s^p = \err$ or $\ps(s)  = s^p$.
\end{definition}
%for every $\tilde{p} \in P$ and $q \in Q(m)$,
%\[
%s^p(\tilde{p})(q) = T \Leftrightarrow  d(q)(\tilde{p}) \in s(r(q)).
%\]
%$s^p=\lambda \tilde{p}\in P.\ \lambda q \in Q. \ s(q.relation)(q.value[\tilde{p}.fields/fields])$.

%Let $s\in S$ and $\hat{s} \in S_p$. We define the relation $\sim$ as follows: $s \sim \hat{s}$ iff $s\rightarrow_p \hat{s}$.

%The relation $\sim_m$ is a bisimulation relation over the corresponding transition systems.

\begin{lemma}
\label{lem:bisimulation}
Let $s\in \RState[m]$ and $\tilde{s} \in \PState[m]$ and $s \sim_m \tilde{s}$. Then the following holds:
\begin{itemize}
\item For every state $s' \in \RState[m]$, if  $s \TrMRel{(p, c)/o} s'$ then there exists a state $\tilde{s}' \in \PState[m]$ s.t. $\tilde{s}  \TrMPack{(p, c)/o} \tilde{s}'$ and $s' \sim_m \tilde{s}'$, and
\item For every state $\tilde{s}' \in \PState[m]$ if $\hat{s} \TrMPack{(p, c)/o} \tilde{s}'$ then there exists a state $s'\in \RState[m]$ s.t. $s \TrMRel{(p, c)/o} s'$ and $s' \sim_m \tilde{s}'$.
\end{itemize}
%For every message $p$, if there exists a state $s'\in S$ s.t. $s \rightarrow^p s'$ then there exists a state $\hat{s}' \in S_p$ s.t. $\hat{s} \rightarrow^p \hat{s}'$ and $s' \sim \hat{s}'$.
%
%And for every message $p$, if there exists a state $\hat{s}' \in S_p$ s.t. $\hat{s} \rightarrow^p \hat{s}'$ then there exists a state $s'\in S$ s.t. $s \rightarrow^p s'$ and $s' \sim \hat{s}'$.
\end{lemma}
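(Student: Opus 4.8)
The plan is to prove a single, stronger statement at the level of the \emph{internal} transitions that make up a middlebox step, from which both bullets follow at once. Recall that a transition $s \TrMRel{(p,c)/o} s'$ (resp. $\tilde{s} \TrMPack{(p,c)/o} \tilde{s}'$) is realized by a sequence of internal steps: selecting the guarded command for channel $c$, binding $p$, evaluating the (possibly nested) conditions to resolve the $\Box$-choices, and running the action --- a sequence of outputs, relation updates, and possibly $\textbf{abort}$. I would fix the same sequence of nondeterministic choices in both semantics and show, by induction on the number of internal steps already executed, the invariant that the intermediate relation state $s_i$ and intermediate packet state $\tilde{s}_i$ satisfy $\ps(s_i) = \tilde{s}_i$. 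The $\err$ case is immediate: $\textbf{abort}$ sends both semantics to their respective error states and $\err \sim_m \err$ by definition.

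The first key fact concerns membership queries and controls the execution path and the output. By the definition of $\ps$, whenever $\ps(s_i) = \tilde{s}_i$ the value of a query $q \in Q(m)$ on the input packet $p$ agrees in both states, since $\tilde{s}_i(p)(q) = (\atoms(q)(p) \in s_i(\rel(q)))$, which is exactly how $q$ is evaluated in the relation semantics at the intermediate state $s_i$. Consequently every condition $\gV{\cond}$ evaluates to the same truth value in both semantics, so the same $\Box$-branches are enabled and the emitted packet/channel pairs $c\ !\ \overline{a}$ coincide. This shows both that the output sequence $o$ is identical and that a matching internal step is available in the other semantics --- giving the ``there exists a transition'' content of each bullet.

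The second key fact is that $\ps$ commutes with a relation update, which is where the intermediate state actually changes. Consider an addition $\text{r}(\overline{a}) := \cond$ with $\cond$ true in $s_i$, so that in the relation semantics $s_{i+1} = s_i[\text{r} \mapsto s_i(\text{r}) \cup \{\overline{a}(p)\}]$. I would compute, for every $\tilde{p}$ and $q$,
\[
\ps(s_{i+1})(\tilde{p})(q) = \big(\atoms(q)(\tilde{p}) \in s_{i+1}(\rel(q))\big),
\]
and split on whether $\rel(q) = \text{r}$ and whether $\atoms(q)(\tilde{p}) = \overline{a}(p)$. When $\rel(q) = \text{r}$ and $\atoms(q)(\tilde{p}) = \overline{a}(p)$ this is $\True$; in all other cases it reduces to $\atoms(q)(\tilde{p}) \in s_i(\rel(q)) = \tilde{s}_i(\tilde{p})(q)$. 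This is exactly the packet-effect update $\tilde{s}_{i+1}$ defined in \secref{packet-space-concrete-transformer}, so $\ps(s_{i+1}) = \tilde{s}_{i+1}$; removal of a value is symmetric. The invariant is thus preserved across a relation update, and trivially across outputs and aborts.

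Putting these together, by induction the invariant $\ps(s_i) = \tilde{s}_i$ holds after every internal step, so in particular the final states satisfy $\ps(s') = \tilde{s}'$, i.e.\ $s' \sim_m \tilde{s}'$, while the outputs agree. Since the matching is between \emph{identical} sequences of internal choices and the enabled choices coincide at each step (first key fact), the construction runs equally well starting from a relation transition or from a packet transition, yielding both the forward and backward bullets. The main obstacle --- and the reason the induction is stated over internal steps rather than whole transitions --- is the bookkeeping of these intermediate states: a relation update performed early in an action can change the value seen by a query later in the same action, so the coupling $\ps(s_i) = \tilde{s}_i$ must be maintained through sequencing and nested conditionals, not merely at the endpoints.
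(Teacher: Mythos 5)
Your proposal is correct and follows essentially the same route as the paper's proof: the paper argues by induction on the derivation trees of the big-step semantics, with exactly your two key facts as the only nontrivial cases --- agreement of membership-query evaluation under $\ps(s)=\tilde{s}$ (which forces the same branch choices and outputs), and the commutation $\ps(s[r\mapsto s(r)\cup\{\overline{a}(p)\}])=\update(\tilde{s},r,\overline{a},\True)$ (and symmetrically for removal), established by the same case split on $\rel(q)=r$ and $\atoms(q)(\tilde{p})=\overline{a}(p)$. Your handling of $\err$ and of intermediate states through sequencing also matches the paper's treatment, so there is nothing further to add.
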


\ifexamples
\begin{example}
Considering \exref{rel-transition} and \exref{packet-transition}, we have that
$s \sim_m \tilde{s}$ and $s' \sim_m \tilde{s}'$ hold.
\end{example}
\cbend
\fi

\ignore{
\subsection{Back to Network Semantics}

\TODO{SH: what do we do with this subsection? We just want the notation. Everything else is already defined. Ideally, if the network semantics is parametric we can just minimize this entire subsection to a lemma (or corollary) that lifts the bisimulation to the network level (with every possible semantics.}

%\TODO{clean this in accordance with the abstraction section}
By plugging-in the two representations of middleboxes in the definition of the network semantics, we obtain two variants of the network semantics.
%packet state network semantics.

\TODO{SH: revise}
\para{Packet State Network Configurations}
The packet state network configurations are defined as $(\sigma, \pi) \in \PState = (M\to \PState[M])\times(E \to \packets^{*})$, where
$\PState[M] = \bigcup\limits_{m\in M}\PState[m]$ denotes the
set of middlebox states in the packet state representation of all middleboxes in the network.
Error configurations, denoted $\err$, are defined as in \defref{error-conf}.

\para{Initial Configuration}
The initial packet state configuration is $(\PState(\sigma_I),\lambda\,e\in E\,.\, \epsilon)$.

\para{Packet State Network Transitions}
The four semantics of network transitions presented in \secref{network-semantics} are naturally adapted to packet state configurations, the only difference is that the middlebox state transitions are over the packet space, as described in \secref{packet-space-concrete-transformer}. The collecting semantics are also adapted. Of special interest are the collecting packet state ordered semantics, denoted ${\BBsub{\Net}{pa}^{o}}$, and the collecting packet state unordered reverting semantics, denoted ${\BBsub{\Net}{pa}^{ur}}$. %\TODO{fix semantics notation}

%Of special interest are the packet state ordered transitions, denoted $\xRightarrow{}_o$, and the packet state unordered reverting transitions, denoted $\xRightarrow{}_{ur}$.

%We denote the collecting packet state ordered semantics by $\BB{\Net}{o}{pa}$ and the collecting packet state unordered reverting semantics by $\BB{\Net}{ur}{pa}$.

The bisimulation between middlebox representations can be lifted to a bisimulation between each relation state network semantics and the corresponding packet state network semantics.
Therefore, the following holds:
%We use \TODO{notation} to denote them.
%Next, we perform abstract interpretation on the packet state representation of networks.

\begin{lemma}
For every semantic identifier $i \in \{o,u,or,ur\}$, $\err \in {\BB{\Net}^{i}}$ if and only if $\err \in {\BBsub{\Net}{pa}^{i}}$.
\end{lemma}

\TODO{SH: if we lift the bisimulation relation to network states we can state a stronger lemma that says that the configurations in the collecting semantics are bisimilar. Then the above lemma can be a corollary.}
}

\subsection{Locality of Packet-Effect Middlebox Transitions}
\label{sec:local-sem}

In this section we present a locality property of the packet effect semantics that will allow us to efficiently compute an abstract transformer when applying a Cartesian abstraction. % in \secref{cartesian-packet}.
%The packet effect semantics in local in the sense that computing the effect of an input packet on the packet state of another packet does not depend on any other packet.
%
Namely, we observe that an execution of an operation $\text{r}(\overline{a})~ \textbf{:=} \textit{cond}$,
in the context of processing an input packet $p$,
potentially updates the packet states of all packets.
However, for each packet $\tilde{p}$, the updated packet state $\tilde{s}'(\tilde{p})$
depends only on its pre-state $\tilde{s}(\tilde{p})$, the input channel $c$, the input packet $p$,
and $\tilde{s}(p)$, which determines the value of queries;
it is completely independent of the packet states of all other packets.
Since, in addition, the execution path of the middlebox when processing input packet $p$ depends only on the packet state of $p$, this form of \emph{locality}, which we formalize next, extends to entire middlebox programs. % as well.

\ignore{
Recall that each middlebox $m$ is associated with a program, $\prog(m)$,
given by an abstract syntax tree (AST). An evaluation of a program $\prog(m)$, for a given input packet,
is performed by a single recursive traversal of the AST (as there are no looping constructs) where some of the
branches non-deterministically choose a child node.
We can make the evaluation deterministic by adding a vector $\bar{n} \in {\N}^{*}$ as a parameter,
to indicate the child nodes to be taken at non-deterministic branches.
The vector is taken from the finite set $\paths(m)\eqdef\{n_1\ldots\ n_k \mid k=|\prog(m)| \land 1\leq n_i \leq k\}$.
We write %$\Meval[m](\tilde{s},p, c,\bar{n})=(\tilde{s}',(p_i, c_i)_{i=1..k})$,
$\tilde{s} \TrMPackM{(p, c)/(p_i, c_i)_{i=1..k},\bar{n}} \tilde{s}'$
if $\tilde{s} \TrMPackM{(p, c)/(p_i, c_i)_{i=1..k}} \tilde{s}'$ is a transition corresponding to the evaluation
of $\prog(m)$ with the deterministic choices given by the vector $\bar{n}$.
Note that in \AMDL, the path in the AST taken when evaluating $\prog(m)$ on input packet $p$ depends only on $p$, the packet-state of $p$, and $\bar{n}$.

\cbstart
\begin{example}
In the previous example, the transition from $\tilde{s}$ to $\tilde{s}'$ follows the path $\bar{n}=1, 2$.\qed
\end{example}
}

%A transition $\tilde{s} \TrMPack{(p, c)/(p_i, c_i)_{i=1..k}, \bar{n}} \tilde{s}'$ holds if
%$\tilde{s} \TrMPack{(p, c)/(p_i, c_i)_{i=1..k}} \tilde{s}'$ is a transition following the deterministic
%choices given by the vector $\bar{n}$.

\begin{definition}[Substate]
Let $\tilde{s} \in P\rightarrow Q(m)\rightarrow\{\True,\False\}$ be a packet effect state. We denote by $\tilde{s}|_{\{p,\tilde{p}\}} \in \{p,\tilde{p}\}\rightarrow Q(m)\rightarrow\{\True,\False\}$ the \emph{substate}
obtained from $\tilde{s}$ by dropping all packet states other than those of $p$ and $\tilde{p}$.
Let $\PState[m,p,\tilde{p}] \eqdef \{p,\tilde{p}\}\rightarrow Q(m)\rightarrow\{\True,\False\}$ denote the set of
substates for %a pair of packets
$p$ and $\tilde{p}$.
\end{definition}

\ifexamples
\begin{example}
\label{examp:substate}
Consider the states $\tilde{s}$ and $\tilde{s}'$ used in \exref{packet-transition}
and the packets $p=(h_1,h_2,0)$ and $\tilde{p}=(h_2,h_1,0)$.
Then, using the shorthand notations of \tabref{PacketSpaceEnumeration}, we have the
following:
\[
\begin{array}{rcl}
\tilde{s}|_{\{p,\tilde{p}\}} &=& [p\mapsto(F,F,F), \tilde{p}\mapsto(F,F,F)]\\
\tilde{s}'|_{\{p,\tilde{p}\}} &=& [p\mapsto(F,F,F), \tilde{p}\mapsto(F,F,T)] \enspace.
\end{array}
\]
\qed
\end{example}
\fi

\begin{definition}[Substate transition relation]
We define the \emph{substate transition relation}\\
$\TrMPackSub{(p, c)/(p_i, c_i)_{i=1..k}} : \PState[m,p,\tilde{p}] \times \PState[m,p,\tilde{p}]$ as follows.
A \emph{substate transition}\\
 $\tilde{s}[p,\tilde{p}] \TrMPackSub{(p, c)/(p_i, c_i)_{i=1..k}} \tilde{s}[p,\tilde{p}]'$ holds if there exist $\tilde{s}$ and $\tilde{s}'$ such that
$\tilde{s}|_{[p,\tilde{p}]} =  \tilde{s}[p,\tilde{p}]$, $\tilde{s}'|_{[p,\tilde{p}]} = \tilde{s}[p,\tilde{p}]'$ and
$\tilde{s} \TrMPack{(p, c)/(p_i, c_i)_{i=1..k}} \tilde{s}'$.
 %for every packet state $\tilde{s}$ such that
%$\tilde{s}|_{\{p,\tilde{p}\}}=\tilde{s}[p,\tilde{p}]$,
%we have that $\tilde{s} \TrMPack{(p, c)/(p_i, c_i)_{i=1..k}, \bar{n}} \tilde{s}'$ and $\tilde{s}'|_{\{p,\tilde{p}\}}=\tilde{s}[p,\tilde{p}]'$.
\end{definition}

The locality of \AMDL\ programs manifests itself in the ability to compute the substate transition relation, $\TrMPackSub{(p, c)/(p_i, c_i)_{i=1..k}}$, directly from the code (without first computing the transition relation and then using projection). This property will be important later to efficiently compute a network-level abstract transformer (\secref{cartesian-packet}):
%in time linear in the size of the middlebox program. This is an even stronger notion of locality, which will be important later to efficiently compute a network-level abstract transformer (\secref{cartesian-packet}).

\begin{lemma}[2-Locality]
Given $\tilde{s}[p,\tilde{p}]$ and $\tilde{s}[p,\tilde{p}]'$, checking whether 
\begin{align*}
\tilde{s}[p,\tilde{p}] \TrMPackSub{(p, c)/(p_i, c_i)_{i=1..k}} \tilde{s}[p,\tilde{p}]'
\end{align*}
can be done in time linear in the size of the middlebox program.
\end{lemma}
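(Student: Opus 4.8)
The plan is to eliminate the existential quantifiers over full states $\tilde{s}$ and $\tilde{s}'$ in the definition of the substate transition, thereby reducing the check to a single traversal of the loop-free program of $m$ that tracks only the packet states of $p$ and $\tilde{p}$. The whole argument rests on making precise the per-packet independence already visible in the transformer of \secref{packet-space-concrete-transformer}, and then observing that \AMDL{} admits only linearly many execution paths.

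First I would exploit locality. The closed form for a relation update $\text{r}(\overline{a}) \textbf{:=} \cond$ given in \secref{packet-space-concrete-transformer} shows that the new packet state of an arbitrary packet $\tilde{p}$ depends only on $\tilde{s}(\tilde{p})$, on $\overline{a}(p)$ (fixed by the input packet $p$ and the operation), and on whether $\cond$ holds in the intermediate state, which is in turn read off $\tilde{s}(p)$; it is \emph{independent} of $\tilde{s}(\hat{p})$ for every $\hat{p} \notin \{p,\tilde{p}\}$. Composing updates along a straight-line $\action$ and prefixing the branch guards, the restriction to $\{p,\tilde{p}\}$ of any transition $\tilde{s} \TrMPack{(p, c)/(p_i, c_i)_{i=1..k}} \tilde{s}'$ becomes a function of $\tilde{s}|_{\{p,\tilde{p}\}}$ and the chosen path alone. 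This is exactly what makes the existentials vacuous: extending $\tilde{s}[p,\tilde{p}]$ to any full pre-state and running the same path yields a legal full transition whose target restricts to the same $\tilde{s}[p,\tilde{p}]'$, and conversely every full transition projects onto one obtainable this way. Hence the substate transition holds iff some execution path, run on $\tilde{s}[p,\tilde{p}]$, emits $(p_i, c_i)_{i=1..k}$ and lands in $\tilde{s}[p,\tilde{p}]'$.

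Next I would bound the number of paths and describe the check. Since \AMDL{} has no loops and the grammar forbids nesting a guarded command inside an $\action$, the only nondeterminism is branch selection within $\textbf{if}\ldots\textbf{fi}$; thus every execution amounts to choosing one leaf $\cond \Rightarrow \action$ of the guarded-command tree whose guard holds and executing its straight-line $\action$. The number of such leaves is at most $|\prog(m)|$, and distinct leaves occupy disjoint portions of the program text. The algorithm is then: resolve the input channel $c$ to its $\pblock$; enumerate the leaves of the selected guarded command; for each leaf, evaluate its guard on $\tilde{s}[p,\tilde{p}](p)$ using the stored query valuations, and if it holds, simulate $\action$ on a copy of $\tilde{s}[p,\tilde{p}]$, appending each $c \,!\, \overline{\atom}$ to the output sequence, mutating the two tracked packet states by the closed form above on each update, and sending the substate to $\err$ on $\textbf{abort}$; finally compare the accumulated output and substate against $(p_i, c_i)_{i=1..k}$ and $\tilde{s}[p,\tilde{p}]'$. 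Because the guards and actions of distinct leaves are disjoint and every per-operation step is constant time (only two packets are tracked), the total cost is linear in the size of the middlebox program, and one may stop at the first matching leaf.

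The crux I expect is the first step: rigorously arguing that the existential over full pre- and post-states in the definition of $\TrMPackSub{}$ is vacuous, i.e., that the behavior on $\{p,\tilde{p}\}$ is simultaneously \emph{faithful} (every full transition restricts to it) and \emph{realizable} (every locally computed restricted transition extends to a genuine full transition). Both directions reduce to the per-packet independence of the update formula; once that independence is established, the linear bound follows routinely from loop-freeness and the disjointness of the leaf subprograms.
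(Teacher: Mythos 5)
Your argument is correct and matches the paper's own justification, which is given only informally in the prose immediately preceding the lemma: the per-packet independence of the relation-update formula makes the existential over full states vacuous (any extension of the substate realizes the same restricted transition), and loop-freeness plus the disjointness of the guarded-command leaves yields a single linear pass over the program. The only nit is that each relation update may touch up to $|Q(m)|$ query entries of the two tracked packet states, so the "constant time per operation" claim is really "time proportional to the packet-state representation," which is the granularity the paper implicitly assumes.
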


\ignore{
\begin{example}
Continuing with the definitions in \exref{substate} and $\bar{n}=1,2$,
we have the following substate transition for $\fw_1$:
\[
\tilde{s}|_{\{p,\tilde{p}\}} \TrMPackSub{(p, \overset\rightarrow{e_2})/(p, \overset\rightarrow{e_3}), \bar{n}} \tilde{s}'|_{\{p,\tilde{p}\}} \enspace.
\]
Notice that in this case, since no query was made, the output substate is obtained by
propagating the packet state of $p$ and updating the packet state of $\tilde{p}$, independent of any other packet states.
\qed
\end{example}

\begin{definition}[2-Local Middlebox Semantics]
A middlebox semantics is \emph{2-local} if, for every packet $p$,
the following holds:
\[
\tilde{s} \TrMPack{(p, c)/(p_i, c_i)_{i=1..k}, \bar{n}} \tilde{s}' \;\;\Longleftrightarrow\;\;
\forall  \tilde{p}  \;.\; \tilde{s}|_{\{p,\tilde{p}\}} \TrMPackSub{(p, c)/(p_i, c_i)_{i=1..k}, \bar{n}} \tilde{s}'|_{\{p,\tilde{p}\}}
 \enspace.
\]
\end{definition}
\cbend

\begin{theorem}
The \AMDL\ packet effect semantics \iflong(\appref{amdl-semantics}) \fi is 2-local.
\end{theorem}

Furthermore, the substate transition relation, $\TrMPackSub{(p, c)/(p_i, c_i)_{i=1..k}, \bar{n}}$, can be computed directly from the code (without first computing the transition relation and then using projection)
in time linear in the size of the middlebox program. This is an even stronger notion of locality, which will be important later to efficiently compute a network-level abstract transformer (\secref{cartesian-packet}).
} 
\section{Network Semantics}
\label{sec:network-semantics}

This section defines the %concrete
semantics of stateful networks by defining the semantics of packet traversal over communication channels in the network, and the transitions between network configurations.
We first define a concrete semantics, followed by two relaxations: unordered semantics and reverting semantics.
These relaxations provide sufficient conditions for completeness of the abstract interpretation performed in %the next section.
\secref{abs-int}.

\figref{CombinedAbstractions} provides a high-level view of the different network semantics.

\para{Network Topology}
A \emph{network} $\Net$ is a finite bidirected\footnote{A \emph{bidirected
graph} is a directed graph in which every edge has a matching edge in the
opposite direction. i.e., $(u,v)\in E \iff (v,u)\in E$.} graph of \emph{hosts}
and \emph{middleboxes}, equipped with a \emph{packet domain}. Formally, $\Net =
(\hosts\cup \mboxes, E, \packets)$, where:

\begin{itemize}
\item $\packets$ is a set of packets.
\item $\hosts$ is a finite set of \emph{hosts}. A \emph{host} $h \in H$
consists of a unique identifier and a set of packets $\packets_h \subseteq
\packets$ that it can send.
\item $\mboxes$ is a finite set of \emph{middleboxes}. A middlebox $m\in
\mboxes$ is associated with a set of communication channels $C_m$.
\item $E \subseteq \{\gV{h,c_m,m},\gV{m,c_m,h}\mid h\in H ,m\in M, c_m\in C_m\}
\cup \{\gV{m_1,c_{m_1},c_{m_2},m_2}\mid m_1,m_2\in M, c_{m_1}\in
C_{m_1},c_{m_2}\in C_{m_2}\}$ is the set of directed communication channels in
the network, each connecting a communication channel $c_{m_1}\in C_{m_1}$ of
middlebox $m_1$ either to a host, or to a communication channel $c_{m_2}\in
C_{m_2}$ of middlebox $m_2$. For $e$ of the form $\gV{m,c_m,h}$ or
$\gV{m,c_{m},c_{m_2},m_2}$, we say that $e$ is an \emph{egress} channel of
middlebox $m$ connected to channel $c_m$ and an \emph{ingress} channel of host
$h$, respectively middlebox $m_2$, connected to channel $c_{m_2}$.

\end{itemize}

%\TODO{better word than connected?? }
%
%(Hosts also receive packets, but this is insignificant for the purpose of verification.)

% \TODO{cleanup the way we talk about ports/channels/ingress and egress channels}

%A network configuration is a pair $(C, \sigma)$ where $C$ is the state of the %communication channels and $\sigma$ is the network state.

\ignore{
\para{Reverting Middleboxes}
We propose two network semantics: the \emph{FIFO network semantics} (\secref{fifo-semantics}) preserves the packet transmission order, \emph{whereas unordered network semantics} (\secref{unordered-semantics}) does not.
In both semantics middleboxes may non-deterministically revert to their initial state following a packet processing event.
}

%%% 2.3.17 SHARON added
%The network semantics is parametric in the middlebox semantics defined in \Cref{sec:mid-sem}.
%For now, we view the semantics of a middlebox $\mboxe\in \mboxes$ as a transition system with a finite set of states $\Sigma[\mboxe]$, an initial state $\sigma_I(m) \in \Sigma[\mboxe]$ and a set of transitions $\TrM{(p, c)/(p_i, c_i)_{i=1..k}} \subseteq \Sigma[m] \times \Sigma[m]$, where $p$ denotes a packet that is read from channel $c$ when the middlebox is in the source state and $p_{1..k}$ denotes the packets that $m$ sends on channels $c_{1..k}$, respectively, when moving to the target state.

The network semantics is parametric in the middlebox semantics.  %defined in \Cref{sec:mid-sem}.
It considers the semantics of a middlebox $\mboxe\in \mboxes$ to be a transition system with a finite set of states $\Sigma[\mboxe]$, an initial state $\sigma_I(m) \in \Sigma[\mboxe]$ and a set of transitions $\TrM{(p, c)/(p_i, c_i)_{i=1..k}} \subseteq \Sigma[m] \times \Sigma[m]$. %, where $p$ denotes a packet that is read from channel $c$ when the middlebox is in the source state and $p_{1..k}$ denotes the packets that $m$ sends on channels $c_{1..k}$, respectively, when moving to the target state.
This can be realized with either the relation effect semantics or the packet effect semantics defined in \secref{relation-sem} and \secref{packet-sem}, respectively.

\ignore{
\para{Safety} Safety properties are expressed by middleboxes that might transition to an error state.
}
\ignore{
%%%% MOVED TO MIDDLEBOX SECTION
Formally, we introduce a unique \emph{error} middlebox state, denoted $\err$.
We assume that $\err \in \Sigma[\mboxe]$ for every middlebox $\mboxe$.
}

%\subsection{Concrete (FIFO) Network Semantics}
%\label{sec:fifo-semantics}
%
%We first consider the First In First Out (FIFO) network semantics, under which communication channels retain the order in which packets were sent.
%%We describe the semantics of networks by describing the semantics of packet traversal over communication channels in the network, and the transitions between network configurations.

\subsection{Concrete (Ordered) Network Configurations}

All variants of the network semantics defined in this section are defined over the same set of configurations.
Let $\Sigma[M] \eqdef \bigcup\limits_{m\in M}\Sigma[m]$ denote the
set of middlebox states of all middleboxes in a network.
An \emph{ordered network configuration} $(\sigma, \pi) \in \Sigma = (M\to \Sigma[M])\times(E \to \packets^{*})$
assigns middleboxes to their (local) middlebox states and
communication channels to sequences of packets.
The sequence of packets on each channel represents all packets sent from the source and not yet processed by the destination.

%We denote by $\Sigma^o_{\err} = \Sigma^o \cup \{\err\}$ the set of configurations extended with a unique \emph{error} configuration.

\para{Initial Configuration}
We denote the ordered initial configuration by $(\sigma_I,\lambda\,e\in E\,.\, \epsilon)$,
where $\sigma_I \colon M\to \Sigma[M]$ denotes the initial state of all middleboxes.

%\TODO{RM: why not extend \AMDL\ to allow defining this explicitly?}

\para{Error Configurations}
We say that a configuration is an \emph{error configuration} if any of its
middleboxes is in the error state.
%With slight abuse of notation, 
We denote all error configurations by $\err$.

%\begin{definition}[Error Configurations]
%\label{def:error-conf}
%We say that a configuration is an \emph{error configuration}
%if any of its middleboxes is in the error state.
%With slight abuse of notation, we denote all error configurations by $\err$.
%\end{definition}

\subsection{Concrete (FIFO) Network Semantics}
\label{sec:fifo-semantics}

We first consider the First-In-First-Out (FIFO) network semantics, under which communication channels retain the order in which packets were sent.
%
%We describe the semantics of networks by describing the semantics of packet traversal over communication channels in the network, and the transitions between network configurations.

\para{Ordered Network Transitions}
The network semantics is defined via \emph{middlebox transitions} and \emph{host transitions}.

A middlebox transition is $(\sigma,\pi) \TrN{p,e,m}{o} (\sigma',\pi')$ where
the following holds:
(i) $p$ is the \emph{first} packet on the channel $e \in E$,
(ii) the channel $e$ is an ingress channel of middlebox $m$ connected to channel $c \in C_m$, %\TODO{better word than "connected"}
%(iii) there exists a command $c$ in middlebox $m$ s.t. $\sembr{c}_{pre}$ holds in state $\sigma(m)$ for packet $p$,
(iii) $\sigma(m) \TrM{(p, c)/(p_i, c_i)_{i=1..k}} \sigma'(m)$, meaning that $\sigma'(m)$ is the result of updating $\sigma(m)$ according to the middlebox semantics,
%(iv) the safety condition $\msafe{m}$ of the middlebox $m$ holds for the middlebox state $\sigma(m)$ for packet $p$,
(iv) the channels $e_i$ are egress channels of middlebox $m$ connected to the channels $c_i \in C_m$, %\TODO{better word than "connected"}
(v) $\pi'$ is the result of removing packet $p$ from (the head of) channel $e$ and appending $p_i$ to the tails of the appropriate channels $e_i$, and
(vi) the states of all other middleboxes equal their states in $\sigma$.
%the result of $\sembr{c}_{out}$ to the tails of the appropriate channels, and
%is the result of updating $\sigma(m)$ with $\sembr{c}_{upd}$;
%%% folowing is for reverting:
%either the result of updating $\sigma(m)$ with $\sembr{c}_{upd}$ or it is the state
%$\sigma_I(m)$ where the middlebox $m$ is in its initial state---the state of all other middleboxes equals their state in $\sigma$.

%We extend the transition relation $\xrightarrow{p,e,m}_o$ by error transitions of the form $(\sigma,C) \xrightarrow{p,e,m}_o \err$ such that conditions (i)-(iii) hold, and
%the safety condition $\msafe{m}$ of the middlebox $m$ is violated for the middlebox state $\sigma(m)$ for packet $p$.

A host transition is $(\sigma,\pi) \TrN{h,e,p}{o} (\sigma,\pi')$ where one of the following holds:
\begin{description}
\item[Packet Production]
(i) the channel $e$ is an egress channel of host $h$,
(ii) $p\in P_h$ is a packet sent by $h$, and
(iii) $\pi'$ is the result of appending $p$ to the tail of $e$; or

\item[Packet Consumption]
(i) the channel $e$ is an ingress channel of host $h$,
(ii) $p$ is the first packet on the channel $e$, and
(iii) $\pi'$ is the result of removing $p$ from the head of $e$.
\end{description}

We denote the ordered transition relation obtained by the union of all middlebox and host transitions by $\TrN{}{o}$.
\iflonglong
\[
\TrN{}{o} =  \left( \bigcup_{p \in \packets,e \in E,m \in \mboxes} \TrN{p,e,m}{o} \right) \cup \left( \bigcup_{h \in H, p \in \packets, e \in E} \TrN{h,e,p}{o} \right).
\]
\fi
%The transition relation
It is naturally lifted to a concrete transformer $\Tr^o \colon \pow{\Sigma} \to \pow{\Sigma}$ defined as:
\[
\Tr^o(X) \eqdef \{ (\sigma', \pi') \mid (\sigma, \pi) \in X \land (\sigma, \pi) \TrN{}{o} (\sigma', \pi') \} \enspace.
\]

\para{Collecting Semantics} The ordered collecting semantics of a network $\Net$ is the set of configurations reachable from the initial configuration.
%To define the collecting semantics, we define a concrete transformer $\Tr^o: \pow{\Sigma^o_{\err}} \to \pow{\Sigma^o_{\err}}$ based on the transition relation:
%\[
%\begin{array}{l}
%\Tr^o(X) = \{ \omega \mid (\sigma, C) \xrightarrow{}_o \omega \land (\sigma, C) \in X \}
%%\land p \in \packets \land e \in E \land m \in \mboxes\}
%\end{array}
%\]
%\[
%\BB{\Net}^o \eqdef \bigcup_{i=1}^\infty (\Tr^o)^i (\sigma_I,\lambda\,e\in E\,.\, \epsilon) \enspace.
%\]
\[
\begin{array}{rclrcl}
\BB{\Net}^o &\eqdef& \lfp(\Tr^o)(\sigma_I,\lambda\,e\in E\,.\, \epsilon) %\\
            &=& \bigcup\limits_{i=1}^\infty (\Tr^o)^i (\sigma_I,\lambda\,e\in E\,.\, \epsilon) \enspace.
\end{array}
\]

%\TODO{define collecting state semantics (i.e., set of reachable states) here via the least fixpoint}

\begin{definition}[Safety Verification Problem]
%\TODO{move out the initial state}
%Denote $(C,\sigma) \xrightarrow_o (C',\sigma')$ denote any transition between configurations
%$(C,\sigma)$ and $(C',\sigma')$.
%
For a network $\Net$ and initial state $\sigma_I$ for the middleboxes, the safety verification problem is to determine whether an error configuration is reachable from the initial configuration. % $(\sigma_I,\lambda\,e\in E\,.\, \epsilon)$.
That is, whether $\err \in \BB{\Net}^o$.
\end{definition}

\begin{theorem}~\cite{velner2016some}
\label{thm:OrderedUndecidable}
The safety verification problem for ordered networks
is undecidable.
\end{theorem}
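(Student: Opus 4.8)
The plan is to establish undecidability by reducing from a known undecidable problem about a Turing-complete model of computation. Since the statement is cited from \cite{velner2016some}, I would reconstruct the standard reduction used in that line of work, namely a reduction from the halting problem of a two-counter (Minsky) machine, or equivalently from the reachability/coverability problem for a model that is known to be Turing powerful under FIFO communication. The key observation enabling such a reduction is that ordered (FIFO) channels between middleboxes behave like unbounded queues, and a middlebox equipped with finite control plus a FIFO channel it can both write to and read from is expressive enough to simulate a queue automaton (a finite-state machine with one unbounded FIFO queue), which is Turing complete.

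\medskip
First I would fix a two-counter machine $\mathcal{M}$ with counters $c_1,c_2$ and a distinguished halting instruction, and recall that determining whether $\mathcal{M}$ halts from the zero configuration is undecidable. Next I would construct a network $\Net_{\mathcal{M}}$ whose topology consists of a small fixed set of middleboxes connected by FIFO channels, together with a constant number of packet tags used to encode the machine's instruction counter, the two counter values (represented in unary by the multiplicity of ``counter packets'' circulating on dedicated channels), and control signals. Each \AMDL\ middlebox would use its finite relational state to hold the current program location of $\mathcal{M}$, and would react to an incoming packet by: decoding the current instruction, performing the corresponding increment/decrement/zero-test on the appropriate queue by emitting or consuming a counter packet, and forwarding a control token that advances the simulation. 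The zero-test is the delicate point, since a finite-state reader cannot directly observe the emptiness of a FIFO channel; the standard device is to circulate an end-of-round marker packet so that a test-for-zero can be implemented by checking whether the marker returns without an intervening counter packet. Finally, I would place a dedicated \emph{property middlebox} that executes \textbf{abort} exactly when the simulated machine reaches its halting instruction, so that $\err \in \BB{\Net_{\mathcal{M}}}^o$ holds if and only if $\mathcal{M}$ halts.

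\medskip
The correctness argument then has two directions. For soundness of the reduction I would exhibit, for each step of $\mathcal{M}$, a corresponding finite sequence of ordered network transitions (host production of the initial token, followed by middlebox transitions realizing one instruction), establishing that every halting run of $\mathcal{M}$ induces a run of $\Net_{\mathcal{M}}$ reaching an error configuration. For completeness I would argue the converse: because the FIFO discipline forces packets to be consumed in the exact order they were produced, and because the control middleboxes are deterministic given the protocol of markers and tokens, every reachable error configuration of $\Net_{\mathcal{M}}$ corresponds to a genuine halting computation of $\mathcal{M}$, i.e., the simulation is faithful and introduces no spurious halting. Combining the two directions yields that the safety verification problem decides the halting problem, which is a contradiction with decidability.

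\medskip
The main obstacle will be the faithful encoding of the counters and especially the zero-test under the constraints of \AMDL: middleboxes may only insert, remove, and membership-query finite relations, and cannot count. I expect the crux of the argument to lie in showing that FIFO channel contents can serve as the unbounded store (this is what makes the problem undecidable, as opposed to merely EXPSPACE-complete in the unordered case), and in verifying that the marker-based zero-test cannot be circumvented or corrupted by the nondeterministic interleaving of packet arrivals. Since the theorem is attributed to prior work, in the actual writeup I would most likely defer the detailed construction to \cite{velner2016some} and present only the reduction skeleton and the key encoding idea, rather than reproving it from scratch.
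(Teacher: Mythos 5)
Your reduction is sound in outline, but it takes a different route from the one the paper (following \cite{velner2016some}) actually relies on. The paper's argument, sketched in the appendix as the first step of the proof of \Cref{thm:reverting-unordered-conphard}(ii), is the \emph{channel machine} simulation: a single middlebox with a FIFO self-loop whose channel content directly stores the Turing machine's tape, with the finite control simulating the head by rotating the tape through the channel. This needs only one host and one middlebox, and it entirely avoids the two ingredients you identify as delicate: there is no unary counter encoding and no zero-test protocol, because the unbounded store is the tape itself rather than a pair of counters. Your two-counter route works in principle (it is the classic Brand--Zafiropulo style argument for communicating finite-state machines), but it imports an extra fragility specific to this paper's semantics: \AMDL\ middleboxes read nondeterministically from \emph{any} ingress channel with a pending packet, so if the two counters live on separate channels you must argue that an adversarial read order cannot corrupt the marker-based zero test, and you cannot simply re-enqueue an out-of-phase packet without destroying the FIFO invariant your completeness direction depends on. The single self-loop design sidesteps this entirely, which is presumably why the cited work uses it. Your closing instinct to defer the construction to \cite{velner2016some} matches what the paper itself does, since the theorem is stated there as an import rather than proved.
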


%Recent work has shown that the safety verification problem is undecidable~\cite{velner2016some}.
In this work, we tackle the undecidability of verification by developing a sound abstract interpretation that can be used to check the safety of networks.
Before doing so, we present two relaxed network semantics that motivate the abstractions we employ, and also provide sufficient conditions for their completeness.

\subsection{Unordered and Reverting Network Semantics}
\label{sec:unordered-semantics}

The ``unordered'' semantics allows channels to not preserve the packet transmission order. Namely, packets in the same channel may be processed in a different order than the order in which they were received. The ``reverting'' semantics allows middleboxes to revert to their initial state after every transition. Formally, these relaxed semantics extend the set of network transitions (and consequently, the transformer and the collecting semantics) with reordering transitions and reverting transitions, respectively.

A \emph{reordering transition} has the form $(\sigma,\pi) \TrN{e}{} (\sigma,\pi')$ where for the channel $e \in E$, $\pi'(e)$ is a permutation of $\pi(e)$ and for all other channels $e' \neq e$, $\pi'(e') = \pi(e')$.

A \emph{reverting transition} has the form $(\sigma,\pi) \TrN{m}{} (\sigma',\pi)$ where for the middlebox $m \in M$, $\sigma'(m)=\sigma_I(m)$ and for all other middleboxes $m' \neq m$, $\sigma'(m) = \sigma(m)$.

%We denote by $\TrN{}{u}$ the
The \emph{unordered network transitions} consist of the ordered transitions as well as the reordering transitions; % by $\TrN{}{or}$
the \emph{ordered reverting transitions}  consist of the ordered transitions and the reverting transitions; and %by $\TrN{}{ur}$
the \emph{unordered reverting transitions} consist of all of the above. We denote the corresponding collecting semantics by $\BB{\Net}^u$, $\BB{\Net}^{or}$ and $\BB{\Net}^{ur}$, respectively. Clearly,
\[
\begin{array}{ccc}
{\BB{\Net}^{o}} \subseteq {\BB{\Net}^{u}} \subseteq {\BB{\Net}^{ur}}
&
\text{ and }
&
{\BB{\Net}^{o}} \subseteq {\BB{\Net}^{or}} \subseteq {\BB{\Net}^{ur}}
\end{array}
\]

By plugging-in the two representations of middleboxes in the definition of the network semantics, we obtain two variants of the network semantics for each of the four variants considered so far.
In the sequel, we use a $pa$ subscript to refer to the packet effect semantics, and no subscript to refer to the relation effect semantics.
%packet state network semantics.
%
The bisimulation between middlebox representations is lifted to a bisimulation between each relation state network semantics and the corresponding packet state network semantics.
Therefore, the following holds:
%We use \TODO{notation} to denote them.
%Next, we perform abstract interpretation on the packet state representation of networks.

\begin{lemma} \label{lem:packet-effect-equiv}
For every semantic identifier $i \in \{o,u,or,ur\}$, $\err \in {\BB{\Net}^{i}}$ iff $\err \in {\BBsub{\Net}{pa}^{i}}$.
\end{lemma}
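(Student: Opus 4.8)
The plan is to lift the middlebox-level bisimulation $\sim_m$ established in \lemref{bisimulation} to a bisimulation between the relation-effect and packet-effect network semantics, and then to invoke the standard fact that two transition systems related by a bisimulation, when started from related configurations, reach related configurations. Concretely, for a relation-effect configuration $(\sigma,\pi)$ (with $\sigma(m)\in\RState[m]$) and a packet-effect configuration $(\tilde\sigma,\tilde\pi)$ (with $\tilde\sigma(m)\in\PState[m]$), I would define $(\sigma,\pi)\approx(\tilde\sigma,\tilde\pi)$ to hold exactly when $\pi=\tilde\pi$ and $\sigma(m)\sim_m\tilde\sigma(m)$ for every middlebox $m\in M$. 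Since $\sim_m$ relates $\err$ only to $\err$ (and non-error states only to non-error states), $\approx$ relates a configuration to an error configuration iff it is itself an error configuration; this is precisely the property that will yield the \emph{iff}.

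First I would check that the initial configurations are related: the packet-effect initial state is the pointwise image $\ps\circ\sigma_I$ of $\sigma_I$ under $\ps$, and $\sigma_I(m)\sim_m\ps(\sigma_I(m))$ holds by the very definition of $\sim_m$, while both configurations have empty channels; hence $(\sigma_I,\lambda\,e\in E\,.\,\epsilon)\approx(\ps\circ\sigma_I,\lambda\,e\in E\,.\,\epsilon)$. Next I would verify that $\approx$ is preserved by every kind of network transition, treating the four semantic variants uniformly since they differ only in which transitions are enabled. A middlebox transition changes a single middlebox state according to the label $(p,c)/(p_i,c_i)_{i=1..k}$ and updates the channels accordingly; \lemref{bisimulation} supplies, in both directions, a matching middlebox transition carrying the \emph{same} label whose target is again $\sim_m$-related. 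Because the channel update $\pi\mapsto\pi'$ is determined solely by this label, and the remaining middleboxes are untouched, the resulting network configurations are again $\approx$-related. Host transitions and reordering transitions do not touch middlebox states and act identically on the (shared) channel component, so each is matched by itself. A reverting transition resets some $\sigma(m)$ to $\sigma_I(m)$; it is matched by the reverting transition that resets $\tilde\sigma(m)$ to $\ps(\sigma_I(m))$, and relatedness is preserved since $\sigma_I(m)\sim_m\ps(\sigma_I(m))$.

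With these two facts in hand, a routine induction on the length of derivations shows that for every configuration reachable in $\BB{\Net}^i$ there is an $\approx$-related configuration reachable in $\BBsub{\Net}{pa}^i$, and symmetrically. Since $\approx$-related configurations agree on being error configurations, an error configuration is reachable in one semantics iff it is reachable in the other; that is, $\err\in\BB{\Net}^i$ iff $\err\in\BBsub{\Net}{pa}^i$ for each $i\in\{o,u,or,ur\}$.

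The main obstacle is bookkeeping rather than conceptual. One must confirm that $\approx$ genuinely forms a two-sided bisimulation for each variant, in particular that the extra reordering and reverting transitions --- which are exactly what distinguish the variants --- are matched without disturbing middlebox-state relatedness, and that the channel component stays perfectly synchronized. The latter hinges on the matching middlebox transitions being \emph{label-preserving}: it is the common label $(p,c)/(p_i,c_i)_{i=1..k}$ guaranteed by \lemref{bisimulation} that forces $\pi'$ and $\tilde\pi'$ to be updated identically, which is what keeps $\approx$ invariant across middlebox steps.
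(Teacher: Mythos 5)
Your proposal is correct and follows exactly the route the paper takes: the paper states that the middlebox-level bisimulation of \lemref{bisimulation} is lifted pointwise to a network-level bisimulation (identical channel contents, $\sim_m$-related middlebox states), from which the equi-reachability of $\err$ follows; you have simply spelled out the case analysis over transition kinds and the treatment of reverting transitions that the paper leaves implicit.
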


The safety verification problem is adapted for the different variants of the network semantics. % is adapted for each of the semantics variants as well.
The following theorem summarizes the complexity of the obtained problems. (We do not distinguish the packet effect semantics from the relation effect semantics, since due to \Cref{lem:packet-effect-equiv} they induce the same safety verification problem.)

\begin{theorem} \label{thm:reverting-unordered-conphard}
The safety verification problem is
\begin{compactenum}[(i)]
\item \label{item:comp-unordered} EXPSACE-complete for unordered networks~\cite{velner2016some}.
\item \label{item:comp-reverting} undecidable for ordered reverting networks (\appref{proofs}).
\item \label{item:comp-unordered-reverting} coNP-hard for unordered reverting networks (\appref{proofs}).
\end{compactenum}
\end{theorem}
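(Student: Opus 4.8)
The plan is to prove the three parts of \Cref{thm:reverting-unordered-conphard} separately, treating each as a distinct problem since they ask for an upper bound matching, undecidability, and a hardness lower bound, respectively.

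\textbf{Part (\ref{item:comp-unordered}).} This is cited directly from \cite{velner2016some}, so no new argument is needed; I would simply observe that the unordered relation-effect semantics $\BB{\Net}^{u}$ coincides with the model analyzed there (channels as multisets of packets, finite-state middleboxes), and by \Cref{lem:packet-effect-equiv} the packet-effect variant induces the same safety problem. Hence the EXPSPACE-completeness transfers verbatim.

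\textbf{Part (\ref{item:comp-reverting}), undecidability of ordered reverting networks.} The approach is a reduction from a known undecidable problem, most naturally Turing machine halting or (closer to the existing infrastructure) a reduction adapting the undecidability proof of \Cref{thm:OrderedUndecidable} for ordered networks. The key challenge is that reverts give the adversary extra power, so I cannot reuse the ordered construction as-is: a middlebox could revert mid-computation and corrupt the encoding. The plan is to encode the tape/counters of a two-counter machine using the FIFO order on channels, which the ordered semantics preserves, so that the ``current'' state of the simulation lives in the channel contents rather than in the middlebox relations. Since reverts only reset middlebox local states (not channel contents, per the definition of reverting transitions, which leave $\pi$ unchanged), I would design the middleboxes to be nearly stateless — pushing all the essential computation state onto the ordered channels — so that a revert is harmless (it returns a middlebox to a state from which it can resume consuming the channel correctly). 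Error reachability would then correspond to the simulated machine halting/reaching an accepting configuration. The main obstacle here is engineering the middleboxes so that an \emph{arbitrary} sequence of reverts cannot create a spurious halting run; I expect to need a design where every middlebox transition is ``idempotent under revert,'' i.e. reverting and replaying leads to the same channel content, which forces the safety-relevant behaviour to depend only on the FIFO channel order.

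\textbf{Part (\ref{item:comp-unordered-reverting}), coNP-hardness.} Since safety asks whether $\err \notin \BB{\Net}^{ur}$ is the ``good'' case, the complement (reachability of $\err$) should be NP-hard, giving coNP-hardness of safety. The plan is to reduce from a standard NP-complete problem such as 3-SAT (or Boolean-formula satisfiability), encoding a satisfying assignment as a particular sequence of packet arrivals. Concretely, I would build a network with one ``variable gadget'' per propositional variable and a ``clause-checking'' middlebox that aborts (reaches $\err$) exactly when all clauses are satisfied by the guessed assignment; the nondeterminism of packet selection and host packet production supplies the existential guess of the assignment, and the unordered/reverting relaxations make the guess realizable in any order. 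The main subtlety will be ensuring consistency of the assignment across clauses despite reverts — a variable must not be read as both true and false — which I would enforce by committing each variable's truth value into a relation via a membership query before any clause reads it, and arranging the middlebox so that a revert only resets uncommitted progress without allowing an inconsistent re-guess to reach $\err$. I expect this consistency-under-reverts enforcement to be the crux of the reduction's correctness.

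Across all three parts, the recurring technical obstacle is the interaction between reverts and the rest of the state: because a reverting transition resets a middlebox to $\sigma_I(m)$ at \emph{any} moment, every gadget must be robust to resets, which is precisely why the FIFO channel order (untouched by reverts) carries the computational burden in part (\ref{item:comp-reverting}) and why commitments must be irreversible-before-use in part (\ref{item:comp-unordered-reverting}).
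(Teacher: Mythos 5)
Parts (i) and (iii) of your plan are fine and essentially match the paper. For (iii) the paper also reduces satisfiability to $\err$-reachability: a single middlebox with one port, literal-typed packets, nullary relations $O_i$ (``variable $i$ already assigned'') and $V_i$ (its value), an update that writes $V_i$ only when $O_i$ is still false, and an \textbf{abort} whenever the induced assignment satisfies $\phi$. That is exactly your ``commit each variable before any clause reads it'' mechanism; your per-variable/per-clause gadget network is only a heavier packaging of the same idea, and the observation that reverts merely restart the guess (so $\err$ is reachable iff $\phi$ is satisfiable) is identical. Note only that this requires the number of membership queries to grow with $n$, which the paper points out explicitly.

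The genuine gap is in part (\ref{item:comp-reverting}). Your strategy is to make the middleboxes ``nearly stateless,'' push the whole simulated configuration onto the FIFO channel, and hope that a revert is harmless because the middlebox can \emph{resume} consuming the channel. This does not work as stated. A middlebox whose behaviour on each packet-processing step depends only on the incoming packet (and not on retained local state) is essentially a one-state queue automaton, which is not Turing-universal; any channel-machine simulation of a Turing machine must carry at least the control state (and, when combining the head marker with the adjacent tape cell, a ``just saw the head'' flag) in the middlebox's local state \emph{across} step boundaries. Reverting transitions occur between whole packet-processing steps and wipe exactly that state while leaving a half-rewritten tape on the channel, so the middlebox cannot resume correctly and spurious runs become possible. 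You flag this as the main obstacle but do not resolve it, and I do not see how ``idempotence under revert'' can be achieved for a universal simulation. The paper's proof uses a different, and in my view necessary, device: rather than making reverts invisible, it makes a revert equivalent to restarting the entire simulation. The construction has one host and one middlebox with a FIFO self-loop storing the tape; in its \emph{initial} state the middlebox emits a distinguished flush packet on the self-loop and discards everything it receives until that packet returns (this step crucially exploits FIFO order), thereby emptying the channel before beginning a fresh simulation from the initial Turing-machine configuration. Hence every revert corresponds exactly to resetting the machine, the reachable simulated configurations are unchanged, and $\err$ is reachable iff the machine halts. Your proposal is missing this flush-on-revert idea, and without it (or a substitute) the reduction's correctness under arbitrary reverts does not go through.
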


\Cref{thm:reverting-unordered-conphard}(\ref{item:comp-reverting}) justifies the
need for the unordered abstraction even in reverting networks. \Cref{thm:reverting-unordered-conphard}(\ref{item:comp-unordered-reverting}) 
implies that our abstract interpretation algorithm, presented in \secref{abs-int}, which is both sound and complete for the unordered reverting semantics, is essentially optimal since it essentially meets the lower bound stated in the theorem (it is exponential in the number of state queries of any middlebox and polynomial in the number of middleboxes, hosts and packets).

%
%In \secref{abs-int}, we present an abstract interpretation algorithm that is both sound and complete for the unordered reverting semantics, and essentially meets the lower bound stated in  \Cref{thm:reverting-unordered-conphard}, as it is exponential in the number of state queries of any middlebox and polynomial in the number of middleboxes, hosts and packets.
%%the combination of the unordered and the reverting semantics allows for an efficient abstract interpretation algorithm that is both sound and complete for the unordered reverting semantics.

%\TODO{RM: Why can't we be more specific? ...as it is exponential in the maximal number of queries of any
%middlebox and polynomial in the number of middleboxes, hosts, and tags.}

%However, as we show in the remainder of the paper, the combination of the unordered and the reverting semantics allows for an efficient abstract interpretation algorithm that is both sound and complete for the unordered reverting semantics.
%\TODO{SH: added this explanation. Is it ok?}

%
%\begin{theorem}
%\label{thm:reverting-unordered-conphard}
%The safety problem for  unordered reverting networks is coNP-hard.
%\end{theorem}
%\TODO{SH: not sure what the parameters of the complexity are. The proof mentions the number of queries. This was not yet mentioned.
%Perhaps this is an argument against first defining the network semantics and only then AMDL and middlebox semantics.
%An alternative is to move this theorem to the section on abstract interpretation after analyzing the complexity to show the result is tight.}

\ignore{
The reverting unordered semantics is
not directly implementable (the set of packet multisets is infinite).
We continue by proving and utilizing properties of the semantics,
which in later sections allow us to apply sound abstractions, while still maintaining
fixpoint-completeness, leading us to the
\emph{Cartesian network abstraction}.
}

%\TODO{add transitions such that: $\sigma_I(m)$ where the middlebox $m$ is in its initial state---the state of all other middleboxes equals their state in $\sigma$.}

\para{Sticky Properties}
%\TODO{this is never used. Not sure why we include it. Maybe it gives intuition for the "first" Cartesian abstraction but in any case we don't stop there and don't claim anything about it.}
%\TODO{SH: remove this paragraph and move the intuition to the sticky property in the packet effect section}
\cbstart
Unordered reverting networks have a useful property of \emph{sticky packets}, meaning that if a packet is pending for a middlebox in some run of the network then any run has an extension in which the packet is pending again with multiplicity $> n$, for any $n \in \Nat$.
%The sticky packets property implies that, in verifying such networks, it is sufficient to record, for each individual channel $e$ and individual packet $p$,
%whether $p$ has arrived on $p$ at some point, without losing precision w.r.t. safety.
This property implies a stronger property: % that we call the \emph{sticky packet states} property.

%
%Next, we formalize a stronger property of \emph{sticky packet states} property
%that ensures that ignoring the correlation between the packet states of a middlebox, as well as the packets states across different middleboxes, as well as the occurrence (and cardinality) of packets on channels does not incur any precision loss w.r.t. safety. This makes the network-level Cartesian abstraction defined in \secref{abs-int} precise.

\ignore{
two useful properties, which we define next --- the \emph{sticky states} property and the \emph{sticky packets} property --- that make the abstract interpretation described in \secref{abs-int} complete. \cbend
\ignore{
Intuitively, the sticky packets property states that if a packet is pending for a middlebox in some run of the network then any run has an extension in which the packet is pending again.
%n event occurs in some run of the network, then any run has an extension in which that event occurs.
%Consequently, disregarding the multiplicity of packets on the channels does not incur precision loss w.r.t. safety.
Consequently, it is not necessary to maintain the cardinality of every packet on every channel and
it is not necessary to maintain the correlation between a channel and the set of packets
that can simultaneously appear there.
Instead, it is sufficient to record, for each individual channel $e$ and individual packet $p$,
whether $p$ has arrived on $e$ at some point.
The sticky states property ensures that ignoring the correlation between the state of a middlebox $m$ and the channels connected to it does not lose precision (for safety).
This implies that one can also ignore correlations between the states of different middleboxes, making a Cartesian abstraction precise.
%and the states of other middleboxes as well as the states of the channels that are not connected to the middlebox $m$.
}

%The following lemma shows that the reverting unordered semantics
%maintains the sticky packet property.

\begin{lemma}[Sticky Packets Property]
\label{lem:sticky-packets}
For every channel $e$ and packet $p$:
If in some reachable configuration $e$ contains $p$, then every run can be extended such that $e$ will eventually contain $p$.
Moreover, every run can be extended such that $e$ will eventually contain $n$ copies of $p$ (for every $n>0$).
\end{lemma}
%\begin{proof}
%The proof relies on the reverting property and on the fact that the channels are unordered.
%
%Let $\sigma_0$ be a reachable configuration in which $p$ occurs in $e$, and let $s_0$ be the scenario that led to it, i.e., the sequence of events that took place.
%Consider an arbitrary run (scenario) $\pi$. One can extend $\pi$ with the following scenario:
%First all the middleboxes return to their initial state.
%Second, scenario $s_0$ occur, i.e., only packets from scenario $s_0$ are processed, and the other packets are ignored.
%This extension is possible because the channels are unordered.
%
%To construct a scenario in which $e$ contains $n$ copies of $p$, we just concatenate the above mentioned extension $n$ time.
%\end{proof}

The sticky packets property implies that, in verifying such networks, it is not necessary to maintain the cardinality of every packet on every channel and
it is not necessary to maintain the correlation between a channel and the set of packets
that can simultaneously appear there.
Instead, it is sufficient to record, for each individual channel $e$ and individual packet $p$,
whether $p$ has arrived on $p$ at some point, without losing precision w.r.t. safety.
}

%We now show that unordered reverting networks have the sticky states property.

\ignore{
\begin{lemma}[Sticky States Property]
\label{lem:sticky-states}
For every channel $e$, packet $p$, middlebox $m$ and state %$\sigma_m$
$s$ of $m$:
If, in some reachable configuration, channel $e$ contains $p$ and in some (possibly other) reachable configuration $m$ is in state %$\sigma_m$,
$s$, then %every run can be extended such that eventually we get to
there exists a reachable configuration where simultaneously $e$ contains $p$ and $m$ is in state $s$. % $\sigma_m$.
%If in some scenario channel $c$ contains $p$ and $m$ is in state $q_m$, then every scenario can be extended such that eventually we get to a configuration where simultaneously $c$ contains $p$ and $m$ is in state $q_m$.
\end{lemma}

%\begin{lemma}
%\label{lem:sticky}
%For every channel $e$, packet $p$, middlebox $m$ and state $\sigma_m$ of $m$:
%If in some reachable configuration channel $e$ contains $p$ and in some (possibly other) reachable configuration $m$ is in state $\sigma_m$, then every run can be extended such that eventually we get to a configuration where simultaneously $e$ contains $p$ and $m$ is in state $\sigma_m$.
%%If in some scenario channel $c$ contains $p$ and $m$ is in state $q_m$, then every scenario can be extended such that eventually we get to a configuration where simultaneously $c$ contains $p$ and $m$ is in state $q_m$.
%\end{lemma}
%\begin{proof}
%Let $(p_1,\dots,p_\ell)$ be the sequence of packets that $m$ processed from the latest reset until it arrives to state $\sigma_m$ in the given witness scenario.
%
%Consider an arbitrary run.
%By Lemma~\ref{lem:sticky_packets} we can extend this run such that $p_1,\dots,p_\ell$ are pending packets in the ingress channel of middlebox $m$ and $p$ is pending in $e$ (if some of the packets occur more than once in the sequence, then by the same lemma we may assume that there are multiple copies of those packets).
%
%We further extend the run with a reset event for middlebox $m$. Finally, we extend the scenario such that in the next $\ell$ steps $m$ will process $p_1,\dots,p_\ell$ reaching state $\sigma_m$.
%\end{proof}

The sticky states property ensures that ignoring the correlation between the state of a middlebox $m$ and the channels connected to it does not incur any precision loss w.r.t. safety.
This implies that one can also ignore correlations between the states of different middleboxes, making the network-level Cartesian abstraction defined in \secref{abs-int} precise.
}

%Intuitively, \lemref{sticky-packets} and \lemref{sticky-states} follow from the fact that all middleboxes can revert to their initial state and the unordered semantics enables a scenario where the particular state and packets are reconstructed.

\ignore{
The sticky properties formulated above are oblivious of the middlebox semantics. For the case of the packet state representation of the middlebox semantics, an even finer variant of the sticky states property of reverting unordered networks arises, which implies precision of the middlebox-level Cartesian abstraction defined in \secref{abs-int}:
}
%The proof of \thmref{completeness-packets} relies on a finer variant of the sticky states property of reverting unordered networks for the case of packet state representation, which we state next.
\begin{lemma} [Sticky Packet States Property]
\label{lem:sticky-packet-states}
For every channel $e$, packets $p, \tilde{p}$, middlebox $m$ and packet state $\tilde{v}$ of $\tilde{p}$ in $m$:
If, in some reachable configuration, channel $e$ contains $p$ and in some (possibly other) reachable configuration the packet state of $\tilde{p}$ in $m$ is $\tilde{v}$, then %every run can be extended such that eventually we get to a
there exists a reachable configuration where simultaneously $e$ contains $p$ and the packet state of $\tilde{p}$ in $m$ is $\tilde{v}$.
%If in some scenario channel $c$ contains $p$ and $m$ is in state $q_m$, then every scenario can be extended such that eventually we get to a configuration where simultaneously $c$ contains $p$ and $m$ is in state $q_m$.
\end{lemma}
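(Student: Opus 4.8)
The plan is to stitch the two given runs together into a single run of the unordered reverting semantics ${\BBsub{\Net}{pa}^{ur}}$ whose final configuration witnesses both conditions at once. Let $\rho_A$ be a run reaching a configuration $c_A$ in which channel $e$ contains $p$, and let $\rho_B = \tau_1 \cdots \tau_k$ be a run reaching a configuration $c_B$ in which the packet state of $\tilde{p}$ in $m$ is $\tilde{v}$; both exist by hypothesis. The idea is to first make $e \ni p$ robust by accumulating many copies of $p$ on $e$, then revert all middleboxes to their initial states, and finally replay $\rho_B$ on top: the replay drives $m$ into the state it has in $c_B$ while leaving at least one copy of $p$ on $e$ undisturbed.

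First I would extend $\rho_A$, using the sticky packets property stated above, to a configuration $c_A'$ in which $e$ holds at least $k+1$ copies of $p$. Then I would apply a reverting transition $\TrN{m'}{}$ to every middlebox $m'$, reaching a configuration $c_A''$ in which all middleboxes are in their initial state $\sigma_I$; since reverting transitions do not touch channels, $e$ still carries at least $k+1$ copies of $p$. Crucially, $c_A''$ and the starting configuration of $\rho_B$ agree on every middlebox state (both assign $\sigma_I$) and differ only in that $c_A''$ has surplus \emph{junk} packets on its channels.

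The core step is to replay $\tau_1, \ldots, \tau_k$ from $c_A''$, proving by induction on $i$ that the prefix is enabled and that after it middlebox $m$ is in exactly the state it occupies after $\tau_1 \cdots \tau_i$ in $\rho_B$. The enabling is where locality and the unordered semantics are used: a middlebox (or host) transition depends only on the acting middlebox's own local state --- recall that a network transition changes no other middlebox's state --- together with the presence of the consumed packet, and in the unordered semantics any present packet can be brought to the head of its channel by a reordering transition, while reordering and reverting transitions of $\rho_B$ are reproducible verbatim. Moreover the multiset of packets on each channel during the replay dominates the corresponding multiset in $\rho_B$ (we start with a superset and apply the same net changes), so every consumption in $\rho_B$ remains possible. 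Since $m$ thus receives exactly the same inputs, in the same order and with the same nondeterministic choices as in $\rho_B$, it follows the same trajectory and ends in the $c_B$-state, whence the packet state of $\tilde{p}$ in $m$ equals $\tilde{v}$ in the resulting configuration $c_C$.

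Finally, $c_C$ still contains $p$ on $e$: the replay performs $k$ transitions and so removes at most $k$ copies of $p$ from $e$, leaving at least one of the $k+1$ we started with (newly produced copies only help). Hence $c_C$ is reachable and simultaneously satisfies $e \ni p$ and packet state of $\tilde{p}$ in $m$ equal to $\tilde{v}$, as required. I expect the replay invariant of the third paragraph to be the main obstacle: one must show that resetting the middleboxes and carrying surplus packets never perturbs the trajectory of $m$ along $\rho_B$, which is exactly where the locality of middlebox transitions, the permutation freedom of the unordered semantics, and the multiplicity guarantee of the sticky packets property all come into play.
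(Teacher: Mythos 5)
Your proof is correct and follows essentially the same revert-and-replay strategy as the paper: the paper first establishes the sticky packets property by reverting all middleboxes and replaying the witnessing scenario on the unordered channels, and then derives the present lemma by staging the packets that $m$ processed since its last reset, reverting $m$, and feeding it that suffix. Your variant replays the entire run $\rho_B$ globally after the reverts instead of only $m$'s input suffix, but the enabling mechanism --- reverting transitions, channel reordering, and the multiplicity guarantee that preserves a copy of $p$ on $e$ --- is identical.
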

%Intuitively, the proof of the lemma follows as all middleboxes can revert and the unordered semantics enables a scenario where the particular state and packets are reconstructed.
\ignore{
\iflong
\begin{proof}
Follows directly from \lemref{sticky-states} and from the bisimulation between the relation state space and the packet state space of middleboxes.
\end{proof}
\fi
}

Intuitively,\ignore{ \lemref{sticky-packets}, \lemref{sticky-states} and} \lemref{sticky-packet-states} follows from the fact that all middleboxes can revert to their initial state and the unordered semantics enables a scenario where the particular state and packets are reconstructed.
It ensures that ignoring the correlation between the packet states of a middlebox for different packets, the packet states across different middleboxes, and the occurrence (and cardinality) of packets on channels does not incur any precision loss w.r.t. safety. This makes the network-level abstraction defined in \secref{abs-int}, which treats channels as sets of packets and ignores correlations between packet states and channels, precise.

\ignore{
The sticky packet states property ensures that ignoring the correlation between the packet states of a middlebox $m$ and the channels connected to it does not incur any precision loss w.r.t. safety. This implies that one can also ignore correlations between the different packet states of a middlebox as well as of other middleboxes, making the network-level Cartesian abstraction defined in \secref{abs-int} precise.
}

\ignore{

\subsubsection{Undecidability of Safety of Reverting Networks Over Ordered Channels}
\label{sec:ordered-undecidability}
\TODO{where should this be?}

We show that if the network channels are FIFO, then we can simulate the behavior of a non-reverting network with a reverting network.
As the isolation problem for the former problem is known to be undecidable~\cite{velner2016some}, we conclude undecidability for the latter.

Intuitively our reduction makes sure that whenever one middlebox reverts the entire network reverts, i.e., all the middleboxes and all the pending packets in the channel also revert.

For this purpose we add an auxiliary \emph{wake-up} packet types and auxiliary \emph{wake-up} states for every middlebox.
Every middlebox initial state is a wake-up state, and in this state the middlebox only forwards wake-up packets and ignores (discard without any state change) other packet types.
When it forwards the wake-up packet it changes the packet type so it will also denotes the fact that the specific middlebox is in a wake-up state.
The middlebox exits the wake-up states only after it gets confirmations that all the other middleboxes in the network are in a wake-up state.

We observe that since the network is FIFO it must be the case that there are no non-wake-up packets in the channels between the middleboxes.
After it leaves the wake-up states it goes to the original initial state of the middlebox in the non-reverting network.

In this reduction, the reverting property does not change the reachability of error states, hence we get the following theorem:

\begin{theorem}
The safety verification problem over reverting FIFO networks is undecidable.
\end{theorem}

\subsubsection{Isolation for Reverting Networks with Arbitrary Number of Queries is coNP-hard}
\TODO{move somewhere. Not sure where. But needs to be after we define queries|| Why? we define queries in the syntax.}
\label{sec:conphardness}
We prove that if the number of queries in a middlebox is not fixed, then the isolation problem is coNP-hard even for one middlebox and one host.
The proof is by reduction from UNSAT.
Given a formula $\phi$ with $n$ variables $x_1,\dots,x_n$ we construct a network with one host and one middlebox $m$, such that $m$ has only one port.
The packet tag set is $x_1,\neg x_1, \dots, x_n, \neg x_n$, i.e., there are $2n$ packet types.
The middlebox has two nullary relations $O_i,V_i$ for every $i\in\{1,\dots,n\}$, where intuitively, $O_i$ indicates whether a packet of type $x_i$ or $\neg x_i$ already occurred and $V_i$ indicates if the value of the first such packet is true or false.
Initially all the relations are initialized to false.
The relation $V_i$ is updated only if $O_i$ is false, and then $O_i$ is also updated.
The middlebox discards all packets.
In addition, whenever the interpretation of $O_i$ and $V_i$ satisfies $\phi$ an abort state is reached.
Clearly, the size of the code of $m$ is polynomial and isolation is violated if and only if $\phi$ is satisfiable.
We note that possible resets do not affect the isolation property.
\begin{theorem}
The isolation problem for reverting unordered networks is coNP-hard.\footnote{The safety verification
problem and the isolation problem are equivalent.}
\end{theorem}

}

%\section{Packet-Effect Semantics}
\section{Abstract Interpretation for Stateful Networks}
\label{sec:abs-int}

In this section, we present our algorithm for safety verification of stateful networks based on abstract interpretation of the semantics ${\BBsub{\Net}{pa}^{o}}$, and discuss its guarantees.
\iflonglong
%As demonstrated in \secref{overview},
As explained in \secref{Intro},
applying an order abstraction combined with Cartesian abstraction on the concrete network domain allows us to obtain a sound abstract interpretation which is polynomial in the number of middlebox states.
Unfortunately, the latter is often exponential in the size of the packet space, which hinders scalability.
However, when also employing Cartesian abstraction on the packet effect representation of middlebox states, we obtain an abstract domain
for which the time needed to compute the abstract least fixpoint is polynomial
in the number of packets and the number of middleboxes, but exponential in the maximal number of distinct state queries in the middlebox programs.
Further, the abstract interpretation is sound for any network, and complete for reverting unordered networks.
\fi

\ignore{
To improve the complexity, we present an equivalent network semantics on an alternative representation of middlebox states. This semantics has the advantage that it facilitates an additional abstraction of the middlebox states without incurring any additional precision loss w.r.t. safety.
The result of employing the order abstraction and Cartesian abstraction on the new concrete domain is an abstract domain
in which the abstract interpretation is sound for any network, and complete for reverting unordered networks.
Further, the time needed to compute the abstract least fixpoint is polynomial
in the number of packets and the number of middleboxes, but exponential in the maximal number of conditions in the middlebox programs.
}

%\subsection{Cartesian Abstraction for Packet Space}
\subsection{Abstract Interpretation for Packet Space} \label{sec:cartesian-packet}
\ignore{
We now present our safety verification algorithm as an abstract interpretation
of the semantics ${\BBsub{\Net}{pa}^{o}}$. (In fact, our algorithm is sound and complete for ${\BBsub{\Net}{pa}^{ur}} \supseteq {\BBsub{\Net}{pa}^{o}}$.)
%(In fact, our algorithm can also be viewed as an abstract interpretation of ${\BBsub{\Net}{pa}^{ur}} \supseteq {\BBsub{\Net}{pa}^{o}}$ as it is sound and complete for ${\BBsub{\Net}{pa}^{ur}}$.)
%\TODO{[11] SH: ${\BBsub{\Net}{pa}^{o}}$ or ${\BBsub{\Net}{pa}^{ur}}$? on the one hand this will give us an algorithm for verification of concrete networks. On the other hand, our soundness and completenes are stated w.r.t. ${\BBsub{\Net}{pa}^{ur}}$}
%${\BBsub{\Net}{pa}^{ur}}$.
\DONE{SH: shouldn't it be ${\BBsub{\Net}{pa}^{o}}$? i.e., isn't it an abstract interpretation of the concrete semantics? (which happens to be ``equivalent'' to the unordered reverting semantics).}{I changed it.}

%define the abstraction we use
%over the packet state network semantics defined above.
%
}

We apply sound abstractions to different components of the concrete packet state network domain.
\ignore{
to obtain different abstract interpretations. The abstractions apply either to the channel contents,
to the middlebox states, or to the correlations between these individual components.
}
Due to space constraints, we do not describe the intermediate steps in the
construction of the abstract domain, and only present the final domain used by
the analysis. Roughly speaking, the obtained domain abstracts away
\begin{inparaenum}[(i)]
\item the order and cardinality of packets on channels;
\item the correlation between the states of different middleboxes and different
channel contents; and
\item the correlation between states of different packets within each middlebox.
\end{inparaenum}

%To perform abstract interpretation, we employ the same abstractions as before.
%The unordered and the reverting abstractions remain the same since they do not depend on nor change the middlebox state---they only abstract the channels.
%The cartesian abstraction becomes coarser as it now not only abstracts away the correlation between different middleboxes and channels, but it also abstracts away the correlations between how different packets are handled within a single middlebox.

\newcommand{\absdom}{\mathcal{A}}
\newcommand{\concdom}{\mathcal{C}}
\newcommand{\absorder}{\sqsubseteq}

\para{Cartesian Packet Effect Abstract Domain}
Let $Q \to \{T,F\}$ denote the union of $Q(m) \to \{T,F\}$ over all middleboxes $m \in M$, including the error state $\err$.
The Cartesian abstract domain of the packet state of the network is given by the lattice $\absdom \eqdef (A, \bot, \absorder, \join)$, where
%$\PStateet(S_{m_1,p_1} \times \dots \times S_{m_1,p_n} \times \dots \times S_{m_\ell,p_1} \times \dots \times S_{m_\ell,p_n} \times S_{e_1} \times \dots \times S_{e_k})$.
$A \eqdef (M \rightarrow \packets \to \pow{Q \to \{T,F\}}) \times (E \to \pow{\packets})$.
That is, an abstract element maps each packet in each middlebox to a set of possible valuations for the queries, and each channel to a set of packets.
The bottom element is $\bot \eqdef (\lambda m.\ \lambda p.\ \emptyset, \lambda e.\ \emptyset)$,
the partial order $a_1 \absorder a_2$ is defined by pointwise set inclusions per middlebox and channel, and join is defined by pointwise unions
$(\omega_1,\omega_2) \join (\omega_1',\omega_2')
\eqdef (\lambda m.\ \lambda p.\ \omega_1(m)(p) \cup \omega_1'(m)(p), \lambda e.\ \omega_2(p) \cup \omega_2'(p))$.
%As before, any abstract element in which one of the (sub)-components is $\err$, is an error element.

Let $\concdom \eqdef (\pow{\PState}, \subseteq)$ be the concrete network domain.
%with the concrete transformer $\Tr^o: \pow{\PState} \to \pow{\PState}$.
We define the Galois connection $(\concdom, \gamma, \alpha, \absdom)$ as follows.
The abstraction function $\alpha: \pow{\PState} \to A$ for a set of packet state configurations $X \subseteq \PState$
is defined as $\alpha(X) = (\omega_{\textit{mboxes}},\omega_{\textit{chans}})$ where
\[
\begin{array}{ccc}
\omega_{\textit{mboxes}} = \lambda m. \ \lambda p. \ \{ \sigma(m)(p) \mid (\sigma,\pi) \in X \}
& \text{and} &
\qquad \omega_{\textit{chans}} = \lambda e. \bigcup\limits_{(\sigma,\pi) \in X} \pi(e) \enspace.
%\ \{ p \mid \exists (\sigma,\pi) \in X. \ p \in \pi(e)\} \enspace.
\end{array}
\]

The concretization function $\gamma: A \to \pow{\PState}$ is induced by $\alpha$ and $\sqsubseteq$.
We denote the initial abstract element as $a_I = \alpha(\{(\sigma_I,\lambda\,e\in E\,.\, \emptyset)\})$.

%The Galois connection induces the so-called \emph{best transformer}~\cite{DBLP:conf/popl/CousotC77}, $\Tr^{\sharp}: A \to A$ defined as $\Tr^{\sharp}(a) = \alpha(\Tr^o(\gamma(a))$.

%\begin{definition}[Abstract Error States]
%%Let $(\NetFIFO,\gamma,\alpha,A)$ be a Galois connection for
%%a network abstract domain $A$.
%We say that an abstract element $a \in A$
%is an \emph{abstract error element} if $\gamma(a)$ contains an error configuration.
%\end{definition}
%As before, any abstract element in which one of the (sub)-components is $\err$, is an error element.
%
%\begin{definition}[Abstract Safety Verification]
%The abstract safety verification problem is to decide whether \linebreak
%$\lfp(\Tr^{\sharp})(\alpha(\sigma_I,\lambda\,e\in E\,.\, \epsilon))) = \err$.
%\end{definition}

%Soundness of the abstract analysis is guaranteed due to the properties of the Galois connection.

\newcommand{\inchannel}{\textit{in}}
\newcommand{\outchannel}{\textit{out}}

\para{Abstract Transformer}
Next, we define the abstract transformer $\Tr^{\sharp}: A \to A$, which soundly abstracts the concrete transformer $\Tr^o$ %: \pow{\PState} \to \pow{\PState}$
%provide an algorithmic definition of $\Tr^{\sharp}$
and show that it is efficient, due to the locality property of middlebox transitions.
We use the predicate $\inchannel(c,e,m)$ to denote that the network channel $e$ is an ingress channel of middlebox $m$, connected to its
$c$ channel.
% Namely, $\inchannel(c,e) \equiv (\exists h. \ e=\gV{h,c,m} ) \vee (\exists m_1, c_1. \ e=\gV{m_1,c_1,c,m})$.
Similarly, $\outchannel(c,e,m)$ means that $e$ is an egress channel of $m$ connected to its $c$ channel.
%Namely, $\outchannel(c,e) \equiv (\exists h. \ e=\gV{m,c,h} ) \vee (\exists m_1, c_1. \ e=\gV{m,c,c_1,m_1})$.
%
%%We take advantage of the property that middlebox transitions from state $s \in \PState[m]$ depend
%Let $\omega = (\omega_1,\omega_2) \in (M \rightarrow     \packets \to \pow{Q \to \{T,F\}}) \times (E \to \pow{\packets})$. Then the result of applying the transformer $\Tr^{\sharp}$ is $\omega' = (\omega_1', \omega_2')$ defined as follows.
%
%\begin{align*}
%\omega_1'(m,\tilde{p}) = \omega_1(m,\tilde{p}) \cup  \{v' \in U_m(e,p,v,\tilde{p},\tilde{v}) \mid  p \in \omega_2(e)
%\wedge v \in \omega_1(m,p) \wedge \tilde{v} \in \omega_1(m,\tilde{p})  \}
%\end{align*}
%
\ignore{
\begin{definition}
Let $(\omega_1,\omega_2) \in (M \rightarrow     \packets \to \pow{Q \to \{T,F\}}) \times (E \to \pow{\packets})$ be an abstract element.
Then $\Tr^{\sharp}(\omega_1,\omega_2) = (\omega_1',\omega_2')$, where
\[
\begin{array}{l}
\omega_1'(m)(\tilde{p}) = %\omega_1(m)(\tilde{p}) \;\cup %\\
\{\postLS_m(v,p,c,\tilde{p},\tilde{v},\bar{n}) \mid \exists c, e, p, v, \bar{n}. \ \inchannel(c,e,m) %\\
% \omega_1'(m,\tilde{p}) = \omega_1(m,\tilde{p}) \cup  \{v' \mid \exists e, p, v, \tilde{v}. \ v' \in U_m(e,p,v,\tilde{p},\tilde{v}) \\
%\;\;\;\;\;\;\;\;\;\;
\wedge p \in \omega_2(e)
\wedge v \in \omega_1(m)(p) \wedge \tilde{v} \in \omega_1(m)(\tilde{p})  \}\\
\omega_2'(\tilde{e}) = %\omega_2(\tilde{e}) \;\cup %\\
\{p_i \mid \exists c, \tilde{c}, e, p, v, \bar{n}, m. \ (p_i,\tilde{c}) \in \postLP_m(v,p,c,\bar{n}) \wedge \inchannel(c,e,m) %\\
%\;\;\;\;\;\;\;\;\;\;
\wedge p \in \omega_2(e) \wedge v \in \omega_1(m)(p) \wedge \outchannel(c,\tilde{e},m) \} \enspace.
\end{array}
\]
\end{definition}
Intuitively, the transformer considers a packet state $v$ of some packet $p$ pending for $m$, i.e., $p \in \omega_2(e)$ and $v \in \omega_1(m,p)$.
%\wedge \tilde{v} \in \omega_1(m,\tilde{p})$.
It then computes the effect of this packet on (1) the packet state in $m$ of other packets $\tilde{p}$ with packet state $\tilde{v}$ in $m$ and on (2) the output channels. All the new packet states generated for $m$ %(possibly for other packets $\tilde{p}$)
are accumulated in $\omega_1'(m)(\tilde{p})$. Similarly, all the new packets generated
on channels $\tilde{e}$ are accumulated in $\omega_2'(\tilde{e})$.

-------------------- Alternatively -------------------
}
Further, let $[x_1{\mapsto}y_1,\ldots,x_n{\mapsto}y_n]$ denote a mapping from  each $x_i$ to $y_i$ for $i=1..n$ and $f[x\mapsto y]$ denote the function
$f$ updated by (re-)mapping $x$ to $y$.
%
%We extend the domain of the join operator to operate over
%partial abstract elements---elements mapping only
%one middlebox and one packet to a (singleton) set of packet states and
%only a single edge to a (singleton) set of packets.

\DONE{Replace partial elements and extended join with default values.}{RM: fixed by updating existing elements.}

%\[
%\Tr^{\sharp}(\omega_1,\omega_2) \eqdef
%\bigsqcup_{m\in M} \left\{
%([m{\mapsto}[\tilde{p}{\mapsto}\{\tilde{s}[p,\tilde{p}]'(\tilde{p})\}]],\
%[e_i{\mapsto}\{p_i\}] ) \left| \begin{array}{l}
%p\in \omega_2(e), \inchannel(c,e,m),\\
%\tilde{s} \in \omega_1(m), \tilde{p}\in P,\\
%\tilde{s}[p,\tilde{p}]=[p\mapsto \tilde{s}(p), \tilde{p}\mapsto \tilde{s}(\tilde{p})],\\
%\bar{n}\in |\prog(m)|^{|\prog(m)|}\\
%\tilde{s}[p,\tilde{p}] \TrMPackSub{(p, c)/(p_i, c_i)_{i=1..k}, \bar{n}} \tilde{s}[p,\tilde{p}]',\\
%\outchannel(c_i,e_i,m), i=1..k
% \end{array}\right.
% \right\}\enspace.
%\]
\begin{definition}
Let $(\omega_1,\omega_2) \in (M \rightarrow     \packets \to \pow{Q \to \{T,F\}}) \times (E \to \pow{\packets})$ be an abstract element.
Then $\Tr^{\sharp}(\omega_1,\omega_2) \eqdef$
\begin{multline*}
%\Tr^{\sharp}(\omega_1,\omega_2) \eqdef %\\
\bigsqcup %_{m\in M}
\left\{
\begin{array}{l}
(\omega_1[m{\mapsto}\tilde{\textit{ps}}],\\
\;\,\omega_2[e_i{\mapsto} \omega_2(e_i) \cup \{p_i\}])
\end{array}
\left| \begin{array}{ll}
(1)& m \in M,\\
(2)& p\in \omega_2(e), \inchannel(c,e,m),\\
(3)&\tilde{s} \in \omega_1(m),\ \tilde{p}\in P,\ \\
   &\tilde{s}[p,\tilde{p}]=[p\mapsto \tilde{s}(p), \tilde{p}\mapsto \tilde{s}(\tilde{p})],\\
(4)& \tilde{s}[p,\tilde{p}] \TrMPackSub{(p, c)/(p_i, c_i)_{i=1..k}} \tilde{s}[p,\tilde{p}]',\\
%(4)&\bar{n}\in \paths(m),
%   \tilde{s}[p,\tilde{p}] \TrMPackSub{(p, c)/(p_i, c_i)_{i=1..k}, \bar{n}} \tilde{s}[p,\tilde{p}]',\\
%(4)& \tilde\outchannel(c_i,e_i,m), i=1..k, \\
(5)& \tilde{\textit{ps}} = \tilde{s}[\tilde{p}{\mapsto}\{\, \tilde{s}[p,\tilde{p}]'(\tilde{p})\, \}], \\
(6)&\outchannel(c_i,e_i,m), i=1..k
 \end{array}\right.
 \right\}\enspace.
\end{multline*}
\end{definition}

Intuitively, the transformer updates the abstract state by joining
the individual effects obtained by:
(1) considering each middlebox,
(2) considering each input packet to the middlebox,
(3) considering every possible substate for the input packet $p$ and every other packet $\tilde{p}$,
(4) considering every possible substate transition,
(5) adding the new packet state for $\tilde{p}$ to the relevant set,
and
(6) adding each output packet to the corresponding edge.

\ignore{
--------------------------------------------------------------------
}

\begin{proposition}
The running time of $\Tr^{\sharp}$ is $O((|M| + |E|) \cdot |\packets|^2 \cdot 2^{2|Q_{max}|})$, where
$Q_{max}$ denotes the maximal set of queries $Q(m)$ over all middleboxes $m \in M$.
\end{proposition}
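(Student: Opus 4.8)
The plan is to read the running-time bound directly off the definition of $\Tr^{\sharp}$, which expresses the transformer as a join $\bigsqcup$ over a set of summands, each summand being determined by the choices labelled $(1)$--$(6)$. Accordingly, I would bound the cost in three parts: (a)~the number of summands, i.e.\ the number of admissible tuples $(m,e,c,p,\tilde{s}(p),\tilde{p},\tilde{s}(\tilde{p}))$; (b)~the cost of producing one summand, i.e.\ computing the substate transition together with its target substate and output packets; and (c)~the cost of accumulating all summands via $\join$ into the result abstract element and laying it out over the index set of $\absdom$. Throughout I would treat the size of each middlebox program as a fixed constant of the problem (it does not scale with $|M|$, $|E|$, or $|\packets|$), consistent with the absence of a program-size factor in the stated bound.

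For (a), I would enumerate the independent choices. Choices $(1)$--$(2)$ fix a middlebox $m$ together with an ingress channel $e$ (with $\inchannel(c,e,m)$) and an input packet $p\in\omega_2(e)$. Here I would use the structural fact from the network topology that each directed channel $e$ is an ingress channel of exactly one middlebox; hence summing the number of ingress channels over all middleboxes gives at most $|E|$ processing sites, and each carries at most $|\packets|$ input packets. Choice $(3)$ then picks the input-packet valuation $\tilde{s}(p)\in\omega_1(m)(p)$, a companion packet $\tilde{p}\in\packets$, and its valuation $\tilde{s}(\tilde{p})\in\omega_1(m)(\tilde{p})$; since each packet state is a function $Q(m)\to\{T,F\}$, there are at most $2^{|Q(m)|}\le 2^{|Q_{max}|}$ valuations for each of $p$ and $\tilde{p}$, and $\tilde{p}$ ranges over at most $|\packets|$ packets. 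Multiplying, the number of summands per processing site is $O(|\packets|^{2}\cdot 2^{2|Q_{max}|})$, and over all sites $O(|E|\cdot|\packets|^{2}\cdot 2^{2|Q_{max}|})$.

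For (b), I would invoke the 2-Locality lemma: given a source substate $\tilde{s}[p,\tilde{p}]$, the substate transitions $\TrMPackSub{(p, c)/(p_i, c_i)_{i=1..k}}$ can be computed and checked directly from the code in time linear in the size of the program, without materialising any full packet-effect state. The number of target substates reachable from a fixed source is bounded by the number of program paths, a constant, and each carries the updated entry $\tilde{\textit{ps}}$ for $\tilde{p}$ in $(5)$ together with $k=O(1)$ output packets placed on channels $e_i$ in $(6)$; thus each source substate yields $O(1)$ fully-formed summands in constant (program-dependent) time. For (c), each $\join$ updates $\omega_1$ at a single middlebox entry and $\omega_2$ at $O(1)$ output channels, so accumulation is proportional to the number of summands, while initialising and laying out the result over its index set costs $O(|M|+|E|)$. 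Summing (a)--(c) yields $O((|M|+|E|)\cdot|\packets|^{2}\cdot 2^{2|Q_{max}|})$.

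The main obstacle I anticipate is justifying the \emph{additive} $|M|+|E|$ rather than a multiplicative $|M|\cdot|E|$: this hinges precisely on the observation that ingress channels partition among middleboxes, so the channel-driven enumeration is charged once per channel ($O(|E|)$) and the remaining per-middlebox bookkeeping once per middlebox ($O(|M|)$). A secondary technical point is to confirm that it is the \emph{independent} choice of valuations for $p$ and for $\tilde{p}$ permitted by the 2-local (Cartesian) form of the transformer---rather than a joint valuation over an entire middlebox state---that keeps the exponential factor at $2^{2|Q_{max}|}$ instead of blowing up to the full middlebox state-space size; establishing this reduction rests squarely on the 2-Locality lemma, so I would make sure its hypotheses are exactly met by the summand computation in~(b).
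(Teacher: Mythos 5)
Your argument is correct, and it is essentially the only argument available: the paper states this proposition without an explicit proof, relying on exactly the direct reading of the definition of $\Tr^{\sharp}$ that you carry out --- counting the tuples $(m,e,c,p,\tilde{s}(p),\tilde{p},\tilde{s}(\tilde{p}))$ indexing the join (at most $|E|\cdot|\packets|^2\cdot 2^{2|Q_{max}|}$ since ingress channels partition among middleboxes and the two valuations are chosen independently), invoking the 2-Locality lemma for the per-summand cost, and charging the remaining bookkeeping once per middlebox and once per channel to get the additive $|M|+|E|$. Your closing remarks correctly identify the two load-bearing points: the Cartesian/2-local form of the transformer is what keeps the exponential at $2^{2|Q_{max}|}$ rather than the full middlebox state space, and the partition of ingress channels is what keeps the topology factor additive.
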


\ignore{
We first introduce some notation. Let $m$ be a middlebox, $p$ a packet arriving on channel $e$ and $v \in Q(m) \to \{T,F\}$.
%When thinking of $v$ as $s(p)$, we can compute the
These determine a sequence of actions in $m$, that is traversed when evaluating membership queries according to $v$.
We denote by $O(m,p,e,v)$ the set of packets produced by this sequence of actions.
In order to describe the effect on the middlebox state, we use $\omega \in \packets \to \pow{Q(m) \to \{T,F\}}$ to represent the abstract middlebox state.
We denote by $V(m,p,e,v,\omega)$ the set of pairs $p' \mapsto v'$ that are produced as a result of this sequence of actions
(recall that \verb|r.insert(|$\overline{d}$\verb|)| creates such new pairs).

We first note that the packet state transitions of middleboxes can be equivalently defined as the disjoint union
\TODO{complete} for every $v : Q \to \{T,F\}$.
Let $X = (a, b) \in  (M \times \packets \to \pow{Q \to \{T,F\}}) \times (E \to \pow{\packets})$.
Then applying the transformer on $X$ results in $X' = (a',b')$ where
for every $m \in M$ and $p \in \packets$, $a'(m,p) = a(m,p) \cup $

Let $p \in b(e)$, $v \in a(m,p)$, where $v : Q \to \{T,F\}$. Let $A \subseteq Q \to \{T,F\}$ be the set of all packet
}

\ignore{
\para{Abstract safety verification}
As before, any abstract element in which one of the (sub)-components is $\err$, is an error element.

\begin{definition}[Abstract Safety Verification]
The abstract safety verification problem is to decide whether \linebreak
$\lfp(\Tr^{\sharp})(a_I) = \bigsqcup\limits_{i=1}^\infty {\Tr^{\sharp}}^{i}(a_I) = \err$.
\end{definition}
}

% `'\TODO{$\err$ is not handled properly (in the domain and transformer)}

Our algorithm for safety verification computes $\mu^{\sharp} \eqdef \lfp(\Tr^{\sharp})(a_I) = \bigsqcup\limits_{i=1}^\infty {\Tr^{\sharp}}^{i}(a_I)$
%= (\omega_1,\omega_2)$ and checks whether $\err \in \omega_1(m)(p)$ for some $m \in M$ and $p \in \packets$.
and checks whether $\err\in\mu^{\sharp}$.
%it is an abstract error value (i.e., contains $\err$).
%$\mu^{\sharp} \eqdef \lfp(\Tr^{\sharp})(a_I) = \bigsqcup\limits_{i=1}^\infty {\Tr^{\sharp}}^{i}(a_I)$ and checks whether it equals $\err$.

\para{Complexity of Least Fixpoint Computation}
The height of the abstract domain lattice is determined by the number of packets that can be added to the channels of the network---($|\packets|\cdot|E|$), multiplied by the number of state changes that can occur in any of the middleboxes---$O(|M|\cdot|\packets|\cdot2^{|Q|})$.
The time complexity of the abstract interpretation is bounded by the
height of the abstract domain lattice multiplied by the time complexity of the abstract transformer:
% bounded by $|\packets||E|\allowbreak|\packets||M|2^{|max_{m\in M}(Q_m)|}$, where $|\packets||E|$ is the number of different .
\[
O(|\packets|^4\cdot|E|\cdot|M|\cdot2^{3|Q_{max}|}\cdot(|M| + |E|))\enspace.
\]

\subsection{Soundness and Completeness}

%In the following, let $\mu^{\sharp} \eqdef \lfp(\Tr^{\sharp})(a_I)$ denote the abstract least fixpoint.

Our algorithm is sound in the sense that it never misses an error state. This follows from the use of a sound abstract interpretation:
\ignore{
The use of a sound abstract interpretation, ensures soundness of our algorithm w.r.t. concrete networks.
In fact, we show soundness w.r.t. the unordered reverting semantics, which implies soundness w.r.t. the concrete (ordered non-reverting) semantics as well, as the collecting semantics of the latter is a subset of the former:
}
\begin{theorem}[Soundness]
\label{thm:soundness-packets}
%$\alpha({\BBsub{\Net}{pa}^{o}}) \absorder \mu^{\sharp}$. %Furthermore, $\alpha({\BB{\Net}^{o}}) \sqsubseteq \mu^{\sharp}$.
$\BBsub{\Net}{pa}^{o} \subseteq \BBsub{\Net}{pa}^{ur} \subseteq \gamma(\mu^{\sharp})$. %Furthermore, $\alpha({\BB{\Net}^{o}}) \sqsubseteq \mu^{\sharp}$.
%\TODO{SH: do we want to state the theorem for $\BBsub{\Net}{pa}^{ur}$? or at least say that soundness for it is implied? Eventually we want to be sound w.r.t. concrete networks}
\end{theorem}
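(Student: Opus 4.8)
The plan is to establish the two inclusions separately. The left inclusion, $\BBsub{\Net}{pa}^{o} \subseteq \BBsub{\Net}{pa}^{ur}$, is immediate and requires no abstract-interpretation machinery: it follows directly from the containment $\BB{\Net}^{o} \subseteq \BB{\Net}^{ur}$ established in \secref{unordered-semantics} (the ordered reverting transitions and reordering transitions only \emph{add} behaviors), specialized to the packet effect semantics via the $pa$ subscript. So the real content is the right inclusion, $\BBsub{\Net}{pa}^{ur} \subseteq \gamma(\mu^{\sharp})$.

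For the right inclusion, I would use the standard Galois-connection soundness argument for abstract interpretation. The key obligation is that $\Tr^{\sharp}$ soundly over-approximates the concrete transformer, i.e.\ $\alpha \circ \Tr^{ur} \sqsubseteq \Tr^{\sharp} \circ \alpha$ (equivalently $\Tr^{ur} \circ \gamma \subseteq \gamma \circ \Tr^{\sharp}$), where $\Tr^{ur}$ is the concrete transformer for the unordered reverting packet-effect semantics. Given such local soundness of the transformer, together with monotonicity of $\Tr^{\sharp}$ and the fact that $\alpha$ preserves the initial configuration (so $\alpha(\{(\sigma_I,\lambda e.\epsilon)\}) \aleq a_I$), a routine fixpoint-transfer / fixpoint-induction argument lifts transformer soundness to soundness of the least fixpoints: $\alpha(\lfp(\Tr^{ur})) \aleq \lfp(\Tr^{\sharp}) = \mu^{\sharp}$, which by the Galois connection is equivalent to $\lfp(\Tr^{ur}) \subseteq \gamma(\mu^{\sharp})$, i.e.\ $\BBsub{\Net}{pa}^{ur} \subseteq \gamma(\mu^{\sharp})$.

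The main obstacle — and the step requiring genuine care — is verifying the transformer-soundness condition, because $\Tr^{\sharp}$ is defined in terms of the \emph{substate} transition relation $\TrMPackSub{(p, c)/(p_i, c_i)_{i=1..k}}$ rather than the full packet-effect transition relation. Here I would lean on the 2-locality of the packet effect semantics (\secref{local-sem}): the substate transition relation captures, for any pair consisting of the input packet $p$ and an arbitrary other packet $\tilde{p}$, exactly the effect of a concrete transition on that pair, independently of all other packet states. This means the Cartesian (pointwise) abstraction of middlebox states loses no soundness — any concrete transition $\tilde{s} \TrMPack{(p, c)/o} \tilde{s}'$ is witnessed, projectively, by a substate transition that clause (4) of the $\Tr^{\sharp}$ definition ranges over. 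I would argue that for any $X \subseteq \PState$ in the unordered reverting semantics, each concrete transition (ordered middlebox/host step, reordering step, or reverting step) is over-approximated: the middlebox and host steps by clauses (1)--(6), the reordering steps because $\omega_{\textit{chans}}$ already forgets order and cardinality (so reordering is a no-op on $\alpha(X)$), and the reverting steps because $a_I$ is joined in and the initial middlebox state is always present in the accumulated abstract element.

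One subtlety to address explicitly is that $\Tr^{\sharp}$ as written does not appear to include a join with $a_I$ or an explicit reverting clause, so to cover the reverting transitions I would note that soundness is stated relative to the \emph{least fixpoint} containing $a_I$, and the accumulating (join-based) nature of the iteration keeps every previously reached element — in particular the initial packet states — present, so reversion to $\sigma_I(m)$ never produces a packet state outside $\gamma(\mu^{\sharp})$. With these cases discharged, transformer soundness holds and the fixpoint transfer completes the proof.
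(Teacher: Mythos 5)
Your proposal is correct and takes essentially the route the paper (implicitly) takes: the paper gives no explicit proof of this theorem, asserting only that it ``follows from the use of a sound abstract interpretation,'' which is precisely the Galois-connection/fixpoint-transfer argument you spell out, with the first inclusion immediate because the reordering and reverting transitions only add behaviors. Your explicit discharge of the two non-boilerplate obligations --- reordering steps are absorbed because the channel abstraction already forgets order and cardinality, and reverting steps because the accumulating join from $a_I$ keeps the initial packet states present in $\mu^{\sharp}$ --- is exactly what the paper leaves to the reader.
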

\ignore{
The above theorem, combined with the bisimulation between the relation state semantics and the packet state semantics, ensures that an error state is never missed by the analysis.
}
%
%\TODO{formalize soundness and completeness. Need notation for the fixpoint}
%The abstract interpretation over the relation state and over the packet state are both sound and complete for unordered reverting networks.
%In the sequel, we prove soundness and completeness for the packet space (which also results in a more efficient algorithm).
%The proof for the relation space is similar.

Our algorithm is also complete relative to the reverting unordered semantics.
\begin{theorem}[Completeness]
\label{thm:completeness-packets}
%Let $\mu^{\sharp}$ be the abstract least fixpoint, i.e., $\mu^{\sharp} = \lfp(\Tr^{\sharp})(\alpha(\sigma_I,\lambda\,e\in E\,.\, \epsilon))$.
%Then,
$\mu^{\sharp} \absorder \alpha({\BBsub{\Net}{pa}^{ur}})$. %Furthermore, $\mu^{\sharp} \sqsubseteq \alpha({\BB{\Net}^{ur}})$.
\end{theorem}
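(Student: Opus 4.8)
The plan is to establish that $\alpha({\BBsub{\Net}{pa}^{ur}})$ is an \emph{inductive invariant} of the abstract transformer and then invoke the least-fixpoint (Park) induction principle. Writing $X \eqdef \alpha({\BBsub{\Net}{pa}^{ur}})$, recall that $\mu^{\sharp}=\lfp(\Tr^{\sharp})(a_I)=\bigsqcup_{i\ge 1}(\Tr^{\sharp})^{i}(a_I)$ is the least element above $a_I$ that is closed under $\Tr^{\sharp}$, and that $\Tr^{\sharp}$ is monotone. Hence it suffices to prove the two obligations (i) $a_I \absorder X$ and (ii) $\Tr^{\sharp}(X)\absorder X$; a routine induction on $i$ (using monotonicity of $\Tr^{\sharp}$ together with (ii)) then gives $(\Tr^{\sharp})^{i}(a_I)\absorder X$ for every $i$, and taking the join yields $\mu^{\sharp}\absorder X$.

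Obligation (i) is immediate: the initial configuration $(\sigma_I,\lambda e.\,\epsilon)$ lies in ${\BBsub{\Net}{pa}^{ur}}$, so by monotonicity of $\alpha$ we get $a_I=\alpha(\{(\sigma_I,\lambda e.\,\emptyset)\})\absorder X$. The substance is in obligation (ii). Unfolding $\Tr^{\sharp}$, where $\omega_1,\omega_2$ are the two components of $X$, each join summand is produced by selecting a middlebox $m$, an ingress channel $e$ with $\inchannel(c,e,m)$, a packet $p\in\omega_2(e)$, another packet $\tilde{p}$, values $\tilde{s}(p)\in\omega_1(m)(p)$ and $\tilde{s}(\tilde{p})\in\omega_1(m)(\tilde{p})$, and a substate transition $\tilde{s}[p,\tilde{p}]\TrMPackSub{(p,c)/(p_i,c_i)_{i=1..k}}\tilde{s}[p,\tilde{p}]'$. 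By the definition of $\alpha$, the three ingredients $p\in\omega_2(e)$, $\tilde{s}(p)\in\omega_1(m)(p)$, and $\tilde{s}(\tilde{p})\in\omega_1(m)(\tilde{p})$ each witness a \emph{separate} reachable configuration, in which respectively $e$ contains $p$, the packet state of $p$ in $m$ is $\tilde{s}(p)$, and the packet state of $\tilde{p}$ in $m$ is $\tilde{s}(\tilde{p})$. The Cartesian abstraction has forgotten the correlation among them, and the crux is to recover a single configuration realizing all three at once.

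This is exactly where the reverting-unordered hypothesis enters. I would apply a strengthening of the sticky packet states property (\lemref{sticky-packet-states}) that simultaneously realizes one channel fact together with the two packet-state facts for $p$ and $\tilde{p}$: there is a reachable configuration $(\hat{\sigma},\hat{\pi})\in{\BBsub{\Net}{pa}^{ur}}$ with $p\in\hat{\pi}(e)$, $\hat{\sigma}(m)(p)=\tilde{s}(p)$ and $\hat{\sigma}(m)(\tilde{p})=\tilde{s}(\tilde{p})$. From this configuration I fire the concrete packet-effect middlebox transition on $m$ reading $p$ from $e$, along the execution path witnessing the chosen substate transition. The locality of the packet-effect semantics (\secref{local-sem}) guarantees that the effect of this transition on the packet state of $\tilde{p}$, and the emitted packets $(p_i,c_i)_{i=1..k}$, depend only on the substate $\tilde{s}[p,\tilde{p}]$; hence the successor maps $\tilde{p}$ to $\tilde{s}[p,\tilde{p}]'(\tilde{p})$ and emits exactly $(p_i,c_i)$. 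The successor configuration is therefore in ${\BBsub{\Net}{pa}^{ur}}$, its abstraction contains $\tilde{s}[p,\tilde{p}]'(\tilde{p})\in\omega_1(m)(\tilde{p})$ and each $p_i\in\omega_2(e_i)$, so the summand is $\absorder X$. Joining over all summands gives $\Tr^{\sharp}(X)\absorder X$. The error case is uniform: if the substate transition aborts, the same simultaneously-reachable configuration reaches $\err$, so $\err\in{\BBsub{\Net}{pa}^{ur}}$ and $\err\in X$.

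I expect the main obstacle to be the generalized sticky property itself: combining one channel fact with \emph{two} packet-state facts in a single reachable configuration, whereas \lemref{sticky-packet-states} as stated combines a channel fact with a single packet-state fact. I would obtain it by the reconstruction argument underlying the sticky lemmas: revert all middleboxes to their initial states, then, using the unordered semantics and the ability to resend and duplicate packets (sticky packets), independently replay the three witnessing runs so that their target facts coexist. The delicate point is showing these replays do not interfere — in particular that reconstructing the packet state of $p$ does not disturb that of $\tilde{p}$, nor consume the pending copy of $p$ on $e$ — and this non-interference is precisely what makes the correlation-forgetting Cartesian abstraction lossless for reverting unordered networks. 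Combined with obligations (i) and (ii), Park induction then delivers $\mu^{\sharp}\absorder\alpha({\BBsub{\Net}{pa}^{ur}})$, which together with \thmref{soundness-packets} shows the analysis is exact on this semantics.
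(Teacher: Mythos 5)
Your overall strategy coincides with the paper's: the paper proves this theorem by induction on the number of applications of $\Tr^{\sharp}$, showing that each application stays below $\alpha({\BBsub{\Net}{pa}^{ur}})$, and it recombines the decorrelated facts by appealing to \lemref{sticky-packet-states}. Your Park-induction formulation is equivalent to that induction, and your identification of exactly which facts must be made to coexist --- the pending packet $p$ on $e$, the packet state $v$ of $p$ in $m$, and the packet state $\tilde{v}$ of $\tilde{p}$ in $m$ --- is if anything more explicit than the paper's argument, which only invokes the two-way combination of ``$p$ is on $e$'' with ``the packet state of $\tilde{p}$ is $\tilde{v}$'' and is silent about the packet state of the input packet $p$, even though $v$ is what determines the execution path and hence which update is applied to $\tilde{p}$.

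The gap is the one you flag yourself and then do not close: the generalized, three-way sticky property. Your proposed derivation --- revert, then ``independently replay'' the witnessing runs --- does not go through as stated. The paper's proof of the underlying sticky-states lemma reconstructs a single \emph{full} middlebox state by replaying, after a revert, the exact packet sequence that produced it; this realizes correlated packet states for \emph{all} packets simultaneously, but only those realized by that one relation state. If $v$ and $\tilde{v}$ come from two different reachable relation states, replaying the second witnessing sequence after the first does not start from the initial state, so the middlebox's state-dependent behaviour on those packets can differ, and the first replay's effect on $p$ (or on $\tilde{p}$) can be clobbered rather than preserved. So ``non-interference'' is not a detail to be checked but the actual mathematical content of the step, and it does not follow from reverting plus unordered channels alone; it would have to exploit the specific structure of \AMDL\ updates (e.g., which tuples an execution driven by $v$ can write, relative to the tuples that $\tilde{p}$'s queries read, and the fact that a substate with no jointly realizing full state may still admit a substate transition because $\PState[m]$ imposes no consistency constraint). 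As written, both your argument and the paper's stop short of establishing this; to complete the proof you must either prove the three-way lemma outright (possibly restricting the class of update conditions for which it holds) or show that the transformer's output is insensitive to the missing correlation, and neither is immediate.
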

The proof of \thmref{completeness-packets} relies on the sticky property formalized by \lemref{sticky-packet-states}.
The theorem states that for reverting unordered networks %the abstract least fixpoint
$\mu^{\sharp}$ is at least as precise as applying the abstraction function on the concrete packet state network semantics.
In particular, this implies that if $\mu^{\sharp}$ is an abstract error element then $\err \in {\BBsub{\Net}{pa}^{ur}}$.
As a result, for such networks our algorithm is a decision procedure. For other networks it may produce false alarms, if safety is not maintained by an unordered reverting abstraction.
\ignore{
Due to the bisimulation with the relation state representation, the same holds w.r.t. ${\BB{\Net}^{ur}}$ (the collecting semantics over relation states). %instead of ${\BB{\Net}^{ur}}_{pa}$.
Thus, for such networks the analysis is precise w.r.t. safety properties.
}

\ignore{
The proof of \thmref{completeness-packets} relies on a finer variant of the sticky states property of reverting unordered networks for the case of packet state representation, which we state next.
\begin{lemma}
\label{lem:sticky-packet-states}
For every channel $e$, packets $p, \tilde{p}$, middlebox $m$ and state $\tilde{v}$ of $\tilde{p}$ in $m$:
If in some reachable configuration channel $e$ contains $p$ and in some (possibly other) reachable configuration the packet state of $\tilde{p}$ in $m$ is $\tilde{v}$, then %every run can be extended such that eventually we get to a
there exists a reachable configuration where simultaneously $e$ contains $p$ and the packet state of $\tilde{p}$ in $m$ is $\tilde{v}$.
%If in some scenario channel $c$ contains $p$ and $m$ is in state $q_m$, then every scenario can be extended such that eventually we get to a configuration where simultaneously $c$ contains $p$ and $m$ is in state $q_m$.
\end{lemma}
Intuitively, the proof of the lemma follows as all middleboxes can revert and the unordered semantics enables a scenario where the particular state and packets are reconstructed.
\iflong
\begin{proof}
Follows directly from \lemref{sticky-states} and from the bisimulation between the relation state space and the packet state space of middleboxes.
\end{proof}
\fi
}

%\TODO{SH: added here. is it ok?} The following theorem implies that our algorithm is optimal in the following sense:
%\begin{theorem}
%\label{thm:reverting-unordered-conphard}
%The safety problem for  unordered reverting networks is coNP-hard.
%\end{theorem}

\iflonglong
\longlongtrue

\section{Reverting Safety Properties}
\label{sec:RevertRobust}
% \TODO{[12] SH: wrote something. Need to revise and use the proper notation from the rest of the paper}
\ignore{As summarized in \Cref{thm:completeness-packets}, our analysis is precise for unordered reverting networks.}
Recall that we express safety properties via middleboxes in the network.
Therefore, in unordered reverting networks, the possibility to revert applies to the safety property as well.
\iflonglong
As the reverting semantics adds transitions, this
\else
This
\fi
may increase the possible set of transitions of the safety middleboxes, and, in particular, may add transitions into an error state.
\iflonglong
For some temporal safety properties this is a source of imprecision as they cannot be precisely captured by the reverting semantics, thus introducing false alarms.
%This may happen for history-sensitive properties such as ``a message with property $x$
%can only be sent after a message with property $y$ has been sent'', cannot always be proved with our method.
\else
For some temporal safety properties, this may introduce false alarms.
\fi
%In particular, in such networks the safety property is also expressed via a reverting middlebox.

\iflonglong
For example, if the safety property forbids a packet from host $h_{\text{ext}}$ to host $h_\text{in}$ before a packet from host $h_\text{in}$ has been sent to $h_\text{ext}$,
then in a reverting network,
%it will become equivalent to completely forbidding a packet from host $h_\text{ext}$ to host $h_\text{in}$.
%The reason is that
even if a packet from host $h_\text{in}$ has been previously sent to $h_\text{ext}$, a revert transition allows the middlebox to return to its initial state, from which a packet from host $h_\text{ext}$ to host $h_\text{in}$ leads to an error state.
\fi

\iflonglong
However, we identify a class of safety middleboxes that is guaranteed not to be a source of imprecision. This class includes any stateless safety middlebox, and in particular isolation middleboxes,
More generally, we provide a sufficient %and necessary
condition for a safety property to be precisely expressible in a reverting network.
To do so, we first decouple the enforcement of safety from the forwarding behavior of the network.
For this decoupling, in the sequel we consider safety middleboxes with a single output port that forward any incoming packet (on any input port) to the output port without any modification.
This ensures that safety middleboxes do not affect the forwarding behavior of the network. In particular, the forwarding behavior of safety middleboxes does not depend on their state. The state is only used to enforce safety.
For such safety middleboxes we define:
\else
However, we identify a class of safety middleboxes, which includes isolation middelboxes, that is guaranteed not to be a source of imprecision.
We restrict our attention to safety middleboxes with a single output port that forward any incoming packet (on any input port) to the output port without any modification.
This is not a real restriction; it only means that we decouple enforcement of safety from the forwarding behavior of the network. %ensures that safety middleboxes only enforce safety and do not affect the forwarding behavior of the network.
For such safety middleboxes we define:
\fi

%The following definition provides a %sufficient and necessary condition for a safety property to be precisely expressible in a reverting network.
%characterization of such middleboxes whose precision is not damaged by revert transitions.

\begin{definition}
A safety middlebox $m$ is \emph{revert-robust} if for every sequence of input packets $in = (p_i,c_i)_{i=1..k}$, if no execution of $m$ on $in$, starting from $m$'s initial state, leads to $\err$, then for every suffix $in'$ of $in$,
no execution of $m$ on $in'$ starting from $m$'s initial state leads to $\err$ as well.
\end{definition}

% \TODO{[13] SH: should we define it as a property of the language of safe sequences of packets? Doesn't it just mean that the language of safe sequences is suffix-closed?}

Intuitively, revert-robustness means that the language of ``safe'' sequences of packets is suffix-closed. In particular, any stateless safety middlebox (such as an isolation middleboxes) is revert-robust.
\iflonglong
For example, if the safety middlebox forbids a packet from host $h_\text{ext}$ to host $h_\text{in}$ after a packet from host $h_\text{in}$ has been sent to $h_\text{ext}$, then it is revert-robust. The reason is that, in this example, the ``safe'' input sequences are ones where no packet from host $h_\text{ext}$ to host $h_\text{in}$ has a preceding packet from host $h_\text{in}$ to $h_\text{ext}$. Therefore any suffix of a safe input sequence is also safe. As a result, such a safety middlebox will not introduce false alarms in a reverting network, as reverting transitions will just make the middlebox ``forget'' the prefix of the sequence. (Note that it will also not make the network wrongfully safe, as safety requires that all executions, including the ones that do not use revert transitions, are safe.)
%in a reverting network, it will remain the same. The reason is that in this case reverting transitions can only make the error state ``further away''.
\fi
\iflonglong
Next, we claim that revert-robustness is a sufficient %but also necessary
condition for not losing precision of the analysis (i.e., not introducing false alarms) due to the revert transitions of the safety middlebox.
In order to formalize this claim, we need the following definitions. For a network $\Net$ with a set of middleboxes $M$, a subset $S \subseteq M$, and a semantic identifier $i \in \{o,u,or,ur\}$, we denote by ${\BBsub{\Net}{pa}^{i \setminus S}}$ the corresponding network collecting semantics, with the exception that no reverting transitions are applied to the middleboxes in $S$ (when applicable).
We then have:
%In order to formalize this claim we prove the next two lemmas:
\else
Next, we claim that revert-robustness suffices to ensure that the safety middleboxes themselves are not a source of false alarms in our analysis.
\fi
\begin{lemma}
\label{lem:revert-robust}
Let $\Net$ be a network such that all of its safety middleboxes, $S \subseteq M$, are revert-robust.
Then for every $i \in \{o,u,or,ur\}$, $\err \in {\BBsub{\Net}{pa}^{i \setminus S}}$ if and only if $\err \in {\BBsub{\Net}{pa}^{i}}$,
where ${\BBsub{\Net}{pa}^{i \setminus S}}$ is the same as $ {\BBsub{\Net}{pa}^{i}}$, except that no reverting transitions are applied to the middleboxes in $S$.
%Let $\Net'$ be the same network under the assumption that all safety middleboxes are non-reverting.
%Then $\Net$ is safe if and only if $\Net '$ is safe.
\end{lemma}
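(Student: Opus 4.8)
The plan is to prove the two implications separately; one is immediate and the other carries all the content. For the semantic identifiers $i \in \{o,u\}$ there are no reverting transitions at all, so ${\BBsub{\Net}{pa}^{i \setminus S}}$ and ${\BBsub{\Net}{pa}^{i}}$ coincide and there is nothing to prove; hence I would restrict attention to $i \in \{or,ur\}$. The ``only if'' direction is trivial: the $i \setminus S$ semantics is obtained from the $i$ semantics by \emph{deleting} transitions (the reverts of middleboxes in $S$), so every $i \setminus S$-run is also an $i$-run, giving ${\BBsub{\Net}{pa}^{i \setminus S}} \subseteq {\BBsub{\Net}{pa}^{i}}$ and in particular $\err \in {\BBsub{\Net}{pa}^{i \setminus S}} \Rightarrow \err \in {\BBsub{\Net}{pa}^{i}}$. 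All the work is in the reverse implication: given a run reaching $\err$ that may revert safety middleboxes, I must rebuild a run reaching $\err$ that never reverts them.

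So fix a run $\rho$ of the $i$ semantics that reaches an error configuration, and truncate it at the first transition $t^*$ that drives some middlebox $m^*$ into $\err$. The first key observation, which I would isolate as a claim proved by induction on run length, is that reverts of $S$-middleboxes are invisible to the rest of the network. Indeed, each safety middlebox has a single output port and forwards every incoming packet unmodified, so the packets it emits when processing $(p,c)$ depend only on $(p,c)$ and not on its local state; and a reverting transition $(\sigma,\pi) \TrN{m}{} (\sigma',\pi)$ of an $S$-middlebox leaves $\pi$ and every $\sigma(m')$ with $m' \neq m$ untouched. Consequently, if I delete from $\rho$ every reverting transition of an $S$-middlebox, then regardless of how the remaining internal states of $S$-middleboxes evolve (all their executions forward identically), the resulting transition sequence has exactly the same channel-content trajectory $\pi$ and the same state trajectory $\sigma(m')$ for every $m' \notin S$. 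In particular the \emph{sequence of packets each $S$-middlebox is fed} is unchanged.

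The second key step routes the aborting middlebox. If $m^* \notin S$, then its own reverts (if it uses any) are still permitted under $i \setminus S$, and since by the first step its input sequence is unchanged, I can replay $m^*$'s behavior verbatim and reach $\err$. If $m^* \in S$, let $in = (p_1,c_1),\dots,(p_k,c_k)$ be the entire sequence of packets $m^*$ processes along $\rho$ up to $t^*$, and let $in'$ be the suffix of $in$ that $m^*$ processes after its last revert preceding $t^*$. The portion of $\rho$ following that revert is an execution of $m^*$ on $in'$, from $m^*$'s initial state, that reaches $\err$; thus $in'$ is unsafe. By the contrapositive of revert-robustness (a suffix admits an aborting execution $\Rightarrow$ the whole sequence admits one), there is an execution $\xi$ of $m^*$ on the full $in$, from its initial state, that reaches $\err$, say while processing $p_{k'}$ with $k' \le k$. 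I would then have $m^*$ follow $\xi$.

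Finally I assemble $\rho'$: replay the forwarding skeleton of $\rho$ with all $S$-reverts removed, each $S$-middlebox following a non-reverting execution ($\xi$ for $m^*$, any forwarding execution for the others), and stop at the first moment a safety middlebox enters $\err$. Up to that moment every $S$-middlebox merely forwards, so by the first step the channels and the non-$S$ states agree with $\rho$ and every replayed transition is enabled; and such a moment must occur no later than $m^*$'s processing of $p_{k'}$, since $\xi$ aborts there if nothing aborts earlier. Hence $\rho'$ is a legal run of the $i \setminus S$ semantics reaching $\err$, completing the reverse implication. The subtle point, and the place I expect to argue most carefully, is the direction of revert-robustness used in the second step: one might naively hope that removing reverts preserves safety, but the definition gives the opposite implication — which is exactly what is needed, since we invoke it only for the middlebox that \emph{does} abort, turning a suffix-abort into a full-sequence-abort, while the remaining safety middleboxes are harmless (if dropping their reverts makes them abort, that only reaches $\err$ sooner).
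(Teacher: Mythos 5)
Your proposal is correct and follows essentially the same route as the paper's proof: the easy direction by inclusion of transition relations, and the hard direction by observing that safety middleboxes forward packets independently of their state (so deleting their reverts leaves channels and all other middlebox states unchanged) and then applying revert-robustness in contrapositive form to promote the abort on the suffix processed since the last revert to an abort on the full input sequence. The paper phrases the hard direction contrapositively (no error without $S$-reverts implies no error with them) where you argue constructively, and you additionally spell out the harmless cases ($m^*\notin S$, or another safety middlebox aborting earlier), but the substance is identical.
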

\iflonglong
This means that the network is safe (under any of the semantics) if and only if it is safe with the same semantics except that all safety middleboxes are non-reverting.
\fi

\iflonglong
\begin{proof}
The direction from left to right is trivial, as the reverting semantics is a sound approximation, hence a computation leading to error when $S$ is non-reverting also exists when $S$ is reverting.
In order to prove the converse direction we denote by $\Net$ the network where all middleboxes including $S$ may revert and by $\Net '$  the network where $S$ may not revert. We prove that if all the computations of $\Net '$ are safe then so are the computations of $\Net$.
The proof is straightforward. 
We observe that for every scenario $s$ in $\Net$ there is a corresponding scenario in $\Net '$ which is identical to $s$ other than the behavior of the safety middleboxes (this is because safety middleboxes do not affect forwarding of packets).
Consider a safety middlebox $m$ and an arbitrary step $i$ in the scenario.
Let $p_1,\dots,p_\ell$ be the sequence of packets that $m$ processed until step $i$ and let $p_r,\dots,p_\ell$ be the packets it processed since it was last reverted.
Since $\Net '$ is safe, it follows that in $\Net '$ the middlebox $m$ is not in $\err$.
As $m$ is revert-robust and $p_r,\dots,p_\ell$ is a suffix of $p_1,\dots,p_\ell$, then $m$ is also not in $\err$ state in $\Net$.
Thus, we get that for every $s, i$ and $m$, the middlebox $m$ is not in $\err$ state.
Hence, $\Net$ is safe and the proof is complete.
\end{proof}
\fi

\ignore{
\begin{lemma}
For every non revert-robust safety middlebox $m$ there exists a network $\Net$ with one \emph{non-reverting} forwarding middlebox $f$, one safety middlebox $m$ and a single host such that $\Net$ is safe only if $m$ is not reverting.
\end{lemma}
\begin{proof}
As we assume that $m$ is not revert-robust, then there exist a sequence of packets $p_1,\dots,p_\ell$ and a suffix $p_r,\dots,p_\ell$ such that $m$ reaches $\err$ over $p_r,\dots,p_\ell$ and not over $p_1,\dots,p_\ell$.
Our witness $\Net$ is a network where a middlebox $f$ produces the sequence $p_1,\dots,p_\ell$ to $m$ and then never transmit packets again.
To maintain the needed order of packets we construct the network such that all of $m$ output packets arrive to $f$, and we construct $f$ such that a single packet is outputted in response to a packet from $m$.
It is straight forward to see that if $m$ is non-reverting, then $\err$ is never reached.
On the other hand, if $m$ reverts after $r$ packets, then $\err$ state is reached after $p_\ell$ is processed.
\end{proof}
}

\ignore{
The next lemma states that revert-robustness is a sufficient but also necessary condition for not losing precision of the analysis (i.e., not introducing false alarms) due to the revert transitions of the safety middlebox.
In order to formalize this claim we need the following definitions. For a network $\Net$ that includes middlebox $m$ we denote by ${\BBsub{\Net}{pa}^{ur \setminus m}}$ the unordered reverting network collecting semantics, with the exception that no reverting transitions are added to $m$.

\begin{lemma}
\TODO{need to formalize the "necessary and sufficient" claim. I wrote something but not sure}
Let $m$ be a safety middlebox. The condition that for any network $\Net$ where $m$ is the only safety middlebox, $\err \in {\BBsub{\Net}{pa}^{ur}}$ if and only if $\err \in {\BBsub{\Net}{pa}^{ur \setminus m}}$ is equivalent to the condition that $m$ is revert-robust.
\end{lemma}

}

%For a middlebox $m$ we denote by $m^r$ its reverting variant, where the transition relation is extended with transitions that return to the initial state (similarly to the reverting network semantics, except that this is now part of the middlebox's transition relation). Further, for a network $\Net$ that includes $m$, we denote by $\Net[m^r/m]$ the network obtained when $m$ is substituted by $m^r$.
%
%\begin{lemma}
%\TODO{need to formalize the "necessary and sufficient" claim. I wrote something but didn't refer to any particular network semantics. Which one do we want? concrete? unordered? Definitely not reverting in this version, but I don't think this version is saying what we want. An alternative is to talk about reverting network but where $m$ is not reverting}
%Let $m$ be safety middlebox and $m^r$ its reverting variant. Then the condition that for any network $\Net$, $\Net$ is safe if and only if $\Net[m^r/m]$ is safe is equivalent to the condition that $m$ is revert-robust.
%\end{lemma}

\else
\para{Properties}
Recall that we express safety properties via middleboxes in the network.
Therefore, in unordered reverting networks, the possibility to revert applies to
the safety property as well, and may introduce false alarms due to addition of
behaviors leading to error. However, for safety properties such as isolation
which are suffix-closed (i.e., all the suffixes of a safe run are themselves
safe runs), this cannot happen (\Cref{sec:RevertRobust}).
% Due to space constraints, we defer the formalization
% and proof of this property to .
\fi

%Therefore, history-sensitive properties such as ``a message with property $x$
%can only be sent after a message with property $y$ has been sent'', cannot always be proved with our method.
%%
%We notice that for many properties, including network isolation, our method
%is precise (technically, our method is complete for the class of \emph{revert-robust} properties (\secref{RevertRobust}).
%%
%We note that in practice, our analysis may still be able to prove safety, even in cases where the property is not
%revert-robust; our experiments include an example (enterprise network) where safety depends on a property that is not revert-robust,
%which our analysis is still able to prove.
%\cbend

\section{Implementation and Initial Evaluation}
\label{sec:empirical}

In this section, we describe our implementation of the analysis described in \secref{abs-int}, and report our initial experience running the algorithm on a few example networks.

\para{Implementation}
We have developed a compiler, \verb|amdlc|, which takes as input a network topology \cbstart and its initial state \cbend (given in \verb|json| format) and \AMDL\ programs for the middleboxes that appear in the topology. The compiler outputs a Datalog program, which can then
be efficiently solved by a Datalog solver.
%which implements a chaotic iteration algorithm efficiently by utilizing incremental computation.
%
Specifically, we use LogicBlox~\cite{aref2015design}.
%, a state-of-the-art Datalog solver.

%We implemented the computation of $\lfp(\Tr^{\sharp})(a_I)$ %least-fixed point computation over the Cartesian abstract domain
%by translating the given network to a Datalog program, and running LogicBlox~\cite{aref2015design} to compute the least fixed point of
%network states.
%We compile AMDL programs into Datalog.
%We also wrote a separate Jason utility for compiling network configurations.

The generated Datalog programs include three relations: (i)~\verb|packetsSeen|,
which stores the packets sent over the network channels;
(ii)~\verb|middleboxState|, which stores the packet state of individual packets
in each middlebox (i.e., the possible valuation of each middlebox program's
queries for each individual packet); and (iii)~\verb|abort|, which stores the
middleboxes that have reached an $\err$ state.

We encode the packets that hosts can send to their neighboring middleboxes and
the initial state of the middleboxes as Datalog \emph{facts} (edb), and the
effects of the middlebox programs, i.e. relation update actions and packet
output actions, as Datalog \emph{rules} (idb).

We then use the datalog engine to compute the fixed point of the datalog
program. That fixed point is exactly the least fixed point  $\mu^{\sharp} \eqdef
\lfp(\Tr^{\sharp})(a_I) = \bigsqcup\limits_{i=1}^\infty {\Tr^{\sharp}}^{i}(a_I)$

% Datalog programs consist of \emph{facts} (edb), and \emph{rules} (idb). The
% facts in our Datalog program are the packets that hosts can send to their
% neighboring middleboxes, and the initial state of the middleboxes. The rules in
% our Datalog program describe the effects of the middlebox programs. Our compiler
% generates rules corresponding to relation update actions and packet output
% actions.
%The heads of the rules describe the effect of the action, and the tails
%describe the conditions which lead to the execution of the action.

% KALEV: Removed the following section - its value here isn't clear.

% We have developed a web application, \emph{StateSafe}, which allows the user to compose a network topology from a set of predefined middleboxes, and run the analysis on it. We wrote a library of commonly used middleboxes and manually translated them to Datalog. We plan to add an automatic translation from AMDL to Datalog.
% We produce a separate rule for each relation update action
% We have a rule for every output packet and a rule for every update element in every action in the middlebox program. In an output rule, the tail of the rule corresponds to the action condition, and the head of the rule is a $packetSeen$ predicate with the corresponding middlebox, port and packet.
% Similarly, for every update rule, the tail corresponds to the action condition and if necessary has a predicate representing a currently possible valuation of the preconditions, and the head is a $relationPacketState$ predicate with the corresponding update to precondition valuations.

\para{Evaluation}
The main challenge in acquiring realistic benchmarks is that middlebox
configuration and network topology are considered security sensitive, and as a
result enterprises and network operators do not release this information to the
public. Consequently, we benchmarked our tool using the synthetic topologies and
configurations described by \cite{DBLP:conf/nsdi/PandaLASS17}.
%These benchmarks were designed based on conversations with operators. We
% describe these benchmark cases below.

Our benchmarks focus on datacenter networks and enterprise networks. The set of
middleboxes we used in our datacenter benchmarks is based on information
provided in \cite{potharaju2013demystifying}, and on conversations with
%Amazon and other
datacenter providers. We ran both a simple case where each tenant machine is
protected by firewalls and an IPS (Intrusion Prevention System); and a more
complex case where we use redundant servers and distribute traffic across them
using a load balancer. Our enterprise topology is based on the standard topology
used in a variety of university departments including UIUC (reported in
\cite{MaiKACGK11}), UC Berkeley, Stanford, etc. which employ firewalls and an IP
gateway.

% We ran our implementation on a few network topologies, one of them synthetic and
% the rest inspired by real-world networks.
% \TODO{KA: The following two sentences read really out of place. Consider
% deleting.} Unfortunately, LogicBlox does not currently support magic set
% transformations~\cite{DBLP:conf/pods/BancilhonMSU86} and does not optimize
% large relations~\cite{DBLP:conf/aplas/WhaleyACL05}.
% \cbstart
% Therefore, the running times of our tool can be drastically improved by a
% direct implementation of Chaotic iterations that employs incremental updates.
% \cbend
%\footnote{LogicBlox does generate warnings for long computations, which is a
% useful feature.}. Therefore, we manually optimized the generated rules for the
% Firewall middlebox by avoiding some of the costly computations. In the future,
% we plan to fully integrate these optimizations which will drastically speed up
% the verification time and improve scalability.
% The analysis of two of the networks we verified was run multiple times,
% increasing the number of hosts or middleboxes in the network or the number of
% middleboxes in the network in each run, to test how well our analysis scales.
%We now describe the network topologies that were used in our experiments, and
%report the results of the scaling tests.
%
% The scaling experiments were run on Amazon EC2 r4.16xlarge instances utilizing a
% 64-core Intel(R) Xeon(R) CPU E5-2686 and 488GiB RAM, running Ubuntu 16.04.2 LTS
% and LogicBlox version 4.2.1.

We ran two scaling experiments, measuring how well our system scales when the
number of hosts or the number of middleboxes in the network increases
The experiments were run on Amazon EC2 r4.16 instances with 64-core CPUs
% Intel(R) Xeon(R) CPU E5-2686
and 488GiB RAM.
%, running Ubuntu 16.04.2 LTS and LogicBlox version 4.2.1.

\ignore{
\para{Middlebox Chain}
We tested our implementation on the networks illustrated in \figref{example:good} and \figref{example:bad}. Our system found the configuration error in \figref{example:bad}, and successfully proved that the network in \figref{example:good} is safe. %To evaluate the feasibility of our solution, we ran the analysis of \figref{example:good} on networks with varying numbers of hosts ranging from 20 to 12,500. Our implementation successfully verified a network with 12,500 hosts in under five hours, suggesting that the implementation could be used to verify realistic networks.

% \figref{combined-scaling} (circles) shows the results of running our implementation on the example in \figref{example:good} with the number of hosts between 1,000 and 10,000 in increments of 1,000. The running time is reported in seconds.

\para{Session Firewalls}
We tested our implementation on the network illustrated in the running example (\figref{running-ex-topo}). Our system successfully proved that the network is safe.
% \figref{combined-scaling} (squares) shows the verification time of the session firewalls example depicted in \figref{running-ex-topo}, with number of hosts ranging from 2 to 15,000.
}

\begin{figure}[t]%{r}{0.4\textwidth}
    \centering
    \includegraphics[width=0.5\textwidth]{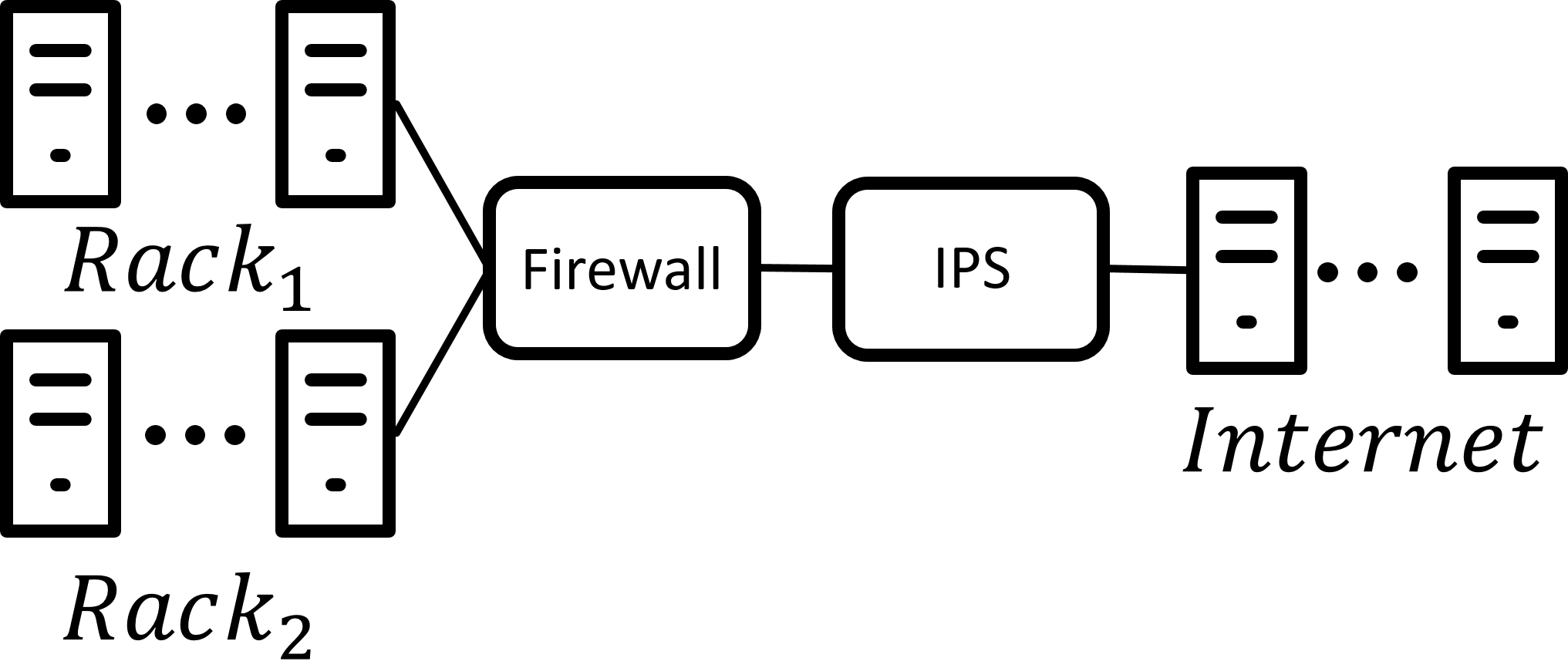}
    \caption{Topology of the datacenter example.}
    \label{fig:multi-tenant-datacenter}
\end{figure}

\para{Multi Tenant Datacenter Network}
\figref{multi-tenant-datacenter} illustrates the topology of a multi tenant
datacenter. Each rack hosts a different tenant, and the safety property we wish
to verify is isolation between the hosts of the two racks. In this example the
network also employs an IPS to prevent malicious traffic from reaching the
datacenter. Actual IPS code is too complex to be accurately modeled in \AMDL;
instead we over-approximate the behaviour of an IPS by modeling it as a process
that non-deterministically drops incoming packets.
% We successfully verified isolation in this network.

% \figref{combined-scaling} (triangles) shows the analysis times of a datatcenter with the number of hosts between 1,000 and 9,000.

\para{Enterprise Network}
\figref{enterprise} illustrates the topology of an enterprise network.
The enterprise network consists of three subnets, each with a different security policy.
The \emph{public} subnet is allowed unrestricted access with the outside network.
The \emph{quarantined} subnet is not allowed any communication with the outside network.
The \emph{private} subnet can initiate communication with a host in the outside network, but hosts in the outside network cannot initiate communication with the hosts in the \emph{private} subnet.

To evaluate the feasibility of our solution, we ran the analysis of \figref{enterprise} on networks with varying numbers of hosts ranging from 20 to 2,000. Our implementation successfully verified a network with 2,000 hosts in under four hours, suggesting that the implementation could be used to verify realistic networks.
%
% Our method can verify the security policy of the \emph{quarantined} subnet, but cannot verify the security policy of the \emph{private} subnet. The reason is that the \emph{private} subnet security policy is not reverting. This safety property can be naturally modeled in AMDL. However, the reverting semantics in this case will intruce a behviour that leads to error, and indeed our analysis will lead to false alarms.
% The safety property is isolation of the \emph{quarantined} subnet from the outside network. It is enforced by an isolation middlebox connected to the \emph{quarantined} subnet.
%
\figref{enterprise-scaling} shows the times of the analysis on an enterprise network with 20--2,000 hosts.

\para{Datacenter Middlebox Pipeline}
\Cref{fig:mbox-scaling-topo} describes a datacenter topology with a pipeline of middleboxes connecting servers to the Internet. The topology contains multiple middlebox pipelines for load-balancing purposes and to ensure resiliency. We use this topology to test the scalability of our approach w.r.t the size of the network, by adding additional middlebox pipelines and keeping the number of hosts constant.
%\Cref{fig:mbox-scaling-topo} illustrates a topology with two middlebox chains, whereas \Cref{fig:mbox-scaling-topo-four-chains} illustrates a topology with four middlebox chains.
%Note that in addition to the middleboxes shown in \Cref{fig:mbox-scaling-topo,fig:mbox-scaling-topo-four-chains} the network also contains switches, which were omitted from the illustration.

\figref{mbox-scaling-test} shows the running times of the analysis of a datacenter with 3--189 middleboxes (1--32 middlebox chains). All topologies contained 1000 hosts.

\begin{figure}[t]

\begin{subfigure}[t]{0.5\textwidth}
    \centering
    \includegraphics[width=\textwidth]{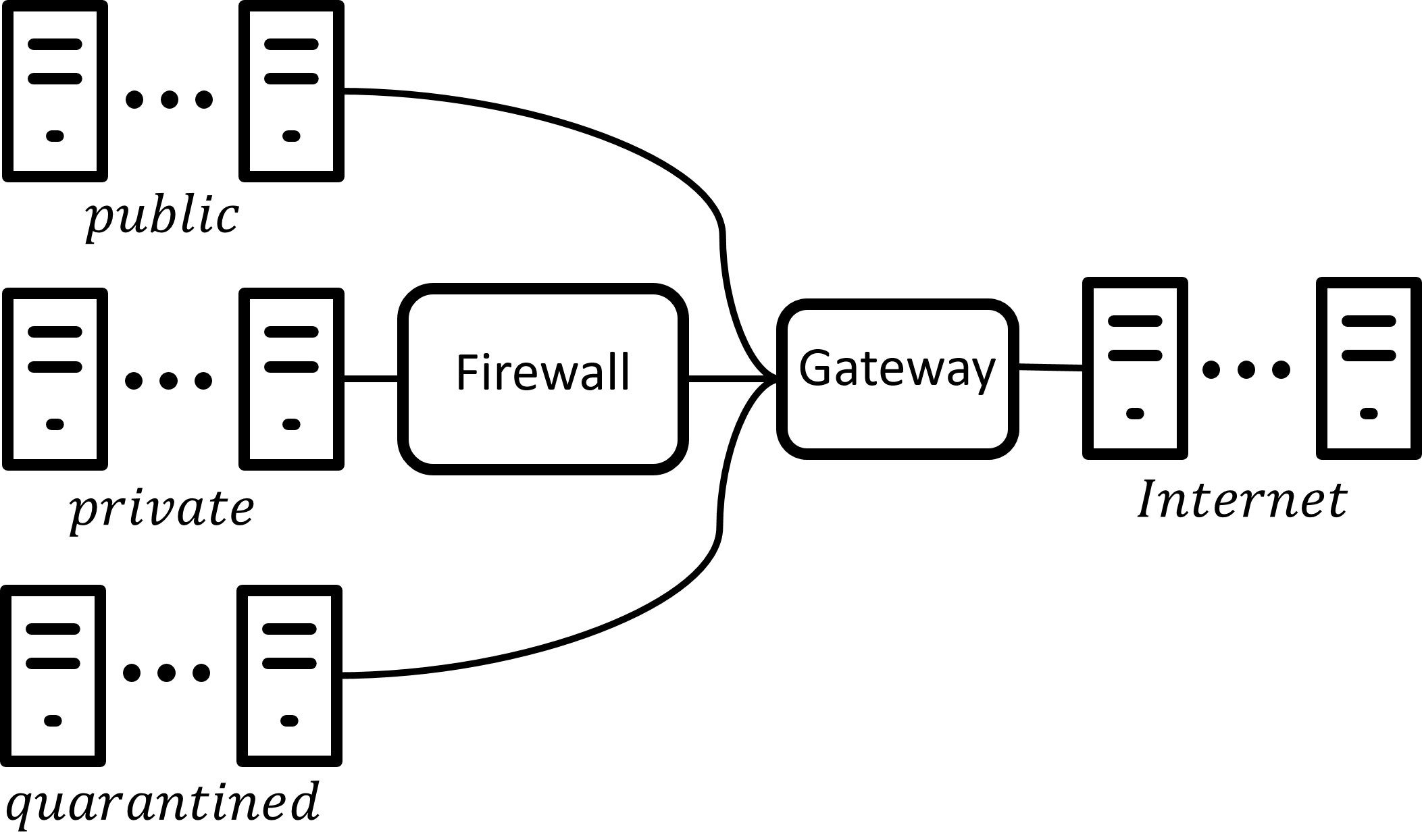}
    \caption{Enterprise}
    \label{fig:enterprise}
\end{subfigure}
~
\begin{subfigure}[t]{0.5\textwidth}
  \centering
  \includegraphics[width=\textwidth]{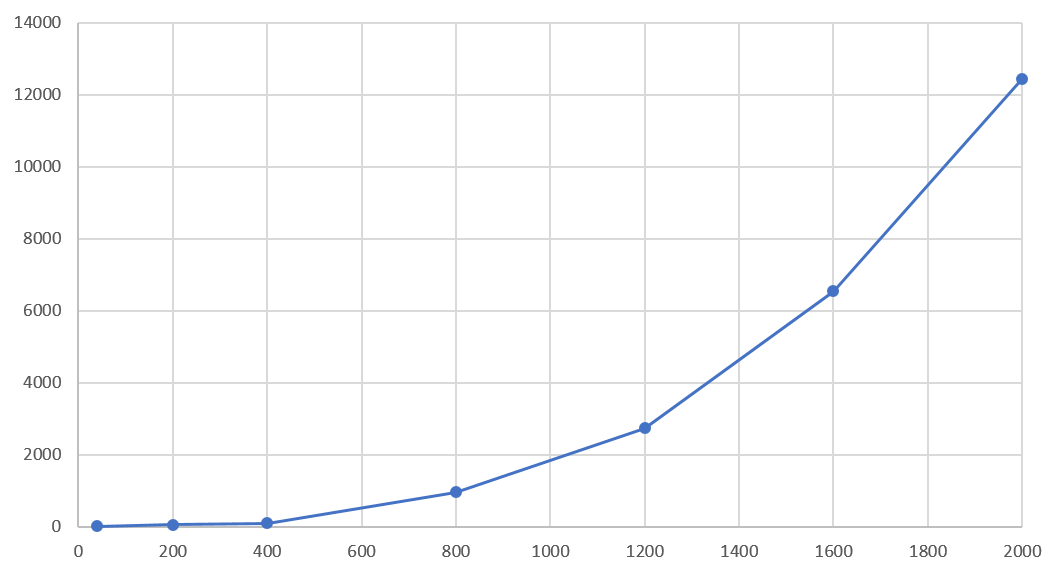}

  \caption{Running time (seconds).}
  \label{fig:enterprise-scaling}
\end{subfigure}
\caption{Topology and running times of the host scalability test.}
\end{figure}

\begin{figure}
\begin{tabular}{c}
\begin{subfigure}[t]{0.45\textwidth}
    \centering
    \includegraphics[width=\textwidth]{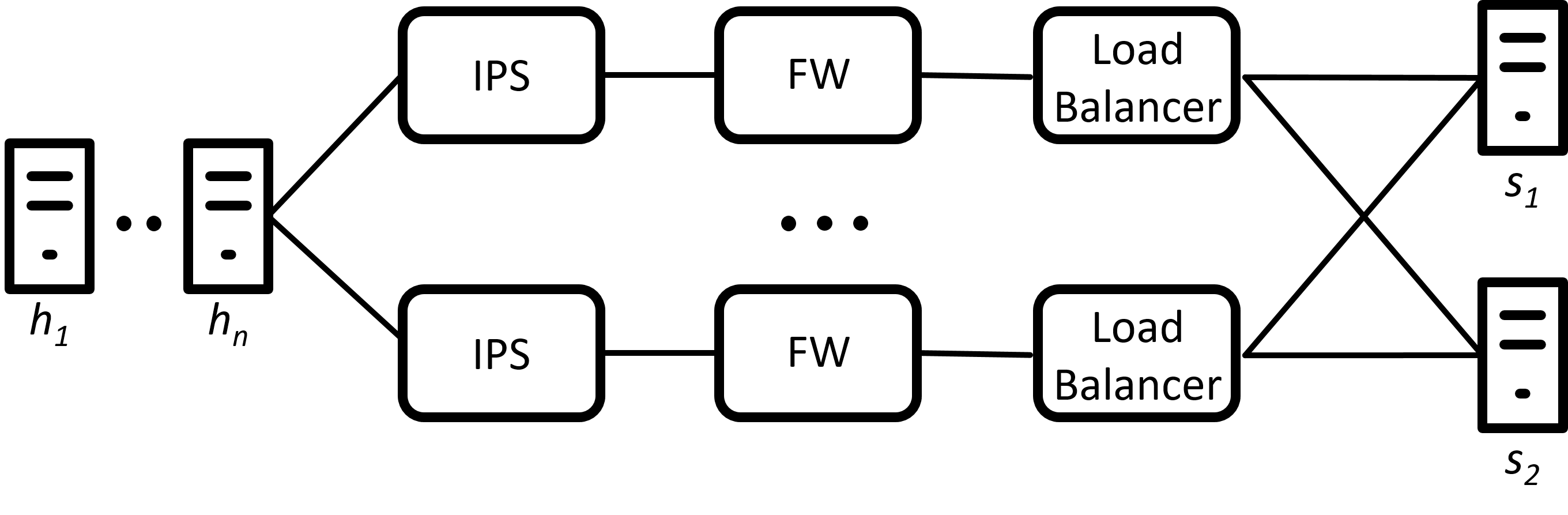}
    \caption{Topology with multiple middlebox-pipelines}
    \label{fig:mbox-scaling-topo}
\end{subfigure}%\\
% \begin{subfigure}[t]{0.45\textwidth}
%     \centering
%       \includegraphics[width=\textwidth]{}
%     \caption{Topology with 4 middlebox-pipelines}
%     \label{fig:mbox-scaling-topo-four-chains}
% \end{subfigure}
\end{tabular}
~\hspace*{.2in}
\begin{subfigure}[t]{0.45\textwidth}
  \centering
  \raisebox{-0.5\height}{\includegraphics[width=\textwidth]{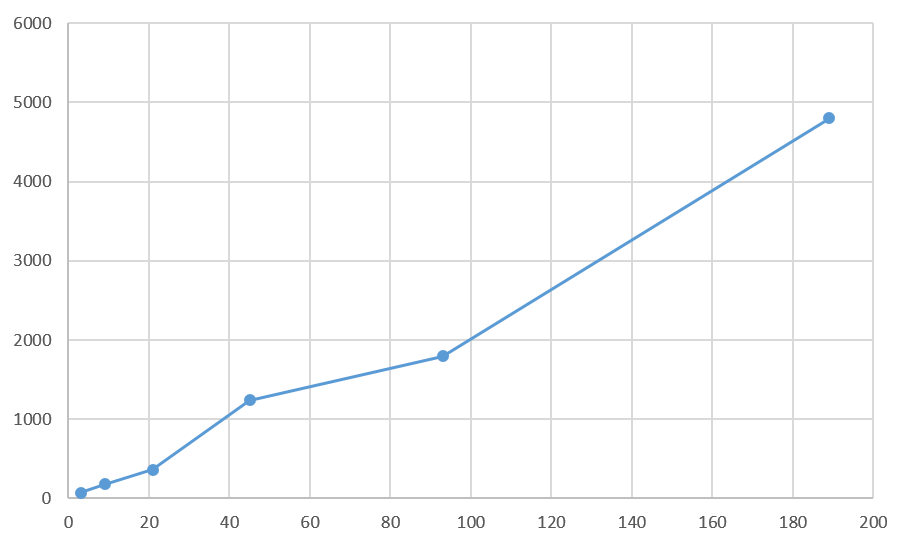}}
  \caption{Running time (seconds).}
  \label{fig:mbox-scaling-test}
\end{subfigure}
\caption{Topology and running times of the network topology scalability test.}% Switches were omitted from the topology for clarity.}
\end{figure}
% \begin{figure}
%   \centering
%   \includegraphics[width=0.45\textwidth]{}
%   \caption{Running times (in seconds) of our implementation as a function of the number of hosts.}
%   \label{fig:combined-scaling}
% \end{figure}

% \begin{figure*}[t]
% \begin{subfigure}[t]{0.23\textwidth}
%     \centering
%     \includegraphics[width=\textwidth]{}
%     \caption{Middlebox Chain}
%     \label{fig:chain-scaling-test}
% \end{subfigure}%
% ~
% \begin{subfigure}[t]{0.23\textwidth}
%     \centering
%     \includegraphics[width=\textwidth]{}
%     \caption{Session Firewalls}
%     \label{fig:running-ex-scaling-test}
% \end{subfigure}%
% ~
% \begin{subfigure}[t]{0.23\textwidth}
%     \centering
%     \includegraphics[width=\textwidth]{}
%     \caption{Datacenter}
%     \label{fig:datacenter-scaling-test}
% \end{subfigure}%
% ~
% \begin{subfigure}[t]{0.23\textwidth}
%     \centering
%     \includegraphics[width=\textwidth]{img/enterprise-scaling-test}
%     \caption{Enterprise}
%     \label{fig:enterprise-scaling-test}
% \end{subfigure}
%     \caption{The running times of our implementation on the example networks. The x-axis is the number of hosts in the network and the y-axis is the running time of the algorithm in seconds.}
% \end{figure*}

\section{Concluding Remarks and Related Work}
In this paper, we applied abstract interpretation for efficient verification of networks with stateful nodes.
We now briefly survey closely related works in this area.

%\para{Cartesian Abstraction}
%\TODO{Roman and Sharon, can you help here?}
\ignore{
\para{Safety checking of general distributed systems} A general line of research looks at
automatically verifying general distributed programs~\cite{DBLP:conf/popl/JhalaM07}.
In this work it is natural to model programs as Petri Nets which can then be combined
with CFA abstractions~\cite{DBLP:conf/sas/DOsualdoKO13}.
%\AMDL\ does not support procedures and higher order programming.
%Also, we forbid nested loops inside middlebox code --- and allow only outer loops that react to packets.
\AMDL\ supports a limited programming model, and
as a result our tool is efficient and can infer invariants in polynomial time.
Investigating how \AMDL\ can be generalized while continuing to provide polynomial time verification is left to future
work.
%In the future it may be useful to study what kind of generalizations to AMDL can be allowed while allowing
%polynomial verification.
}

\para{Topology Independent Network Verification} Early work in network verification focused on proving
correctness of network
protocols~\cite{clarke1998using,ritchey2000using}. Subsequent work
in the context of software define networking (SDN) including
Flowlog~\cite{FlowLog} and VeriCon~\cite{VeriCon14} looked at
verifying the correctness of network applications (implemented as
middleboxes or in network controllers) independent of the topology
and configuration of the network where these were used. However,
since this problem is undecidable, these methods use bounded model
checking or user provided inductive invariants, which are hard to
specify even in simple network topologies.

\para{Verifying Immutable Network Configurations}
Verifying networks with immutable states is an active line of research~\cite{MaiKACGK11,nsdi:KCZCMW13,CCR:KhurshidZCG12,nsdi:CaniniVPKR12,nsdi:KVM12,Verificare,FMACAD:SNM13,anderson2014netkat,NetKat15}.
In the future, we hope to combine our abstraction with the techniques used in these papers.
We hope to use similar techniques to Veriflow~\cite{CCR:KhurshidZCG12} to handle switches
more efficiently, and leverage compact header representation described in NetKat~\cite{NetKat15}.

\para{Stateful Network Verification}
Previous works provide useful tools for detecting errors in firewalls~\cite{mayer2000fang,marmorstein2005tool,nelson2010margrave}.
Buzz~\cite{Fayaz2016BuzzTC} and SymNet~\cite{Stoenescu2016SymNetSS} have looked at how to use symbolic execution and
packet generation for testing and verifying the behavior of stateful networks. These works implement testing techniques
rather than verifying network behavior and are hence complementary to our approach.

Velner et al.~\cite{velner2016some} show that checking safety in stateful
networks is undecidable, necessitating the use of overapproximations. They
provide a general algorithm for checking safety using Petri nets. This algorithm
has high complexity and scales poorly. They also provide an efficient algorithm
for checking safety in a limited class of networks.

% In this paper we presented a general polynomial algorithm for arbitrary
% middleboxes which overapproximates the behavior of the network and show it is
% precise for reverting networks.

% Indeed, as observed in \cite{velner2016some}, the problem of checking safety in
% these networks is undecidable. Therefore, we focus on overapproximations.
% %Notice that in many examples including the ones in \secref{empirical}, we prove safety also when packets are forwarded in loops.

% Velner et al.\cite{velner2016some} provide a general algorithm for checking
% safety using Petri nets. This algorithm has high complexity and scales poorly.
% % \cite{velner2016some} also provides an efficient algorithm for middleboxes
% % like Firewalls with increasing forwarding behaviors. One of the criticisms of
% % their model is that it does not handle resets and that resets may decrease
% % forwarding behavior. Also \cite{DBLP:conf/icalp/HenzingerKKM02} have observed
% % that it is hard to synthesize reverting systems. A surprising observations in
% % our work is that resets simplify the task of verification.
% % % since it is easier to automatically prove correctness
% % % since it suffices to check small number of bad scenarios.
% In this paper we presented a general polynomial algorithm for arbitrary
% middleboxes which overapproximates the behavior of the network and show it is
% precise for reverting networks.

\para{Exploring Network Symmetry}
Recent work explored the use of bisimulation to leverage the extensive symmetry
found in real network topologies~\cite{DBLP:conf/vmcai/NamjoshiT13} to
accelerate stateless~\cite{DBLP:conf/popl/PlotkinBLRV16} and
stateful~\cite{DBLP:conf/nsdi/PandaLASS17} network verification. Both approaches
are not automatic. We are encouraged by the fact that our automatic approach
achieves performance comparable to VMN~\cite{DBLP:conf/nsdi/PandaLASS17} on the
same examples without requiring human intervention. We attribute this
improvement to modularity and to the use of packet state representation.

%apanda: commented out: In the future, we plan to generalize our approach to handle temporal properties
%and correlations between middlebox states. For this purpose, we may want to
%explore symmetries in the network topology.

% Real network topologies contain a lot of
% symmetry~\cite{DBLP:conf/vmcai/NamjoshiT13}. Therefore, it is promising to use
% bisimulation to drastically reduce the cost of network verification. This has
% been recently explored in stateless
% verification~\cite{DBLP:conf/popl/PlotkinBLRV16} and for stateful
% networks~\cite{DBLP:conf/nsdi/PandaLASS17}. Both of these approaches are not
% automatic. We are encouraged by the fact that our automatic approach achieves
% performance comparable to VMN~\cite{DBLP:conf/nsdi/PandaLASS17} on the same
% examples without requiring human intervention. We attribute this improvement
% to modularity and to the use of packet state representation.

% In the future, we plan to generalize our approach to handle temporal
% properties and correlations between middlebox states. For this purpose, we may
% want to explore symmetries in the network topology.

\para{Extensible Semantics}
Previous works have explored ideas similar to the reverting semantics, to obtain
complexity and decidability results in different settings.

In \cite{esparza2013parameterized} the authors analyze the complexity of
verifying asynchronous shared-memory systems. They use \emph{copycat} processes
that mirror the behaviour of another process to show that executions are
extensible, similarly to how our work uses the sticky packet states property
(\Cref{lem:sticky-packet-states}). In their model, when the processes are finite
state machines, they obtain coNP-complete complexity for verification.

In \cite{finkel2001well} the authors explore a more general setting of
well-structured transition system, and present the \emph{home-state idea}, which
allows the system to return to its initial state (essentially, revert). They
obtain decidability results for well-structured transition systems with a
home-state, but do not show any tighter complexity results. 
 \paragraph{Acknowledgments}

    We thank our anonymous shepherd, and anonymous referees for insightful
    comments which improved this paper.
    We thank LogicBlox for providing us with an academic license for their
    software, and Todd J. Green and Martin Bravenboer for providing technical
    support and helping with optimization.
    This publication is part of projects that have received funding from the
    European Research Council (ERC) under the European Union's Seventh Framework
    Program (FP7/2007--2013) / ERC grant agreement no. [321174-VSSC], and Horizon
    2020 research and innovation programme (grant agreement No [759102-SVIS]).
    The research was supported in part by Len Blavatnik and the Blavatnik Family
    foundation, the Blavatnik Interdisciplinary Cyber Research Center, Tel Aviv
    University, and the Pazy Foundation.
    This material is based upon work supported by the United States-Israel
    Binational Science Foundation (BSF) grants No. 2016260 and 2012259.
    %
    % The research leading to these results has received funding from the European
    % Research Council under the European Union's Seventh Framework Programme
    % (FP7/2007-2013) / ERC grant agreement no [321174]. Research supported by the
    % Israel Science Foundation grant no.652/11.
    %
    This research was also supported in part by NSF grants 1704941 and 1420064, and
    funding provided by Intel Corporation.

%% Bibliography
\bibliographystyle{abbrv}
\bibliography{refs}

\iflong
\clearpage
%%% Appendix
\appendix

\section{Proofs}
\label{sec:proofs}

In this section, we include proofs for some of the key claims made in the paper.

\begin{proof}[Proof of \thmref{reverting-unordered-conphard} (Undecidability)]
It is well known that an automaton with an ordered channel of messages (also
known as a \emph{channel machine}) can simulate a Turing machine. The channel
can trivially store the content of a Turing machine tape, and the automaton can
simulate the transitions of the machine. This can be used to easily show that in
the absence of reverting the isolation problem over ordered channels is
undecidable even when there is only one host, and one middlebox with a self
loop.

When reverting is possible, we add auxiliary packet type and middlebox states.
Whenever in initial state, the middlebox sends a special packet over its self
loop, and discards all arrived packets until it receives the special
packet~\footnote{Note that for this step it is crucial that the channels are
FIFO.}. This empties the self loop from its content, which intuitively, resets
the tape of the Turing machine. Hence, when the middlebox reverts, so does the
Turing machine. Thus, the isolation property is violated if and only if the
Turing machine reaches an accepting state, and the undecidability proof follows.
\end{proof}

\begin{proof}[Proof of \thmref{reverting-unordered-conphard} (coNP-hardness)]
%\TODO{move somewhere. Not sure where. But needs to be after we define queries||
%Why? we define queries in the syntax. SH: I prefer to put it after we show the
%complexity of our alg. Then we show that it is tight (it is exponential in the
%number of queries).}
%\label{sec:conphardness}
We prove that if the number of queries in a middlebox is not a constant (i.e.,
it depends on other parameters of the problem), then the safety problem is coNP-
hard even when the network consists of only one middlebox and one host.
%Since isolation is a special case of safety, the same result holds for the
%safety problem.
The proof is by reduction from the Boolean unsatisfiability problem of
propositional formulas.

Given a formula $\phi$ with $n$ variables $x_1,\dots,x_n$ we construct a network
with one host and one middlebox $m$, such that $m$ has only one port, connected
to $h$. The packet types are $x_1,\neg x_1, \dots, x_n, \neg x_n$, i.e., there
are $2n$ packet types, one for each literal. The middlebox has two nullary
relations, $O_i$ and $V_i$, for every $i\in\{1,\dots,n\}$, where intuitively,
$O_i$ indicates whether a packet of type $x_i$ or $\neg x_i$ already occurred
and $V_i$ indicates if the first such packet is positive ($x_i$) or negative
($\neg x_i$). That is, the $O_i$ relations indicate which variables are
assigned, while the $V_i$ relations store the assignment. Initially all the
relations are initialized to \False\ (i.e., no variable is assigned). Upon
receiving a packet of type $x_i$ or $\neg x_i$, the middlebox updates the
relation $V_i$ only if $O_i$ is \False, in which case $O_i$ is also updated to
\True. If the packet type is $x_i$, then $V_i$ is updated to \True. Otherwise it
is updated to \False.
%The middlebox discards all packets.
In addition, whenever the interpretation of $O_i$ and $V_i$ satisfies $\phi$, 
%an abort state is reached.
the middlebox aborts. Clearly, the size of the code of $m$ is polynomial and
safety is violated if and only if $\phi$ is satisfiable. We note that possible
resets do not affect the safety of the network.
\end{proof}

\begin{lemma}[Sticky Packets Property]\label{lem:sticky-packets}
For every channel $e$ and packet $p$: If in some reachable configuration $e$
contains $p$, then every run can be extended such that $e$ will eventually
contain $p$. Moreover, every run can be extended such that $e$ will eventually
contain $n$ copies of $p$ (for every $n>0$).
\end{lemma}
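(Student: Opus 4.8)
The statement I need to prove is the Sticky Packets Property: for every channel $e$ and packet $p$, if $p$ appears on $e$ in some reachable configuration, then every run can be extended so that $e$ contains $p$ again, and in fact so that $e$ contains arbitrarily many copies of $p$. The key observation is that we are working in the \emph{unordered reverting} semantics (this is the semantics for which \lemref{sticky-packet-states} and the completeness theorem are stated), so I have two extra capabilities beyond ordered transitions: reordering transitions on channels, and reverting transitions that reset any middlebox to its initial state. The plan is to exploit reverting to ``replay'' the prefix of the run that originally placed $p$ on $e$, after first resetting the network to a fresh state, and to do this repeatedly to accumulate copies.

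**Main argument.**

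First I would fix a run $\rho_0$ that reaches a configuration in which $e$ contains $p$; such a run exists by the hypothesis that the configuration is reachable. Now let $\rho$ be an \emph{arbitrary} run that we wish to extend. The idea is to append to $\rho$ a sequence of transitions that (i)~reverts every middlebox to its initial state via reverting transitions, and (ii)~drains all channels of their pending packets. Draining is achieved by letting each host consume any packet destined to it (Packet Consumption host transitions) and by re-routing or discarding packets through middleboxes; since we only need to \emph{reach} a configuration where the middlebox states are initial and we can then re-execute $\rho_0$, the essential point is that after reverting we can bring the network into a configuration from which the transitions of $\rho_0$ are enabled. Because $\rho_0$ begins at the initial configuration $(\sigma_I, \lambda e.\ \epsilon)$ and each of its transitions depends only on the current packet on the relevant channel and the current middlebox state, once all middleboxes are at $\sigma_I$ and the channels relevant to $\rho_0$ are empty, the whole sequence of $\rho_0$'s middlebox and host transitions can be replayed step by step. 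Reordering transitions let me ignore interference from leftover packets on channels not touched by $\rho_0$, since I can always move an irrelevant packet out of head position or leave it untouched. Replaying $\rho_0$ then reproduces the configuration in which $e$ contains $p$, giving the first claim.

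For the ``moreover'' part, I would simply iterate. After the first replay of $\rho_0$ places one copy of $p$ on $e$, I revert the middleboxes again and replay $\rho_0$ a second time; crucially, packet production is nondeterministic and unbounded, and channels hold \emph{sequences} (multisets under reordering) of packets, so the copy of $p$ already sitting on $e$ is not consumed by the replay and a fresh copy is appended. Repeating this $n$ times yields $n$ copies of $p$ on $e$ simultaneously. I would make this precise by an induction on $n$: the inductive step extends a run ending with $n$ copies of $p$ on $e$ by one more revert-and-replay block, which adds a single additional copy while preserving the existing ones (no transition in the replay is forced to consume a packet from $e$, since consumption is optional for hosts and the replayed middlebox transitions read only from the channels dictated by $\rho_0$).

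**The main obstacle.**

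The delicate step is the claim that after reverting I can genuinely re-enable $\rho_0$ despite residual packets left on channels by $\rho$ and by previous replay blocks. Reverting transitions reset middlebox \emph{states} but do not clear channels, so I must argue that leftover packets cannot block the replay. This is where the unordered abstraction is essential: reordering transitions let me position $p$'s copies and irrelevant packets so that the specific head-of-channel packets required by each step of $\rho_0$ are available when needed, and the existing copies of $p$ on $e$ are never the unique packet a host is forced to consume. I expect the bulk of the careful bookkeeping to lie in formalizing this non-interference—showing that enabling one transition of the replay never destroys an accumulated copy of $p$—which I would handle by induction on the length of $\rho_0$, tracking the multiset of packets on each channel and invoking reordering transitions as needed to surface the required head packet at each step.
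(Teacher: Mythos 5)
Your proposal is correct and follows essentially the same route as the paper's proof: revert all middleboxes to their initial states, replay the witnessing scenario $s_0$ while using the unordered channels to ignore (rather than consume) any residual packets, and concatenate this revert-and-replay block $n$ times to accumulate $n$ copies. The only superfluous element is your initial suggestion to drain the channels, which is neither needed nor easily justified (middleboxes cannot discard packets without processing them); your own later refinement---that reordering transitions simply surface the required head packet at each step while leftover packets are left untouched---is exactly the paper's argument.
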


\begin{proof}[Proof of \lemref{sticky-packets}]
The proof relies on the reverting property and on the fact that the channels are
unordered.

Let $\sigma_0$ be a reachable configuration in which $p$ occurs in $e$, and let
$s_0$ be the scenario that led to it, i.e., the sequence of events that took
place. Consider an arbitrary run (scenario) $\pi$. One can extend $\pi$ with the
following scenario: First all the middleboxes return to their initial state.
Second, scenario $s_0$ occur, i.e., only packets from scenario $s_0$ are
processed, and the other packets are ignored. This extension is possible because
the channels are unordered.

To construct a scenario in which $e$ contains $n$ copies of $p$, we just
concatenate the above mentioned extension $n$ time.
\end{proof}

\begin{lemma}[Sticky States Property]
\label{lem:sticky-states}
For every channel $e$, packet $p$, middlebox $m$ and state $s$ of $m$: If, in
some reachable configuration, channel $e$ contains $p$ and in some (possibly
other) reachable configuration $m$ is in state $s$, then there exists a
reachable configuration where simultaneously $e$ contains $p$ and $m$ is in
state $s$.
\end{lemma}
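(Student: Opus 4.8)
The plan is to exploit the two features that distinguish the unordered reverting semantics—reordering transitions and reverting transitions—to stitch the two given runs into a single run realizing both conditions at once. Let $s_1$ be a run reaching a configuration $\sigma_1$ in which channel $e$ contains $p$, and let $s_2$ be a run reaching a configuration $\sigma_2$ in which $m$ is in state $s$. I would build the witnessing run in three phases. First, execute $s_1$, so that a copy of $p$ sits on $e$. Second, apply reverting transitions to each middlebox in turn, returning every middlebox (including $m$) to its initial state while leaving all channel contents untouched, since a reverting transition modifies only $\sigma$ and not $\pi$. Third, replay $s_2$ on top of the resulting configuration. After the second phase we are in a configuration whose middlebox component equals $\sigma_I$ and whose channel component equals that of $\sigma_1$; in particular $p$ is still on $e$, possibly alongside other ``leftover'' packets produced by $s_1$.

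The core of the argument is that replaying $s_2$ from this configuration reproduces $m$'s state $s$ while never consuming the leftover packets. I would formalize this by induction over the events of $s_2$, maintaining the invariant that after the $j$-th replayed event: (i) every middlebox is in exactly the state it had after the $j$-th event of the original $s_2$; and (ii) each channel holds the multiset it held after the $j$-th event of $s_2$, augmented by the fixed leftover multiset from $\sigma_1$. Host-production events and reverting events are reproduced verbatim. For a middlebox-read event, the original $s_2$ read some specific packet $p'$ from some channel $c'$; by the invariant $c'$ currently contains $p'$ together with leftovers, so a reordering transition moves $p'$ to the head of $c'$ and the same middlebox transition fires, reading $p'$ and leaving the leftovers in place. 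Because a middlebox transition depends only on the reading middlebox's own local state and on the single packet it reads, the state change and the emitted packets are identical to those in $s_2$, preserving the invariant. At the end of the replay, clause (i) yields that $m$ is in state $s$ and clause (ii) yields that $e$ still contains the leftover copy of $p$; hence the final configuration is reachable and satisfies both requirements.

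The main obstacle is justifying that every read event of $s_2$ can still be reproduced despite the extra leftover packets occupying the channels, and that these leftovers are never forced to be consumed. Both points hinge on unordered channels: reordering transitions let us select exactly the packet $s_2$ prescribes, so leftovers never get in the way, and the locality of middlebox transitions (each reads a single packet and depends only on that middlebox's own state) guarantees that the presence of leftovers cannot alter the replayed behavior. The reverting transitions are what let us discard the middlebox states reached by $s_1$ so that $s_2$ can be replayed from scratch, which is precisely why the claim holds under reverting but would fail in general. A minor edge case is $s = \err$: the replay simply drives $m$ into $\err$ exactly as $s_2$ does, and the resulting error configuration still carries $p$ on $e$, so the statement holds verbatim.
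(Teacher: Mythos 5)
Your proof is correct, but it is organized differently from the paper's. The paper first extracts only the suffix $p_1,\dots,p_\ell$ of packets that $m$ processed since its \emph{last} revert in the run reaching $s$, invokes the sticky packets property (\lemref{sticky-packets}) as a black box to extend the first run so that $p$ is pending on $e$ and $p_1,\dots,p_\ell$ are simultaneously pending on $m$'s ingress channels (with the right multiplicities), then reverts only $m$ and replays just those $\ell$ local processing steps. You instead revert \emph{every} middlebox and replay the \emph{entire} second run globally, maintaining an explicit induction invariant that tracks all middlebox states and all channel multisets augmented by the leftover packets; the reordering transitions let you pick out the prescribed packet at each read despite the leftovers, and the locality of middlebox transitions guarantees the replay is faithful. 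Both arguments rest on the same two mechanisms (reverts to discard accumulated state, unordered channels to ignore interleaving), but the paper's version is more modular and replays less --- only $m$'s post-reset history --- at the cost of depending on \lemref{sticky-packets}, whereas yours is self-contained and makes the global invariant explicit, which is arguably a cleaner foundation if one wanted to formalize the argument. One small omission: you handle host-production and middlebox-read events but not host-consumption events in the replay; the same reordering argument covers them, so this is cosmetic rather than a gap.
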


\begin{proof}[Proof of \lemref{sticky-states}]
Let $(p_1,\dots,p_\ell)$ be the sequence of packets that $m$ processed from the
latest reset until it arrives to state $\sigma_m$ in the given witness scenario.

Consider an arbitrary run. By \lemref{sticky-packets} we can extend this run
such that $p_1,\dots,p_\ell$ are pending packets in the ingress channel of
middlebox $m$ and $p$ is pending in $e$ (if some of the packets occur more than
once in the sequence, then by the same lemma we may assume that there are
multiple copies of those packets).

We further extend the run with a reset event for middlebox $m$. Finally, we
extend the scenario such that in the next $\ell$ steps $m$ will process
$p_1,\dots,p_\ell$ reaching state $\sigma_m$.
\end{proof}

\begin{proof}[Proof of \lemref{sticky-packet-states}]
The proof follows directly from \Cref{lem:sticky-packets,lem:sticky-states,lem:bisimulation}.
\end{proof}

\begin{proof}[Proof of \thmref{completeness-packets}]
In order to prove completeness it is enough to show that every application of
the best abstract transformer results in an abstract value that is less or equal
than the result of applying the abstraction function on the concrete least fixed
point (i.e., the reachable states of the network w.r.t unordered reverting
packet state space semantic). The proof is by induction over $n$, the number of
times we apply the transformer. The proof for $n=0$ is trivial. For $n>1$, let
$p,\tilde{p}$ and $m$ be packets and a middlebox. By the induction hypothesis
for every packet state $v \in \omega_1(m)(\tilde{p})$ there is a concrete
reachable middlebox state such that the state of $m$ over packet $\tilde{p}$ is
$v'$
%packet state $v' \geq v$ \TODO{how is $\geq$ defined?} such that the state of $m$ over packet $\tilde{p}$ is $v'$
and for every packet $p\in \omega_2'(e)$ there is a reachable concrete
configuration where $p$ is in $e$. Hence, by~\lemref{sticky-packet-states},
there exists a concrete reachable configuration in which $p$ is in $e$ and the
state of $m$ over packet $\tilde{p}$ is $v$. Therefore, by definition of
$\omega_1'$ and $\omega_2'$, every new state in $\omega_1'(m)(\tilde{p})
\setminus \omega_1(m)(\tilde{p})$ has a corresponding concrete reachable state,
and likewise for any new pending packet in $\omega_2'(e)\setminus \omega(2)$.
The proof is complete.
\end{proof}

\begin{proof}[proof of \lemref{revert-robust}]
The direction from left to right is trivial, as the reverting semantics is a
sound approximation, hence a computation leading to error when $S$ is non-
reverting also exists when $S$ is reverting. In order to prove the converse
direction we assume that $\err \not \in{\BBsub{\Net}{pa}^{i \setminus S}}$ and
prove that all the computations of ${\BBsub{\Net}{pa}^{i}}$ are safe. The proof
is straightforward. We observe that for every computation $s$ in
${\BBsub{\Net}{pa}^{i}}$ there is a corresponding computation in
${\BBsub{\Net}{pa}^{i \setminus S}}$ which is identical to $s$ other than the
behavior of the safety middleboxes (this is because safety middleboxes do not
affect forwarding of packets). Consider a safety middlebox $m$ and an arbitrary
step $k$ in the computation. Let $p_1,\dots,p_\ell$ be the sequence of packets
that $m$ processed until step $i$ and let $p_r,\dots,p_\ell$ be the packets it
processed since it last reverted. Since $\err \not \in{\BBsub{\Net}{pa}^{i
\setminus S}}$ it follows that in particular the middlebox $m$ is not in $\err$
state. As $m$ is revert-robust and $p_r,\dots,p_\ell$ is a suffix of
$p_1,\dots,p_\ell$, then $m$ is also not in $\err$ state in
${\BBsub{\Net}{pa}^{i}}$ (where it may revert). Thus, we get that for every $s,
k$ and $m$, the middlebox $m$ is not in $\err$ state. Hence, $\err \not \in
{\BBsub{\Net}{pa}^{i}}$ and the proof is completed.
\end{proof}

\section{The Semantics of \AMDL}
\label{sec:amdl-semantics}

\newcommand{\ChannelOutput}{\textit{send}}
\newcommand{\ch}{\textit{ch}}
\newcommand{\ac}{\textit{ac}}
\newcommand{\relderive}{\longrightarrow_{\textsf{R}}}
\newcommand{\releval}[1]{\textsf{R}\lsyn #1 \rsyn}
\newcommand{\packderive}{\longrightarrow_{\textsf{P}}}
\newcommand{\packeval}[1]{\textsf{P}\lsyn #1 \rsyn}

In this section, we define two semantics for middleboxes---the one based on relation
states and the one based packet states. We then prove that both semantics
are bisimilar.

\para{A Note on Field Binding.}
A $\pblock$ construct binds the atoms in a packet received on a channel to field names before executing
a guarded commands.
We will assume that there is at most one $\pblock$ construct per incoming channel.
This assumption does not impose a restriction, since two $\pblock$ constructs
$\ch ~ \textbf{?}~ (f_1,\ldots,f_k) ~ \Rightarrow~ \gc_1$
and
$\ch ~ \textbf{?}~ (g_1,\ldots,g_k) ~ \Rightarrow~ \gc_2$
over the same channel $\ch$ can be
automatically merged into a single $\pblock$ construct via the source-to-source transformation
\[
\ch ~ \textbf{?}~ (f_1,\ldots,f_k) ~ \Rightarrow~ \textbf{if}~ \gc_1 \Box \gc_2[f_1/g_1,\ldots,f_k/g_k]~ \textbf{fi}
\]
where the field names of the second $\pblock$ construct are substituted appropriately for the field names of the first $\pblock$ construct. (Technically, the transformation first extends the sequence of atoms of the $\pblock$
construct with fewer number of atoms by adding dummy atoms.)
This assumption allows us to access the atom $a_i$ of the incoming packet by indexing into the
sequence of fields, as $f_i$.

\subsection{Relation State Semantics}
\label{sec:amdl-semanics-relations}

We start by defining a big-step semantics for relation states.

Let $\mboxe$ be a fixed middlebox.

For simplicity of the presentation, we consider the case where $P \eqdef (H \times H \times T)$ denotes the set of all packets. (The adaptation to other definitions of the packets space is straightforward.)
Let $C_m$ denote the set of channels of $m$. We define the sequence of pairs of packets and channels to be sent following a transition of the middlebox $m$ on every channel as $\Cont \eqdef (P \times C_m)^{*}$.
The semantics of guarded commands, actions, conditions, and atoms is given
in the context of a middlebox state $s \in \Sigma[\mboxe] = \rels(\mboxe) \to \pow{D(m)}$ and a
packet $p$.

We start by defining in \figref{amdl-semantics-1} semantic evaluation functions for atoms and conditions:
\[
\begin{array}{lcl}
\releval{\cdot} : \gV{\atom}      & \rightarrow & P \rightarrow (T \cup H)\\
\releval{\cdot} : \gV{\cond}      & \rightarrow & (\Sigma[\mboxe] \times P) \rightarrow \{\True,\False\}
\end{array}
\]

\begin{figure*}
\[
\begin{array}{|c|}
\hline
\begin{array}{ll}
\releval{f_i} p \eqdef a_i & p=(a_1,\ldots,a_k) \\
\releval{(f_{j_1},\ldots,f_{j_k})} p \eqdef (a_{j_1},\ldots,a_{j_k}) & p=(a_1,\ldots,a_k) \\
\releval{h} p \eqdef h & h \in H \\
\releval{t} p \eqdef t & t \in T \\
\end{array}\\
\hline
\begin{array}{rcl}
\releval{\textbf{true}} (s,p) &\eqdef& \True\\
\releval{\textbf{false}} (s,p) &\eqdef& \False\\
\releval{c_1\;\band\;c_2} (s,p) &\eqdef& \left\{
                                           \begin{array}{ll}
                                             \True, & \releval{c_1}(s,p)=\True \text{ and } \releval{c_2}(s,p) = \True\hbox{;} \\
                                             \False, & \hbox{otherwise.}
                                           \end{array}
                                         \right.
\\
\releval{\bnot~ c} (s,p) &\eqdef& \left\{
                                           \begin{array}{ll}
                                             \False, & \releval{c} (s,p)=\True\hbox{;} \\
                                             \True, & \hbox{otherwise.}
                                           \end{array}
                                         \right.
\\
\releval{a_1 = a_2} (s,p) &\eqdef& \left\{
                                           \begin{array}{ll}
                                             \True, & \releval{a_1} p = \releval{a_2} p\hbox{;} \\
                                             \False, & \hbox{otherwise.}
                                           \end{array}
                                         \right.
\\
\releval{\overline{a}~ \bin ~ r} (s,p) &\eqdef& \left\{
                                           \begin{array}{ll}
                                             \False, & s=\err\hbox{;} \\
                                             \True,  & \releval{\overline{a}} p \in s(r)\hbox{;} \\
                                             \False, & \hbox{otherwise.}
                                           \end{array}
                                         \right.
\\
\end{array}\\
\hline
\end{array}
\]
\caption{\label{fig:amdl-semantics-1}Semantic evaluation of atoms and conditions.}
\end{figure*}

\figref{amdl-semantics-2} defines transition relations for guarded commands, blocks, and middleboxes:
\[
\begin{array}{c}
\releval{\cdot} : \gV{\action} \rightarrow (\Sigma[\mboxe] \times P \times \Cont) \times (\Sigma[\mboxe] \times P \times \Cont)\\
\releval{ \cdot } : \gV{\gc}  \rightarrow (\Sigma[\mboxe] \times P \times \Cont) \times (\Sigma[\mboxe] \times P \times \Cont)\\
\releval{ \cdot } : \gV{\pblock} \rightarrow (\Sigma[\mboxe] \times (P \times C_m)) \times (\Sigma[\mboxe] \times \Cont)\\
\releval{ \cdot } : \gV{\textit{mbox}} \rightarrow (\Sigma[\mboxe] \times (P \times C_m)) \times (\Sigma[\mboxe] \times \Cont) \enspace.\\
\end{array}
\]

A guarded command accepts a middlebox state, an assignment of fields to values, and a mapping from
output channels to their output content (i.e., the sequences of packets that should be delivered to them). It returns the updated state, the (same) assignment of fields to values, and the new mapping from channels to content.

A block accepts a middlebox state and a packet on a specified input channel
and returns the updated state and the output sent to the output channels.
A middlebox non-deterministically chooses between its blocks.

\begin{figure*}
\[
\begin{array}{|c|}
\hline
\\
\begin{array}{lll}
\langle \ch ~ !~ \overline{a}, (s,p,\ChannelOutput)\rangle & \relderive (s,p,\ChannelOutput) & s=\err\\
\langle \ch ~ !~ \overline{a}, (s,p,\ChannelOutput)\rangle & \relderive (s,p,\ChannelOutput \cdot (\releval{\overline{a} } p,\ch)) & s\neq\err\\
\langle r(\overline{a})~ \textbf{:=} ~c, (s,p,\ChannelOutput) \rangle & \relderive (s,p,\ChannelOutput) & s=\err\\
\langle r(\overline{a})~ \textbf{:=} ~c, (s,p,\ChannelOutput) \rangle & \relderive (s[r\mapsto s(r) \cup \{\releval{\overline{a}} p\}],p,\ChannelOutput) & \releval{c} (s,p)=\True\\
\langle r(\overline{a})~ \textbf{:=} ~c, (s,p,\ChannelOutput) \rangle & \relderive (s[r\mapsto s(r) \setminus \{\releval{\overline{a}} p\}],p,\ChannelOutput) & \releval{c} (s,p)=\False\\
\langle \textbf{abort}, (s,p,\ChannelOutput)\rangle & \relderive (\err,p,\ChannelOutput) &\\
\end{array}\\
\\
\inference{\langle ac_1, (s,p,\ChannelOutput)\rangle \relderive (s',p,\ChannelOutput') & \langle ac_2, (s',p,\ChannelOutput')\rangle \relderive (s'',p,\ChannelOutput'')}
{\langle ac_1; ac_2, (s,p,\ChannelOutput)\rangle \relderive (s'',p,\ChannelOutput'')}\\
\\
\hline
\\
\begin{array}{c}
\begin{array}{rcll}
\langle c \Rightarrow \ac, (s,p,\ChannelOutput)\rangle &\relderive& \releval{ \ac} (s,p,\ChannelOutput) & \text{if } \releval{c} (s,p)=\True\\
\langle c \Rightarrow \ac, (s,p,\ChannelOutput)\rangle &\relderive& (s,p,\ChannelOutput) & \text{if } s=\err
\end{array}\\
\\
\inference{\langle g_i, (s,p,\ChannelOutput)\rangle \relderive (s',p,\ChannelOutput')}
{\langle\textbf{if}~ g_1 \Box \ldots \Box g_n~ \textbf{fi}, (s,p,\ChannelOutput)\rangle \relderive (s',p,\ChannelOutput')} \;\;\; i \in \{1,\ldots,n\}\\
\end{array}\\
\\
\hline
\\
\inference{\langle g, (s,p,\emptyset) \rangle \relderive (s',p,\ChannelOutput)}
{\langle \ch ~ \textbf{?}~ (f_1,\ldots,f_k) ~ \Rightarrow~ g, (s, (p, \ch))) \rangle \relderive (s',\ChannelOutput)}
\; p=(a_1,\ldots,a_k)\\
\\
\hline
\\
\inference{\langle p_j, (s, (p, \ch)) \rangle \relderive (s',\ChannelOutput)}{\langle m = \textbf{do}~ p_1 \Box \ldots \Box p_n~ \textbf{od}, (s, (p, \ch))\rangle \relderive (s',\ChannelOutput)}\;j \in \{1,\ldots,n\}\\
\\
\hline
\end{array}
\]
\caption{\label{fig:amdl-semantics-2}Derivation rules for atomic actions, guarded commands, blocks, and middleboxes.}
\end{figure*}

\subsection{Packet State Semantics}
\label{sec:amdl-semanics-packet-space}

The packet state semantics is defined via the evaluation functions
\[
\begin{array}{lcl}
\packeval{\cdot} : \gV{\atom}      & \rightarrow & P \rightarrow (T \cup H)\\
\packeval{\cdot} : \gV{\cond}      & \rightarrow & (\PState[\mboxe] \times P) \rightarrow \{\True,\False\}
\end{array}
\]
and the transition relations
\[
\begin{array}{c}
\packeval{\cdot} : \gV{\action} \rightarrow (\PState[\mboxe] \times P \times \Cont) \times (\PState[\mboxe] \times P \times \Cont)\\
\packeval{ \cdot } : \gV{\gc}  \rightarrow (\PState[\mboxe] \times P \times \Cont) \times (\PState[\mboxe] \times P \times \Cont)\\
\packeval{ \cdot } : \gV{\pblock} \rightarrow (\PState[\mboxe] \times (P \times C_m)) \times (\PState[\mboxe] \times \Cont)\\
\packeval{ \cdot } : \gV{\textit{mbox}} \rightarrow (\PState[\mboxe] \times (P \times C_m)) \times (\PState[\mboxe] \times \Cont) \enspace.\\
\end{array}
\]

%We denote an instance of a query by $(\overline{a}~ \bin ~ r)^q$ where $q$ is a unique
%label for that instance.

\newcommand{\update}{\textit{update}}
We define the helper function
\[
\update : (\PState[m] \times \rels(m) \times \atoms^{*} \times \{\True,\False\}) \rightarrow \PState[m] \enspace,
\]
which updates a given packet state by adding or removing a given tuple from a given relation,
depending on the Boolean value $b$.
\[
\begin{array}{l}
\update(s, r, \overline{a}, b) \eqdef\\
\lambda \tilde{p} \in \packets.\ \lambda q \in Q(m).\
  \begin{cases}
    b, & \text{if $\rel(q)=$ r $\land$}\\
    &\atoms(q)(\tilde{p}) = \overline{a}(p).\\
    \tilde{s}(\tilde{p})(q), & \text{otherwise}.
  \end{cases}
\end{array}
\]

\figref{amdl-semantics-3} shows the evaluation of queries and
the derivation rules for updating relations. The rest of the evaluation functions
and derivation rules have the same shape as those in \figref{amdl-semantics-1} and \figref{amdl-semantics-2},
replacing $\relderive$ with $\packderive$ and $\releval{\cdot}$
with $\packeval{\cdot}$.

\begin{figure*}
\[
\begin{array}{|c|}
\hline
\\
\packeval{\overline{a}~ \bin ~ r} (s,p) \eqdef \left\{
                                                     \begin{array}{ll}
                                                       \False, & s=\err\hbox{;} \\
                                                       s(p)(\overline{a}~ \bin ~ r), & \hbox{otherwise.}
                                                     \end{array}
                                                   \right.\\
\\
\hline
\\
\begin{array}{rll}
\langle r(\overline{a})~ \textbf{:=} ~c, (s,p,\ChannelOutput) \rangle
& \packderive (s,p,\ChannelOutput) & s=\err\\
\langle r(\overline{a})~ \textbf{:=} ~c, (s,p,\ChannelOutput) \rangle
& \packderive (\update(s, r, \overline{a}, b),p,\ChannelOutput) & b=\packeval{c} (s,p)
\end{array}\\
\\
\hline
\end{array}
\]
\caption{\label{fig:amdl-semantics-3}Query evaluation
 and relation update derivation rule for the packet state semantics.}
\end{figure*}

\subsection{Proving \lemref{bisimulation}}
\label{Se:bisimulation_proof}

To prove bisimulation, we use induction on the derivation trees.
Since the shape of all rules, except the ones shown in \figref{amdl-semantics-3},
is exactly the same, we only need to demonstrate bisimilarity for them.

Notice that the semantics is strict in $\err$---the derivation rules for $\err$
propagate $\err$ and query evaluations return \False. We therefore, focus only
on the cases where the states are different from $\err$.

\subsubsection{Bisimilarity of Query Evaluation}
\begin{lemma}
If $\tilde{s} \sim_m s$ and $s\neq\err$ then the following holds:
\[
\packeval{\overline{a}~ \bin ~ r} (\tilde{s},p) = \releval{\overline{a}~ \bin ~ r} (s,p) \enspace.
\]
\end{lemma}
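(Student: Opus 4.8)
The plan is to prove the statement by directly unfolding the two query-evaluation definitions together with the bisimulation map $\ps$; no induction is needed here, since this lemma concerns a single query evaluation (a condition, which is a leaf of the derivation, not a compound action). Throughout I would write $q \eqdef \overline{a} ~\bin~ r$, so that $\rel(q) = r$ and $\atoms(q) = \overline{a}$, and read the hypothesis $\tilde{s} \sim_m s$ as the pair $(s,\tilde{s}) \in\, \sim_m$, i.e. $\ps(s) = \tilde{s}$.

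First I would dispose of the error bookkeeping. Since $s \neq \err$ and $\tilde{s} \sim_m s$, the definition of the bisimulation relation forces $\ps(s) = \tilde{s}$; and because $\ps$ sends non-error relation states to non-error packet-effect states, we also get $\tilde{s} \neq \err$. Hence on both sides the first (error) clause of the respective query-evaluation definition is vacuous, and we land in the second clause of the definitions in \figref{amdl-semantics-1} and \figref{amdl-semantics-3}. Next I would compute each side. On the packet-effect side, \figref{amdl-semantics-3} yields $\packeval{\overline{a} ~\bin~ r}(\tilde{s},p) = \tilde{s}(p)(q)$ directly. Substituting $\tilde{s} = \ps(s)$ and expanding the definition of $\ps$ gives $\tilde{s}(p)(q) = \bigl(\atoms(q)(p) \in s(\rel(q))\bigr) = \bigl(\releval{\overline{a}} p \in s(r)\bigr)$, using the identifications $\atoms(q)(p) = \releval{\overline{a}} p$ and $\rel(q) = r$. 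On the relation-effect side, the non-error clause of \figref{amdl-semantics-1} returns $\True$ precisely when $\releval{\overline{a}} p \in s(r)$ and $\False$ otherwise, which is the same Boolean value. Equating the two expressions completes the argument.

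There is no genuine obstacle in this step; the only point requiring care is the notational identification $\atoms(q)(p) = \releval{\overline{a}} p$ — both denote the tuple obtained by substituting the field values of $p$ into $\overline{a}$ — and confirming that the Boolean membership predicate built into $\ps$ is literally the value returned by the relation-state query. Once these are pinned down, both sides collapse to the single predicate $\atoms(q)(p) \in s(\rel(q))$, and the desired equality is immediate.
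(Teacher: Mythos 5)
Your proof is correct and takes essentially the same route as the paper's: both unfold the bisimulation map $\ps$ and the two query-evaluation definitions so that each side reduces to the membership predicate $\atoms(q)(p) \in s(\rel(q))$. Your explicit handling of the error clauses is a minor addition the paper dispatches once for the whole bisimulation argument, but it changes nothing of substance.
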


\begin{proof}
Recall that $\tilde{s} \sim_m s$ is defined as:
\[
\tilde{s} = \lambda \tilde{p} \in \packets.\ \lambda q \in Q(m). \
\atoms(q)(\tilde{p}) \in s(\rel(q)) \enspace.
\]

Assume $p=(a_1,\ldots,a_k)$ and $\overline{a}=(f_1,\ldots,f_k)$.

Then the following holds:
\[
\begin{array}{l}
\packeval{\overline{a}~ \bin ~ r} (\tilde{s},p) \\
= \tilde{s}(p)(\overline{a}~ \bin ~ r)\\
= (\lambda \tilde{p} \in \packets.\ \lambda q \in Q(m). \
\atoms(q)(\tilde{p}) \in s(\rel(q))) (p)(\overline{a}~ \bin ~ r)\\
= (\lambda q \in Q(m). \ \atoms(q)(p) \in s(\rel(q))) (\overline{a}~ \bin ~ r)\\
= (a_1,\ldots,a_k) \in s(r)\\
= \releval{(f_1,\ldots,f_k)} p \in s(r)\\
= \releval{\overline{a}~ \bin ~ r} (s,p) \enspace.
\end{array}
\]
\end{proof}

\subsubsection{Bisimilarity of Relation Updates}

\begin{figure*}[t]
\[
\begin{array}{l}
\ps(s[r\mapsto s(r) \cup \{\releval{\overline{a}} p\}])\\
= \lambda \tilde{p} \in \packets.\ \lambda q \in Q(m). \
\atoms(q)(\tilde{p}) \in s[r\mapsto s(r) \cup \{\releval{\overline{a}} p\}](\rel(q)) \\
= \lambda \tilde{p} \in \packets.\ \lambda q \in Q(m). \
\atoms(q)(\tilde{p}) \in \left\{
                           \begin{array}{ll}
                             \{ \releval{\overline{a}} p \}, & \rel(q)=r\hbox{;} \\
                             s(\rel(q)), & \hbox{otherwise.}
                           \end{array}
                         \right.\\
= \lambda \tilde{p} \in \packets.\ \lambda q \in Q(m). \
 \left\{
       \begin{array}{ll}
       \atoms(q)(\tilde{p}) \in \{\releval{\overline{a}} p\}, & \rel(q)=r\hbox{;} \\
       \atoms(q)(\tilde{p}) \in s(\rel(q)), & \hbox{otherwise.}
       \end{array}
 \right.\\
= \lambda \tilde{p} \in \packets.\ \lambda q \in Q(m). \
 \left\{
       \begin{array}{ll}
       \atoms(q)(\tilde{p}) = \releval{\overline{a}} p, & \rel(q)=r\hbox{;} \\
       \atoms(q)(\tilde{p}) \in s(\rel(q)), & \hbox{otherwise.}
       \end{array}
 \right.\\
= \lambda \tilde{p} \in \packets.\ \lambda q \in Q(m). \
 \left\{
       \begin{array}{ll}
       \True, & \rel(q)=r \land \atoms(q)(\tilde{p}) = \overline{a}(p)\hbox{;} \\
       \atoms(q)(\tilde{p}) \in s(\rel(q)), & \hbox{otherwise.}
       \end{array}
 \right.\\
\\
= \lambda \tilde{p} \in \packets.\ \lambda q \in Q(m). \
 \left\{
       \begin{array}{ll}
       \True, & \rel(q)=r \land \atoms(q)(\tilde{p}) = \overline{a}(p)\hbox{;} \\
       \tilde{s}(\tilde{p})(q), & \hbox{otherwise.} \;\;\;\;(\text{using }\ref{Eq:tildef})
       \end{array}
 \right.\\
= \update(\tilde{s}, r, \overline{a}, \True)\\
\end{array}
\]
\caption{\label{fig:subproofAdd}Detailed proof steps.}
\end{figure*}

Assume that $\tilde{s} \sim_m s$ and that $s\neq\err$.
By the induction hypothesis, we have that $b=\packeval{c} (\tilde{s},p)=\releval{c} (s,p)$ holds.

Assume that $b=\True$. Therefore, the following derivations apply:
\[
\begin{array}{l}
\langle r(\overline{a})~ \textbf{:=} ~c, (s,p,\ChannelOutput) \rangle \relderive
(s[r\mapsto s(r) \cup \{\releval{\overline{a}} p\}],p,\ChannelOutput) \\
\langle r(\overline{a})~ \textbf{:=} ~c, (\tilde{s},p,\ChannelOutput) \rangle \packderive
(\update(\tilde{s}, r, \overline{a}, \True),p,\ChannelOutput)\enspace.
\end{array}
\]

We will use the following identity, which we obtain from the definition of $\tilde{s}$:
\begin{equation}\label{Eq:tildef}
\begin{array}{l}
\tilde{s}(p)(q)\\
= (\lambda \tilde{p} \in \packets.\ \lambda \tilde{q} \in Q(m). \
\atoms(\tilde{q})(\tilde{p}) \in s(\rel(\tilde{q}))) (p) (q) \\
= \atoms(q)(p) \in s(\rel(q)) \enspace.
\end{array}
\end{equation}

We have to show that the following relation holds in \figref{subproofAdd}:
\[
s[r\mapsto s(r) \cup \{\releval{\overline{a}} p\}] \sim_m \update(\tilde{s}, r, \overline{a}, \True) \enspace.
\]

Assume that $b=\False$. Therefore, the following derivations apply:
\[
\begin{array}{l}
\langle r(\overline{a})~ \textbf{:=} ~c, (s,p,\ChannelOutput) \rangle \relderive
(s[r\mapsto s(r) \setminus \{\releval{\overline{a}} p\}],p,\ChannelOutput) \\
\langle r(\overline{a})~ \textbf{:=} ~c, (\tilde{s},p,\ChannelOutput) \rangle \packderive
(\update(\tilde{s}, r, \overline{a}, \False),p,\ChannelOutput)\enspace.
\end{array}
\]

We show that the following relation holds in \figref{subproofRemove}:
\[
s[r\mapsto s(r) \setminus \{\releval{\overline{a}} p\}] \sim_m \update(\tilde{s}, r, \overline{a}, \False) \enspace.
\]

\begin{figure*}[t]
\[
\begin{array}{l}
\ps(s[r\mapsto s(r) \setminus \{\releval{\overline{a}} p\}])\\
= \lambda \tilde{p} \in \packets.\ \lambda q \in Q(m). \
\atoms(q)(\tilde{p}) \in s[r\mapsto s(r) \setminus \{\releval{\overline{a}} p\}](\rel(q)) \\
= \lambda \tilde{p} \in \packets.\ \lambda q \in Q(m). \
 \left\{
       \begin{array}{ll}
       \atoms(q)(\tilde{p}) \not\in \{\releval{\overline{a}} p\}, & \rel(q)=r\hbox{;} \\
       \atoms(q)(\tilde{p}) \in s(\rel(q)), & \hbox{otherwise.}
       \end{array}
 \right.\\
= \lambda \tilde{p} \in \packets.\ \lambda q \in Q(m). \
 \left\{
       \begin{array}{ll}
       \atoms(q)(\tilde{p}) \neq \releval{\overline{a}} p, & \rel(q)=r\hbox{;} \\
       \atoms(q)(\tilde{p}) \in s(\rel(q)), & \hbox{otherwise.}
       \end{array}
 \right.\\
= \lambda \tilde{p} \in \packets.\ \lambda q \in Q(m). \
 \left\{
       \begin{array}{ll}
       \False, & \rel(q)=r \land \atoms(q)(\tilde{p}) = \overline{a}(p)\hbox{;} \\
       \atoms(q)(\tilde{p}) \in s(\rel(q)), & \hbox{otherwise.}
       \end{array}
 \right.\\
\\
= \lambda \tilde{p} \in \packets.\ \lambda q \in Q(m). \
 \left\{
       \begin{array}{ll}
       \False, & \rel(q)=r \land \atoms(q)(\tilde{p}) = \overline{a}(p)\hbox{;} \\
       \tilde{s}(\tilde{p})(q), & \hbox{otherwise.} \;\;\;\;(\text{using }\ref{Eq:tildef})
       \end{array}
 \right.\\
= \update(\tilde{s}, r, \overline{a}, \False)\\
\end{array}
\]
\caption{\label{fig:subproofRemove}Detailed proof steps.}
\end{figure*} 
\section{Hierarchy of Abstract Domains} % (fold)
\label{sec:figures}

\figref{CombinedAbstractions} provides a high-level view of the different
network semantics.

\begin{figure}[t]
  \centering
  \begin{tabular}{|c|}
  \hline
  Cartesian network domain over Cartesian packet space domain:\\
  $M\rightarrow (P\rightarrow \pow{Q\rightarrow \{\True,\False\}}) \times (E\rightarrow \pow{\Pi})$\\
  \hline\\
  \includegraphics[width=0.99\textwidth]{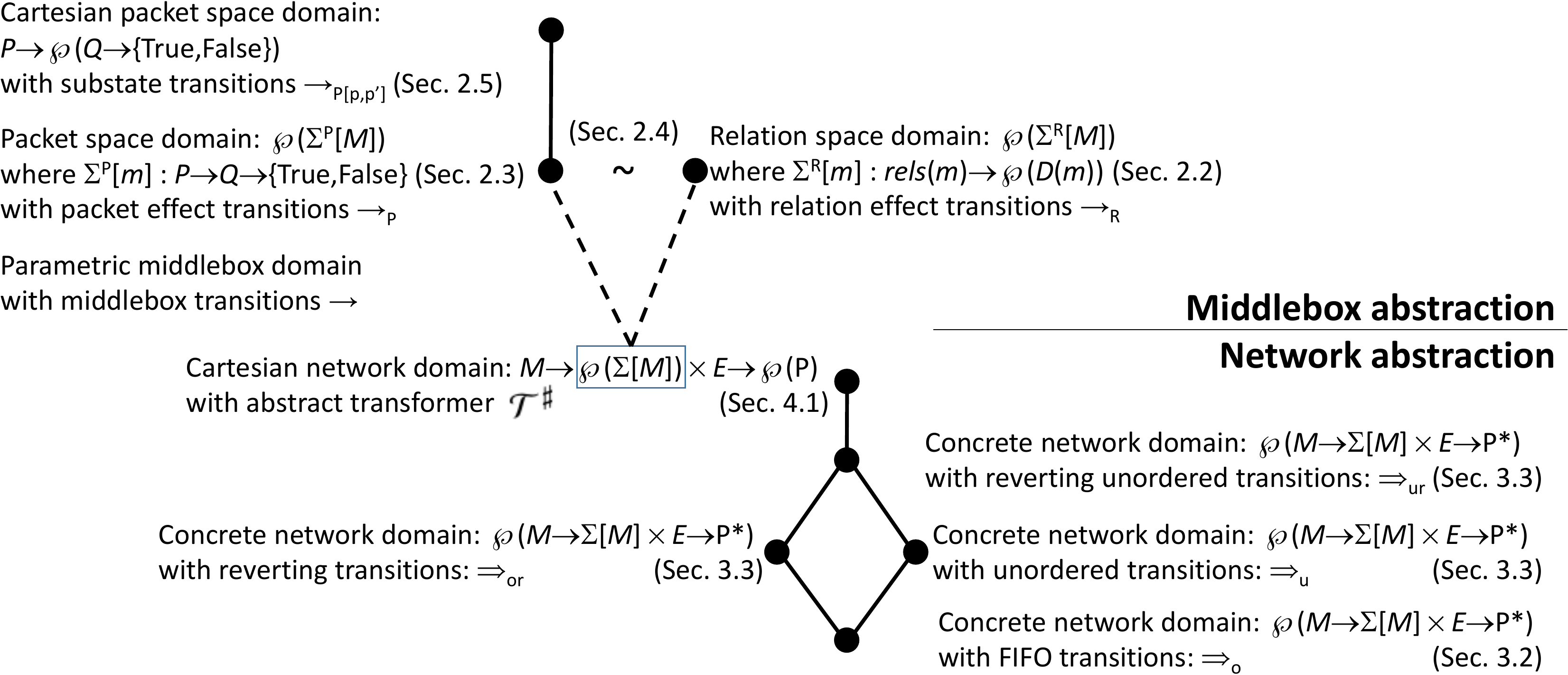}\\
  \hline
  \end{tabular}
  \caption{Hierarchy of abstractions.
  Solid edges stand for abstraction (either by relaxing the
  transition relation or by abstracting the configurations).
  Dashed edges stand for instantiation of the middlebox
  (local) semantics.}
  \label{fig:CombinedAbstractions}
\end{figure}

% section figures (end)

\section{Example} % (fold)
\label{sec:examples}

\begin{figure}[t]
\centering
\begin{subfigure}[b]{0.8\textwidth}
\centering
\includegraphics[width=\textwidth]{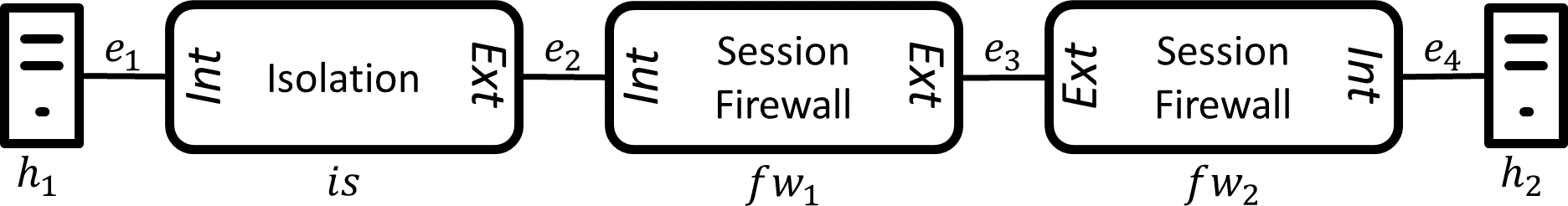}
\caption{A network topology.}

\label{fig:running-ex-topo}
\end{subfigure}
\begin{subfigure}[b]{0.5\textwidth}
\begin{footnotesize}
\begin{alltt}
\begin{tabbing}
is\= = do \+ \\
ex\=ternal_port ? p =>\+\\
if\= \+ \\
p.src = forbidden => abort\\
\(\Box\)\\
true => internal_port ! p \-\\
fi \- \\
\(\Box\)\\
in\=ternal_port ? p => \+ \\
true => external_port ! p \-\- \\
od
\end{tabbing}
\end{alltt}
\end{footnotesize}
% \end{center}
\caption{ \label{fig:safety-code}%
\AMDL\ code for $\is$.}
% The code forwards all packets in one direction and drops packets from
% forbidden hosts and aborts.}
 \end{subfigure}
\caption{Network topology and \AMDL\ code for the running example.}
 \end{figure}

\figref{running-ex-topo} shows a simple network where two stateful
firewalls are connected in a row to prevent traffic  between nodes $h_2$ to
$h_1$. This is an artificial example meant to illustrate the verification
process. More realistic examples are presented in \secref{empirical}. It is
assumed that hosts $h_1$ and $h_2$ can send and receive arbitrary packets on
channels $e_1$ and $e_4$, respectively. The example is implemented using three
middleboxes: two middleboxes, $\fw_1$ and $\fw_2$, running firewalls that
restrict traffic from left to right and from right to left, respectively, and
one middlebox, $\is$, checking whether isolation between $h_2$ and $h_1$ is
preserved. In $\fw_1$, $e_2$ is connected to the ``internal'' port and $e_3$ is
connected to the ``external'' port, thus limiting traffic from right to left. In
$\fw_2$, $e_4$ is connected to the ``internal'' port and $e_3$  is connected to
the ``external'' port, thus limiting traffic from left to right. In $\is$, $e_1$
is connected to the ``internal'' port and $e_2$ is connected to the ``external''
port.

\figref{session-fw-code} describes the code running in either of the session
firewalls, $\fw_1$ and $\fw_2$. We use CSP/OCCAM-like syntax where (messages)
packets are sent/received asynchronously. The middlebox non-deterministically
operates on a packet from the ``internal'' port  or the ``external'' port. When
reading a packet from the ``internal'' port, the program distinguishes between
two cases. In the first case, a session had been previously established, and the
packet is simply forwarded to the ``external'' port.
In the second case the type of the packet is a ``request'' packet (\type=0), and
the program adds the destination host to the set of \verb|requested| hosts and
forwards the ``request''. The \verb|requested| set is used to store the hosts to
which the middlebox sent a ``request'' packet, to avoid the case where a session
is established with a host that the middlebox did not send a ``request'' to.
Packets that do not fall into any of these two cases are discarded with no
further processing.

When the middlebox reads a packet from the ``external'' port again it
distinguishes between two cases --- in one case a session had previously been
established, and is similar to its ``internal'' counterpart. In the second case,
the processed packet is a ``response'' packet (\type=1) from a host that is in
the \verb|requested| set, and the program marks the source of the packet as
\verb|trusted|, thus establishing a session. Other packets are discarded.

A ``data'' packet (\type=2) is implicitly handled by checking whether the
source/destination of the packet is in the \texttt{trusted} set, and if so,
allowing the packet to propagate on.

\figref{safety-code} describes the code running in a special middlebox, $\is$,
which intercepts packets before they arrive to host $h_1$ --- the middlebox non-
deterministically reads a packet from the ``external'' port and aborts if the
source of the packet is the host $\texttt{forbidden} = h_2$, and otherwise
forwards to $h_1$ on the ``internal'' port. On the other direction, it simply
forwards packets from the ``internal'' port to the ``external'' port. In this
example, $\is$ models the safety property.

\begin{table*}[t]
\begin{footnotesize}
\[\arraycolsep=2pt
\begin{array}{|c | c | c | c |  c | c | c | c| c | c | l|}
\hline \mathbf{\overset\rightarrow{e_1}} & \mathbf{\overset\leftarrow{e_1}}
& \mathbf{\overset\rightarrow{e_2}} & \mathbf{\overset\leftarrow{e_2}} &
\mathbf{\fw_1} &  \mathbf{\overset\rightarrow{e_3}} &
\mathbf{\overset\leftarrow{e_3}} & \mathbf{\fw_2}  &
\mathbf{\overset\rightarrow{e_4}} & \mathbf{\overset\leftarrow{e_4}} & \textbf{action} \\
\hline \hline
\begin{array}{l}
p_{(1,2,0)}\\
p_{(1,2,1)}\\
p_{(1,2,2)}\end{array} & \emptyset & \emptyset & \emptyset &
 (\emptyset, \emptyset) &
\emptyset & \emptyset & (\emptyset, \emptyset) & \emptyset
&
\begin{array}{l}
p_{(2,1,0)}\\
p_{(2,1,1)}\\
p_{(2,1,2)}\end{array}
 & \text{initial state} \\
\hline
& & p_{(1,2,0)} & & & & & & & & \textit{is} ~\text{reads}~ p_{(1,2,0)}\\
\hline
& &
\begin{array}{l}
p_{(1,2,0)}\\
p_{(1,2,1)}\end{array}
& & & & & & & & \textit{is} ~\text{reads}~ p_{(1,2,1)}\\
\hline
& &
\begin{array}{l}
p_{(1,2,0)}\\
p_{(1,2,1)}\\
p_{(1,2,2)}\end{array}
& & & & & & & & \textit{is} ~\text{reads}~ p_{(1,2,2)}\\
\hline
&  &  &  & \begin{array}{l}
(\emptyset, \emptyset)\\
(\{h_2\}, \emptyset)
\end{array}
& p_{(1,2,0)} & & & & & \textit{fw}_1~ \text{reads}~ p_{(1,2,0)}\\

\hline

& & & & & & & & & & \textit{fw}_1~ \text{reads}~ p_{(1,2,1)}\\

\hline

& & & & & & & & & & \textit{fw}_1~ \text{reads}~ p_{(1,2,2)}\\

\hline

& & & & & & p_{(2,1,0)} &
\begin{array}{l}
(\emptyset, \emptyset)\\
(\{h_1\}, \emptyset)
\end{array}
& & & \textit{fw}_2~ \text{reads}~ p_{(2,1,0)}\\

\hline

& & & & & & & & & & \textit{fw}_2~ \text{reads}~ p_{(2,1,1)}\\

\hline

& & & & & & & & & & \textit{fw}_2~ \text{reads}~ p_{(2,1,2)}\\

\hline

& & & & & & & & & & \textit{fw}_1~ \text{reads}~ p_{(2,1,0)}\\

\hline

& & & & & & & & & & \textit{fw}_2~ \text{reads}~ p_{(1,2,0)}\\

\hline
\hline
\end{array}
\]
\end{footnotesize}
\caption{\label{tab:CartesianExplicit}%
Modular analysis of the running example with explicit state representation. Only
changed values are shown. The abstract states of channels are sets. The abstract
states of firewalls are sets of pairs for the values of the \texttt{requested}
and \texttt{trusted} sets. Each cell in the table represent a set of the
elements described within, except for empty sets in the initial state. The
notation $p_{(i,j,k)}$ stands for the packet from $h_i$ to $h_j$ with type $k$.
}
\end{table*}

\subsection{Analysis Using Network Level Abstractions}
\tabref{CartesianExplicit} shows the
run of our analysis, when restricted to the network-level abstractions, on the
running example. Each row corresponds to a step in the least fixpoint
computation of the (abstract) reachable network states. Each column at the table
represents the abstract content of a channel (as a set of packets) or the
abstract state of an individual middlebox (as the contents of its set-valued
variables). For each channel $e$, $\overset\rightarrow{e}$ denotes channels
connecting traffic from left to the right, while $\overset\leftarrow{e}$ denotes
channels connecting traffic from right to the left. For example,
$\overset\rightarrow{e_1}$ contains packets sent from $h_1$ to $\is$.

Channel abstract states are sets of packets.

For the firewall middleboxes, a (concrete) state is a pair of values for the
\verb|requested| and \verb|trusted| sets. An abstract state is a set of such
(concrete) states.
The isolation middlebox is stateless.

At the initial configuration, the states of $\fw_1$ and $\fw_2$ are pairs of
empty sets; the states of channels $\overset\rightarrow{e_1}$ and
$\overset\leftarrow{e_4}$ are all the packets that hosts $h_1$ and $h_2$ can
send, respectively.

The analysis ignores the correlations between different columns. At each step,
the analysis chooses an input channel and a middlebox state and computes the
next state. The analysis stops when no more new middlebox states or channel
states are discovered and reports potential violation of the safety property if
the \texttt{abort} command is executed.

In the first action, the code of $\is$ executes and reads $(h_1, h_2, 0)$ from
$\overset\rightarrow{e_1}$. Notice that this does not change the (abstract)
content of this channel. The packet is forwarded to $\overset\rightarrow{e_2}$.
Thus, our analysis only accumulates packets, ignoring their order. The reachable
states of the middleboxes are explicitly maintained. For example, when $\fw_1$
reads $(h_1, h_2, 0)$ from $\overset\rightarrow{e_2}$, it forwards it to
$\overset\rightarrow{e_3}$ and reaches a new state with
$\texttt{requested}=\{h_2\}$ and $\texttt{trusted}=\emptyset$.

Notice that in this example, the analysis proved that the \texttt{abort} command
can ever be executed on arbitrary packet propagation scenarios. Specifically, no
packets ever reaches channel $\overset\leftarrow{e_2}$, so the safety middlebox
$\is$ never reads a packet that will result in the execution of an
\texttt{abort} command. Thus, the analysis succeeded in proving isolation.

This example illustrates that, although our analysis employs Cartesian
abstraction, it is able to prove a network-wide property. Specifically, proving
isolation requires reasoning about the states of both firewalls. We note that
removing either of the firewalls violates the safety property.

\begin{table*}
\begin{footnotesize}
\[\arraycolsep=2pt
\begin{array} {|c | c | c | c | c | c | c | c | c | c | l|}
\hline \mathbf{\overset\rightarrow{c_1}} & \mathbf{\overset\leftarrow{c_1}} &
\mathbf{\overset\rightarrow{c_2}} & \mathbf{\overset\leftarrow{c_2}} &
\fw_1 & \mathbf{\overset\rightarrow{c_3}} &
\mathbf{\overset\leftarrow{c_3}} & \fw_2 &
\mathbf{\overset\rightarrow{c_4}}
 & \mathbf{\overset\leftarrow{c_4}} & \textbf{action} \\

\hline \hline

\begin{array}{l}
p_{(1,2,0)}\\
p_{(1,2,1)}\\
p_{(1,2,2)}\end{array}  & \emptyset & \emptyset & \emptyset &

\begin{array}{l}
p_{(1,2,0)} \mapsto (F,F,F) \\
p_{(1,2,1)} \mapsto (F,F,F) \\
p_{(1,2,2)} \mapsto (F,F,F) \\
p_{(2,1,0)} \mapsto (F,F,F) \\
p_{(2,1,1)} \mapsto (F,F,F) \\
p_{(2,1,2)} \mapsto (F,F,F) \\
\end{array}  & \emptyset & \emptyset &

\begin{array}{l}
p_{(1,2,0)} \mapsto (F,F,F) \\
p_{(1,2,1)} \mapsto (F,F,F) \\
p_{(1,2,2)} \mapsto (F,F,F) \\
p_{(2,1,0)} \mapsto (F,F,F) \\
p_{(2,1,1)} \mapsto (F,F,F) \\
p_{(2,1,2)} \mapsto (F,F,F) \\
\end{array}  & \emptyset &

\begin{array}{l}
p_{(2,1,0)}\\
p_{(2,1,1)}\\
p_{(2,1,2)}\end{array}  & \text{initial state} \\
\hline
& & p_{(1,2,0)} & & & & & & & & \textit{is} ~\text{reads}~ p_{(1,2,0)}\\
\hline
& &

\begin{array}{l}
p_{(1,2,0)}\\
p_{(1,2,1)}\end{array}
& & & & & & & & \textit{is} ~\text{reads}~ p_{(1,2,1)}\\
\hline
& &

\begin{array}{l}
p_{(1,2,0)}\\
p_{(1,2,1)}\\
p_{(1,2,2)}\end{array}
& & & & & & & & \textit{is} ~\text{reads}~ p_{(1,2,2)}\\

\hline

 &  &  &  &

\begin{array}{l}
p_{(1,2,0)} \mapsto (F,F,F) \\
p_{(1,2,1)} \mapsto (F,F,F) \\
p_{(1,2,2)} \mapsto (F,F,F) \\
p_{(2,1,0)} \mapsto (F,F,F) \\
p_{(2,1,1)} \mapsto (F,F,F) \\
p_{(2,1,2)} \mapsto (F,F,F) \\
\hline
p_{(2,1,0)} \mapsto (F,F,T) \\
p_{(2,1,1)} \mapsto (F,F,T) \\
p_{(2,1,2)} \mapsto (F,F,T) \\
\end{array}  & p_{(1,2,0)}  &  &  &  & & \textit{fw}_1~\text{reads}~ p_{(1,2,0)} \\

\hline

& & & & & & & & & & \textit{fw}_1~ \text{reads}~ p_{(1,2,1)}\\

\hline

& & & & & & & & & & \textit{fw}_1~ \text{reads}~ p_{(1,2,2)}\\

\hline

 &  &  & & & &p_{(2,1,0)}  &

\begin{array}{l}
p_{(1,2,0)} \mapsto (F,F,F) \\
p_{(1,2,1)} \mapsto (F,F,F) \\
p_{(1,2,2)} \mapsto (F,F,F) \\
p_{(2,1,0)} \mapsto (F,F,F) \\
p_{(2,1,1)} \mapsto (F,F,F) \\
p_{(2,1,2)} \mapsto (F,F,F) \\
\hline
p_{(1,2,0)} \mapsto (F,F,T) \\
p_{(1,2,1)} \mapsto (F,F,T) \\
p_{(1,2,2)} \mapsto (F,F,T) \\
\end{array}   &  &  & \textit{fw}_2~\text{reads}~ p_{(2,1,0)} \\

\hline

& & & & & & & & & & \textit{fw}_2~ \text{reads}~ p_{(2,1,1)}\\

\hline

& & & & & & & & & & \textit{fw}_2~ \text{reads}~ p_{(2,1,2)}\\

\hline

& & & & & & & & & & \textit{fw}_1~ \text{reads}~ p_{(2,1,0)}\\

\hline

& & & & & & & & & & \textit{fw}_2~ \text{reads}~ p_{(1,2,0)}\\

\hline
\hline

\end{array}
\]
\caption{\label{tab:PacketSpaceEnumeration}%
Packet state enumeration for the running example.
The abstract states of channels are sets of packets.
The abstract states of middleboxes are relations over packets and query valuations; each entry in the table is denoted by $\mapsto$.
Each cell in the table represent a set of the elements described within, except for empty sets in the initial state.
The horizontal lines in the $\fw_1$ and $\fw_2$ columns appear to emphasize the changes.
As before, $p_{(i,j,k)}$ stands for the packet from $h_i$ to $h_j$ with type $k$.}
\end{footnotesize}
\end{table*}

\subsection{Analysis Using Network Level and Middlebox Level Abstractions}
\tabref{PacketSpaceEnumeration} shows the verification process with packet
states in the running example. Instead of storing the contents of relations
\verb|trusted| and \verb|requested| in each middlebox state, we store, for each
packet, whether each of the expressions ``\verb|p.dst in trusted|'',
``\verb|p.src in trusted|'', and ``\verb|p.src in requested|'', evaluates to
\True\ ($T$) or \False\ ($F$), respectively.

Since both relations are empty in the initial state, the packet states for both
firewalls map each packet to $(F,F,F)$.

Recall that when $\fw_1$ reads $(h_1, h_2, 0)$ from $\overset\rightarrow{e_2}$,
it forwards it to $\overset\rightarrow{e_3}$ and reaches a new state with
$\requested=\{h_2\}$ and $\trusted=\emptyset$.
Therefore, any future evaluation of the expression ``\verb|p.src in requested|''
(for any value of \type) should result in \True.
Under the packet state representation, this would result in adding to the
abstract state of $\fw_1$ a packet state similar to that of the initial state
where each of the packets $p_{(2,1,0)}$, $p_{(2,1,1)}$, and $p_{(2,1,2)}$ is re-
mapped from $(F,F,F)$ to $(F,F,T)$.
Our middlebox-level Cartesian abstraction allows us to instead accumulate these
mappings (separated by a horizontal line from the initial mappings) in a single
abstract state, \emph{without affecting the overall precision of the abstract
interpretation}.

A similar change to the packet state of $\fw_2$ occurs upon reading the packet
$(h_2, h_1, 0)$ from $\overset\leftarrow{e_4}$.

% \begin{example}
% Our implementation was able to detect, for example, the safety violation described in \figref{example}.
% The figure shows an example of correct and incorrect middlebox chains.
% The safety property is: low security hosts ($l_1,\ldots,l_m$) do not receive packets from the $S_h$ server.
% \figref{example:bad} contains an incorrect network configuration, where the cache may respond with a stored packet, bypassing the security policy enforced by the firewall.
% %hides the host name from the firewall.
% The firewall here records trusted hosts and permits forwarding packets from trusted hosts.
% Since our implementation is sound, it is guaranteed not to miss any error and in particular spots the error in \figref{example:bad}.
% More interestingly, it guarantees that the network shown in \figref{example:good} is safe (even when unbounded
% number of packets are sent).% and in the presence of
% \end{example}

% \begin{figure}
%     \begin{subfigure}[b]{0.48\textwidth}
%         \centering
%         \includegraphics[width=0.8\textwidth]{}
%         \caption{A middlebox chain with a correct topology.}
%         \label{fig:example:good}
%     \end{subfigure}
% ~
%     \begin{subfigure}[b]{0.48\textwidth}
%         \centering
%         \includegraphics[width=0.8\textwidth]{img/long-chain-bad}
%         \caption{A middlebox chain with a buggy topology.}
%         \label{fig:example:bad}
%     \end{subfigure}

%     \caption{A network topology with a middlebox chain.}
%     \label{fig:example}
% \end{figure}

% section examples (end) 
\section{Networks with unbounded number of hosts}\label{sec:small-model}
In this section, we prove the lack of small model to stateful networks, w.r.t
number of network hosts. This property holds even for reverting networks with
only a single middlebox and packets of the type $(s,d,t)$ where $s$ and $d$ are
hosts i.e., $s,d\in H$, and $t$, the packet type, is taken from a bounded type
set $T$.

\subsubsection*{Small model property.}
For simplicity, we consider only a network with a single middlebox $m$ that
never output packets. The small model property is a bound $b(m)$, such that any
network with the above topology is safe if and only if any network with the
above topology and at most $b(m)$ hosts is safe. And if for certain number of
hosts the network is not safe, we define $b(m) = \infty$.

\begin{theorem}
The function $b(m)$ is not a computable function. In particular, the problem of
deciding whether $b(m) < \infty$ is undecidable.
\end{theorem}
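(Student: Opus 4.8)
The plan is to prove that deciding whether $b(m) < \infty$ is undecidable; non-computability of $b$ then follows immediately, since a procedure computing $b(m)$ could inspect its output and test whether it equals the symbol $\infty$. The starting observation is that, by the very definition of $b$, we have $b(m) = \infty$ exactly when the single-middlebox, never-outputting, reverting network built from $m$ is unsafe for \emph{some} number of hosts, and $b(m) < \infty$ exactly when it is safe for \emph{every} number of hosts. Moreover, for a \emph{fixed} number of hosts $n$ the problem is decidable: $m$ never produces packets, so the only packets ever on channels are drawn from the fixed finite set $\bigcup_h P_h$, and $m$ ranges over a finite state space (relations over $n$ hosts), making reachability of $\err$ a finite-state reachability question. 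Hence ``$\exists n$ such that the network with $n$ hosts is unsafe'' is recursively enumerable, and it suffices to reduce a $\Sigma_1$-complete problem to it.

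First I would reduce the halting problem on the empty input. Given a deterministic Turing machine $U$, I would construct (computably) a middlebox $m_U$ whose relations over the host domain encode a bounded-tape configuration of $U$: a binary relation recording a successor structure on hosts (so that a chain of hosts plays the role of consecutive tape cells), unary relations recording the tape symbol in each cell, a unary relation recording the head cell, and nullary relations recording the control state. Since \AMDL\ permits only membership queries and insertions/removals of tuples built from packet fields or constants, every host identity that $m_U$ needs (the current head cell and the candidate next cell) must be \emph{supplied} by the incoming packet $(s,d,t)$: the host senders ``propose'' the computation one step at a time, and $m_U$ \emph{verifies} each proposed step against its stored configuration and $U$'s transition function before recording the induced update (rewriting the scanned cell, moving the head along the successor relation, updating the control state). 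The middlebox enters $\err$ precisely when $U$'s halting state is reached; a proposed move whose target cell has no recorded successor — i.e.\ running off the end of the chain — is simply discarded rather than treated as an error.

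With $n$ hosts the chain has length at most $n$, so $m_U$ simulates $U$ on a tape of length at most $n$. If $U$ halts on the empty input it does so using some finite space $s_0$; then with $n \ge s_0$ hosts the senders can set up the chain and drive the unique computation to the halting state, reaching $\err$, so the network is unsafe and $b(m_U) = \infty$. Conversely, if $U$ does not halt, then for every $n$ the length-$n$ simulation either loops without halting or gets stuck attempting to move past the last cell; in neither case is the halting state reached, so the network is safe for all $n$ and $b(m_U) < \infty$. This yields ``$U$ halts on the empty input'' $\iff b(m_U) = \infty$, reducing a $\Sigma_1$-complete problem to deciding $b(m) = \infty$, whence deciding its complement $b(m) < \infty$ is undecidable as well.

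The main obstacle is arguing that the simulation remains faithful despite the two relaxations built into the model. For \emph{reverting}: a revert only resets $m_U$ to its empty initial configuration, thereby restarting the simulation, so every reachable configuration is a prefix of some (re)started run of $U$; a revert can neither manufacture a halting run that $U$ lacks nor, by soundness of the reverting relaxation, suppress one that $U$ has, so the reduction is insensitive to it — this is what makes the statement hold ``even for reverting networks''. For \emph{unordered channels}: because $m_U$ checks the \emph{entire} relevant local configuration (head cell, scanned symbol, and control state) before applying any transition, a packet inconsistent with the current configuration induces a no-op, so the stored configuration evolves only along $U$'s deterministic computation and never aborts spuriously; and by the sticky-packet property (\lemref{sticky-packets}) the senders can always keep the ``correct next'' packet available, so reordering neither blocks the halting computation nor creates a false one. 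Establishing these two invariants — faithfulness under reordering and harmlessness of reverts — is the crux; encoding $U$'s finite transition table as guarded commands is then routine given \AMDL's expressiveness.
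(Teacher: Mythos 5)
Your proposal is correct and follows essentially the same route as the paper: a reduction from the halting problem in which the middlebox simulates a Turing machine on a tape whose cells are hosts, with a successor chain assembled from incoming packets, each proposed step verified against the stored configuration (head position, scanned symbol, control state) before being applied, and \textbf{abort} on reaching the halting state; your observations that reverts merely restart the simulation (harmless because the middlebox emits no packets) and that inconsistent packets cannot corrupt the deterministic run under reordering match the paper's treatment. The one detail worth making explicit --- which the paper handles with a dedicated ``already in order'' relation during the successor-construction phase --- is that a proposed successor must be rejected if it already occurs on the chain, since a cyclic or non-injective chain would simulate a circular tape of fewer than $|H|$ cells and could yield spurious halts, breaking the ``unsafe $\Rightarrow$ halts'' direction of the reduction.
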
 We prove the above theorem by a reduction to the halting problem.
We show that giving a Turing machine $M$, we can construct a middlebox $m(M)$
such that $b(m(M)) = \infty$ if and only if $M$ is never halts and is using
unbounded space on its run when then initial input is empty (which is known to
be undecidable).

\subsubsection*{Proof overview}
Given a Turing machine $M$ over alphabet $\sigma$ we construct a network with a
single middlebox $m$ and a host set $H$ and packet space $P=H\times H\times T$
such that $N$ is safe if and only if $M$ does not halts for any run that
requires at least $|H|$ space.

Informally, we construct $m$ such that initially $m$ encodes a successor
relation over $H$, and later it uses the relation to simulate the run of the
Turing machine $m$ for $|H|$ cells in the turing machine tape. If in using at
most $|H|$ space the Turing machine halts, then $m$ goes to an abort state.
Hence, $N$ is safe iff $M$ does not halt using at most $|H|$ space.

\subsubsection*{Detailed proof sketch}
We assume a constant symbol $h_0$ (the first host). For the successor
construction, the middlebox $m$ has the next relations:
\begin{itemize}
\item $R_{\mbox{successor}}(h_1,h_2)$. Intuitively, $R_{successor}(h_1,h_2) =
True$ stands for $h_1 = h_2 + 1$. Initially, the relation returns false to all
pairs.
\item $R_{\mbox{max host}}(h)$. Intuitively, $R_{\mbox{max host}}(h) = True$, if
$h$ was the last host that was assigned as a successor. Initially, only
$R_{\mbox{max host}}(h_0) = True$.
\item $R_{\mbox{already in order}}(h)$. Intuitively, $R_{\mbox{already in
order}}(h) = True$ if $h$ was already assigned as a successor. Initially only
$R_{\mbox{already in order}}(h_0) = True$.
\end{itemize}

In the successor construction phase, $m$ construct an order, given an input
packet $(s,d,t)$ as follows: If $R_{\mbox{max host}}(s)$ is false or
$R_{\mbox{already in order}}(d)$ is true, it goes to a sink state. Otherwise it
set $R_{\mbox{max host}}(s) = False$, $R_{\mbox{already in order}}(d) = True$,
$R_{\mbox{max host}}(d) = True$ and $R_{successor}(s,d) = True$. A special
packet type $t = 1$ indicates that $m$ should leave the successor construction
phase and go to simulation phase.

To describe the simulation phase, we first recall that a Turing machine has a
finite set of states $Q$ and a finite input/output alphabet $\Sigma$. In every
step, the machine reads an input from the head, write a new symbol to head, and
moves the head one step to the right or to the left (w.l.o.g, we assume that
head position is changed in every step). At this phase, hosts represent turing
machine head position. For the Turing machine simulation phase the middlebox has
the next relations:

\begin{itemize}
\item For every $\sigma\in\Sigma$: $R_{\mbox{symbol}_\sigma}(h)$. Intuitively,
it is true if and only if the symbol on the $h-th$ position is $\sigma$.
Initially, it is false for all pairs.
\item $R_{\mbox{expected position}}(h)$. Intuitively, it is true if and only if
the head is expected to be in position $h$. Initially, only $R_{\mbox{expected
position}}(h_0)$ is true.
\item For every $q\in Q$: $R_{\mbox{state}_q}()$ is true iff the machine is at
state $q$. Initially, only $R_{\mbox{state}_{q_0}}()$ is true.
\end{itemize}

In this state, $m$ simulates the machine as follows:
given a packet $(s,d,t)$:
\begin{itemize}
\item Check head position: If $R_{\mbox{expected position}}(s) = False$ go to
sink state.
\item Query head symbol: go over all $R_{\mbox{symbol}_\sigma}(s)$ and extract
current head symbol $\sigma$ (if it is false for all symbols, then the cell is
empty, i.e., $\sigma = \epsilon$).
\item Query current state: go over all $R_{\mbox{state}_q}()$ and extract
current state $q$.
\item Update head symbol and current state: set $R_{\mbox{symbol}_\sigma}(s) =
False$ and $R_{\mbox{symbol}_\sigma'}(s) = True$ where $\sigma'$ is the output
symbol (according to the turing machine). Similarly update the current state
relation.
\item Update expected head position: If at state $q$ and input $\sigma$ the head
moves left, then if $d \neq s-1$ (according to the successor relation) then go
to sink state. Otherwise set $R_{\mbox{expected position}}(s) = False$, and
$R_{\mbox{expected position}}(d) = True$. If the head moves right, check if $d =
s+1$ and act in the same way.
\item if $q$ is a final state, then abort.
\end{itemize}

\begin{lemma}
The network is safe if and only if $M$ does not halt using at most $|H|$ space.
\end{lemma}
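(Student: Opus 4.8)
The plan is to prove both directions by establishing state invariants that characterize the reachable middlebox configurations in each of the two phases, and then relating runs of the network to runs of $M$. Throughout, I use the fact that $m$ never emits packets, so a run is unsafe precisely when the sequence of packets sent by the host drives $m$ into $\err$ via the \textbf{abort} command.

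First I would formalize the effect of the \emph{successor construction phase}. I claim that the reachable non-sink states of $m$ in this phase are exactly those in which $R_{\text{successor}}$ encodes a simple chain $h_0, h_1, \ldots, h_k$ of \emph{distinct} hosts starting at the constant $h_0$, with $R_{\text{already in order}}$ true precisely on $\{h_0,\ldots,h_k\}$ and $R_{\text{max host}}$ true only at $h_k$. This follows by induction on the number of processed packets: the guards ($R_{\text{max host}}(s)$ must hold and $R_{\text{already in order}}(d)$ must fail) guarantee that each accepted packet $(s,d,t)$ appends a fresh host $d$ as the successor of the current maximum $s = h_k$, while any packet violating the guards sends $m$ to a harmless sink state. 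Since there are $|H|$ hosts, the chain length is at most $|H|$.

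Second I would analyze the \emph{simulation phase}, entered via the type-$1$ packet, which freezes the chain and initializes $R_{\text{state}_{q_0}}$, $R_{\text{expected position}}(h_0)$, and all $R_{\text{symbol}_\sigma}$ to the blank configuration. The core lemma is a simulation invariant: after processing any sequence of simulation packets from this initial configuration without reaching the sink, the relations $R_{\text{symbol}_\sigma}$, $R_{\text{expected position}}$ and $R_{\text{state}_q}$ encode a configuration $C$ of $M$ whose tape is supported on the chain cells $h_0,\ldots,h_k$, and $C$ is reachable by $M$ from its initial configuration within space $k+1 \le |H|$. I would prove this by induction on the number of simulation steps, checking that each accepted packet $(s,d,t)$ whose source $s$ is the current head cell performs exactly one transition of $M$: it reads the symbol at $s$ from the $R_{\text{symbol}_\sigma}$ relations, looks up the control state, writes the new symbol, updates the state, and moves the head to $d$ only when $d = s \pm 1$ agrees with $R_{\text{successor}}$; any attempt to move the head off the chain fails the head-position guard and sends $m$ to the sink. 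Crucially, $m$ executes \textbf{abort} exactly when the simulated state is a halting state of $M$.

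Given these two invariants, both directions follow. For the ``only if'' direction, suppose the network is unsafe, i.e. some run reaches $\err$. Since $m$ never emits packets and reverts only return it to the initial state, I may restrict attention to the suffix of the run after the last revert; along this suffix $m$ starts in its initial state, so the abort must arise from a successor-construction phase building a chain of length at most $|H|$ followed by a faithful simulation, which by the invariant reaches a halting state of $M$ using space at most $|H|$. For the ``if'' direction, suppose $M$ halts in space $s \le |H|$; I would exhibit a witnessing run in which the host first sends $s-1$ packets realizing the chain $h_0,\ldots,h_{s-1}$, then the type-$1$ packet, and then packets $(s_i,d_i,t_i)$ tracking the head positions of $M$'s halting computation, which by the simulation invariant drives $m$ into a halting state and triggers \textbf{abort}. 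The main obstacle is the faithfulness argument of the simulation phase: one must verify that the head-movement guard against $R_{\text{successor}}$ breaks the simulation off into the \emph{safe} sink exactly when $M$ would need more than $|H|$ cells, so that abort corresponds precisely to halting \emph{within} $|H|$ space and never to a spurious overflow. Reverting adds no difficulty, since a revert merely restarts the construction from scratch and yields no abort-reaching run not already captured by the suffix argument above.
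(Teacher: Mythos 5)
Your proposal is correct and follows essentially the same approach as the paper's (much terser) proof: the forward direction exhibits a packet sequence that builds the successor chain and faithfully replays the halting computation to reach \textbf{abort}, and the converse direction argues that every non-sink run is a faithful bounded-space simulation, so no abort can occur if $M$ does not halt within $|H|$ cells. Your added phase invariants and the explicit handling of reverts (which the paper relegates to a separate observation, justified by the fact that $m$ emits no packets) merely make explicit what the paper leaves implicit.
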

\begin{proof}
If $M$ halts using at most $|H|$ space, then a sequence of packets which
construct the order and simulate the run without going to a sink state leads to
an abort state. If $M$ does not halt with at most $|H|$ space, then any sequence
of packets must end in a sink state.
\end{proof}

\subsubsection*{Additional observations}
\begin{itemize}
    \item The program is only using the inputs $s,d$ and $t$ and a single
    constant $h_0$. In the construction it is enough to have $t\in\{0,1\}$.
    \item Same proof holds for reverting middlebox. Indeed, whenever the
    middlebox reverts, the state of the turing machine and the relation order
    are reset, and the run starts from scratch. This is thanks to the fact that
    $m$ does not output any packets.
\end{itemize}

% \end{document}

%%\pagebreak
%%\input{absint}
\fi

\end{document}